\def\UrlSpecials{\do\~{\kern -.15em\lower .7ex\hbox{~}\kern .04em}} \catcode`~=13 
\newcommand{\ba}{\mathbf{a}}
\newcommand{\bA}{\mathbf{A}}
\newcommand{\bB}{\mathbf{B}}
\newcommand{\bc}{\mathbf{c}}
\newcommand{\bC}{\mathbf{C}}
\newcommand{\bI}{\mathbf{I}}
\newcommand{\bk}{\mathbf{k}}
\newcommand{\bK}{\mathbf{K}}
\newcommand{\br}{\mathbf{r}}
\newcommand{\bR}{\mathbf{R}}
\newcommand{\bv}{\mathbf{v}}
\newcommand{\bw}{\mathbf{w}}
\newcommand{\bx}{\mathbf{x}}
\newcommand{\by}{\mathbf{y}}
\DeclareMathAlphabet{\mathbsf}{OT1}{cmss}{bx}{n}
\DeclareMathAlphabet{\mathssf}{OT1}{cmss}{m}{sl}
\DeclareSymbolFont{bsfletters}{OT1}{cmss}{bx}{n}  
\DeclareSymbolFont{ssfletters}{OT1}{cmss}{m}{n}
\DeclareMathSymbol{\bsfGamma}{0}{bsfletters}{'000}
\DeclareMathSymbol{\ssfGamma}{0}{ssfletters}{'000}
\DeclareMathSymbol{\bsfDelta}{0}{bsfletters}{'001}
\DeclareMathSymbol{\ssfDelta}{0}{ssfletters}{'001}
\DeclareMathSymbol{\bsfTheta}{0}{bsfletters}{'002}
\DeclareMathSymbol{\ssfTheta}{0}{ssfletters}{'002}
\DeclareMathSymbol{\bsfLambda}{0}{bsfletters}{'003}
\DeclareMathSymbol{\ssfLambda}{0}{ssfletters}{'003}
\DeclareMathSymbol{\bsfXi}{0}{bsfletters}{'004}
\DeclareMathSymbol{\ssfXi}{0}{ssfletters}{'004}
\DeclareMathSymbol{\bsfPi}{0}{bsfletters}{'005}
\DeclareMathSymbol{\ssfPi}{0}{ssfletters}{'005}
\DeclareMathSymbol{\bsfSigma}{0}{bsfletters}{'006}
\DeclareMathSymbol{\ssfSigma}{0}{ssfletters}{'006}
\DeclareMathSymbol{\bsfUpsilon}{0}{bsfletters}{'007}
\DeclareMathSymbol{\ssfUpsilon}{0}{ssfletters}{'007}
\DeclareMathSymbol{\bsfPhi}{0}{bsfletters}{'010}
\DeclareMathSymbol{\ssfPhi}{0}{ssfletters}{'010}
\DeclareMathSymbol{\bsfPsi}{0}{bsfletters}{'011}
\DeclareMathSymbol{\ssfPsi}{0}{ssfletters}{'011}
\DeclareMathSymbol{\bsfOmega}{0}{bsfletters}{'012}
\DeclareMathSymbol{\ssfOmega}{0}{ssfletters}{'012}
\newcommand{\eqdef}{\stackrel{\Delta}{=}}
\DeclareMathOperator*{\argmax}{arg\,max}
\newtheorem{theorem}{Theorem} 
\newtheorem{lemma}[theorem]{Lemma}
\newtheorem{corollary}[theorem]{Corollary}
\newtheorem{definition}{Definition} 
\newtheorem{example}{Example}
\newtheorem{data model}{Data Model}
\newcommand{\qednew}{\nobreak \ifvmode \relax \else
      \ifdim\lastskip<1.5em \hskip-\lastskip
      \hskip1.5em plus0em minus0.5em \fi \nobreak
      \vrule height0.75em width0.5em depth0.25em\fi}
\DeclareMathOperator{\lcm}{lcm}
\DeclareMathOperator{\opt}{opt}
\begin{document}
\title{Support Recovery of Periodic Mixtures 
with Nested Periodic Dictionaries}
\author{Pouria~Saidi and~George~K. Atia
\thanks{The work was supported in part by NSF CAREER Award CCF-1552497 and NSF Award CCF-2106339.
\\
Pouria Saidi is with the School of Electrical, Computer and Energy Engineering, Arizona State University, Tempe, AZ, 85281 USA.
George K. Atia is with the Department of Electrical and Computer Engineering and the Department of Computer Science, University of Central Florida, Orlando, FL, 32816 USA. (e-mail: psaidi@asu.edu; george.atia@ucf.edu). This work is done while Pouria Saidi was at University of Central Florida.}}

\maketitle
\begin{abstract}
Periodic signals composed of periodic mixtures admit sparse representations in nested periodic dictionaries (NPDs). Therefore, their underlying hidden periods can be estimated by recovering the exact support of said representations. In this paper, support recovery guarantees of such signals are derived both in noise-free and noisy settings.   
While exact recovery conditions have been studied in the theory of compressive sensing, existing conditions fall short of yielding meaningful achievability regions in the context of periodic signals with sparse representations in NPDs, in part since existing bounds do not capture structures intrinsic to these dictionaries. 
We leverage known properties of NPDs to derive several conditions for exact sparse recovery of periodic mixtures in the noise-free setting. These conditions rest on newly introduced notions of nested periodic coherence and restricted coherence, which can be efficiently computed. In the presence of noise, we obtain improved conditions for recovering the exact support set of the sparse representation of the periodic mixture via orthogonal matching pursuit based on the introduced notions of coherence. 
The theoretical findings are corroborated using numerical experiments for different families of NPDs. Our results show significant improvement over generic recovery bounds as the conditions hold over a larger range of sparsity levels.  
\end{abstract}
\begin{IEEEkeywords}
Nested periodic dictionary, Nested periodic coherence, Ramanujan subspaces, Support recovery conditions
\end{IEEEkeywords}
\section{Introduction} \label{sec:intro}
\IEEEPARstart{E}stimating the underlying periods of signals that exhibit periodic patterns is a fundamental task 
in many real-world applications across diverse disciplines, such as astronomy \cite{hewish201374}, healthcare \cite{li2007robust},\cite{da2017matched} and neuro-rehabilitation \cite{lin2006frequency},\cite{saidi2017detection}.
A discrete periodic signal can be expressed as
\begin{equation}
    \label{eq:discrete_period}
    \begin{aligned}
    y\left(n\right) = y\left(n + p\right) \quad \forall n\in \mathbb{Z}
    \end{aligned},
\end{equation}
where $p$ is the smallest integer satisfying \eqref{eq:discrete_period}.
The periodic signal $y\left(n\right)$ could be a periodic mixture that consists of multiple hidden periods. In \cite{tenneti2018minimum}, the authors defined two criteria to characterize the hidden periods. First, the signal can be expressed as a combination of periodic signals $y_i, i = 1\ldots, m$,
\begin{equation}
\label{eq:mixture_definition}
    y\left(n\right) = y_1\left(n\right) + y_2\left(n\right)+\ldots y_m\left(n\right)
\end{equation}
with distinct periods in $\mathbb{P}_h = \{p_1,p_2,\ldots p_m\}$, where these periodic signals with period $p_i$ cannot be further decomposed into sums of periodic signals with periods smaller than the $p_i$'s. Second, the periods $p_1,p_2,\ldots,p_m$ are not divisors of one another, denoted $p_i\not\vert p_j$ for $i,j \in \{1,2,\ldots,m\}$.
It has been shown that the period of $y\left(n\right)$ with hidden periods $\mathbb{P}_h$ is $p = \lcm\left(\mathbb{P}_h\right)$, i.e., the least common multiplier (lcm) of the periods in the set $\mathbb{P}_h$ \cite[Corollary 2]{tenneti2018minimum}. 

The problem of period estimation has been widely studied and many techniques have been proposed to estimate the underlying periods of periodic signals \cite{wise1976maximum},\cite{christensen2007joint},\cite{sethares1999periodicity}. Conventional methods, such as successive differencing, fail to identify the hidden periods in $y\left(n\right)$ when it is composed of a mixture of periodic signals. Also, spectral-based methods such as DFT could work well if the datalength $L$ is a multiple of the period but yield inaccurate estimates of the fundamental frequencies for arbitrary $L$ \cite{tenneti2015nested}. Estimating the hidden periods in the presence of noise is an even more challenging problem. For a comprehensive discussion of the limitations of such techniques, we refer the reader to \cite{tenneti2015nested}.

To overcome these limitations, the authors in \cite{tenneti2015nested} introduced a family of matrices called nested periodic dictionaries (NPDs) to model periodic mixtures, and have shown that periodic signals often admit sparse representations in NPDs. In \cite{tenneti2015nested,tenneti2016unified}, it was shown that one can estimate the hidden periods of a periodic mixture by estimating its sparse representation in an NPD, then the recovered support set of the sparse vector (i.e., the locations of its nonzero entries) can be used to reveal the hidden periods.

NPDs have already found numerous applications in various domains. In \cite{saidi2019detection}, the authors developed a framework for the detection of the so-called steady-state visually evoked potentials (SSVEPs) using the Ramanujan periodicity transform (RPT) dictionary, an instance of NPDs. SSVEPs are periodic electric potentials induced on the brain cortex in response to external visual stimuli, and can be used to develop brain computer interfaces. In  \cite{tenneti2016detecting}, NPDs were used to detect tandem repeats in DNA, and  in \cite{mathur2021ramanujan}, a method based on Ramanujan subspaces is proposed for classification of epileptic EEG signals. 

There is evidence that one can use sparse recovery frameworks to reconstruct the support set of the sparse representations of periodic mixtures in NPDs given that the datalength is sufficiently large, and then estimate the hidden periods based on the recovered support set. This has been shown through numerical simulations in \cite{vaidyanathan2014farey},\cite{tenneti2015nested}. Nevertheless, there are no theoretical guarantees for recovering the exact support set of a sparse representation of a periodic mixture. If the support set of the sparse vector cannot be precisely recovered, it cannot be utilized for identifying the underlying periods. For instance, Fig. \ref{subfig:bp_ex1} shows the recovered coefficients of the sparse representations of a periodic signal with period $20$, $L=200$, and their agreement with those obtained by an oracle where the support of the sparse vector is known a priori, and Fig.~\ref{subfig:bp_ex2} shows that the exact support set of the sparse vector is not recovered successfully when $L = 80$, and as a result identifying the true period of the signal using NPDs fails. This underscores the need to establish conditions that guarantee exact support recovery of periodic mixtures with NPDs. Despite the ability of sparse recovery algorithms to recover said sparse representations in certain regimes, existing sparse recovery guarantees, which are largely based on random dictionaries, fail to explain the observed performance with NPDs (which are deterministic and structured matrices). Thus, these generic guarantees are of limited use in this context,  in part since existing bounds do not capture structures intrinsic to NPDs.

\begin{figure}
    \centering
    \begin{subfigure}[b]{0.46\textwidth}
    \centering
        \includegraphics[width = \textwidth,height = 6cm]{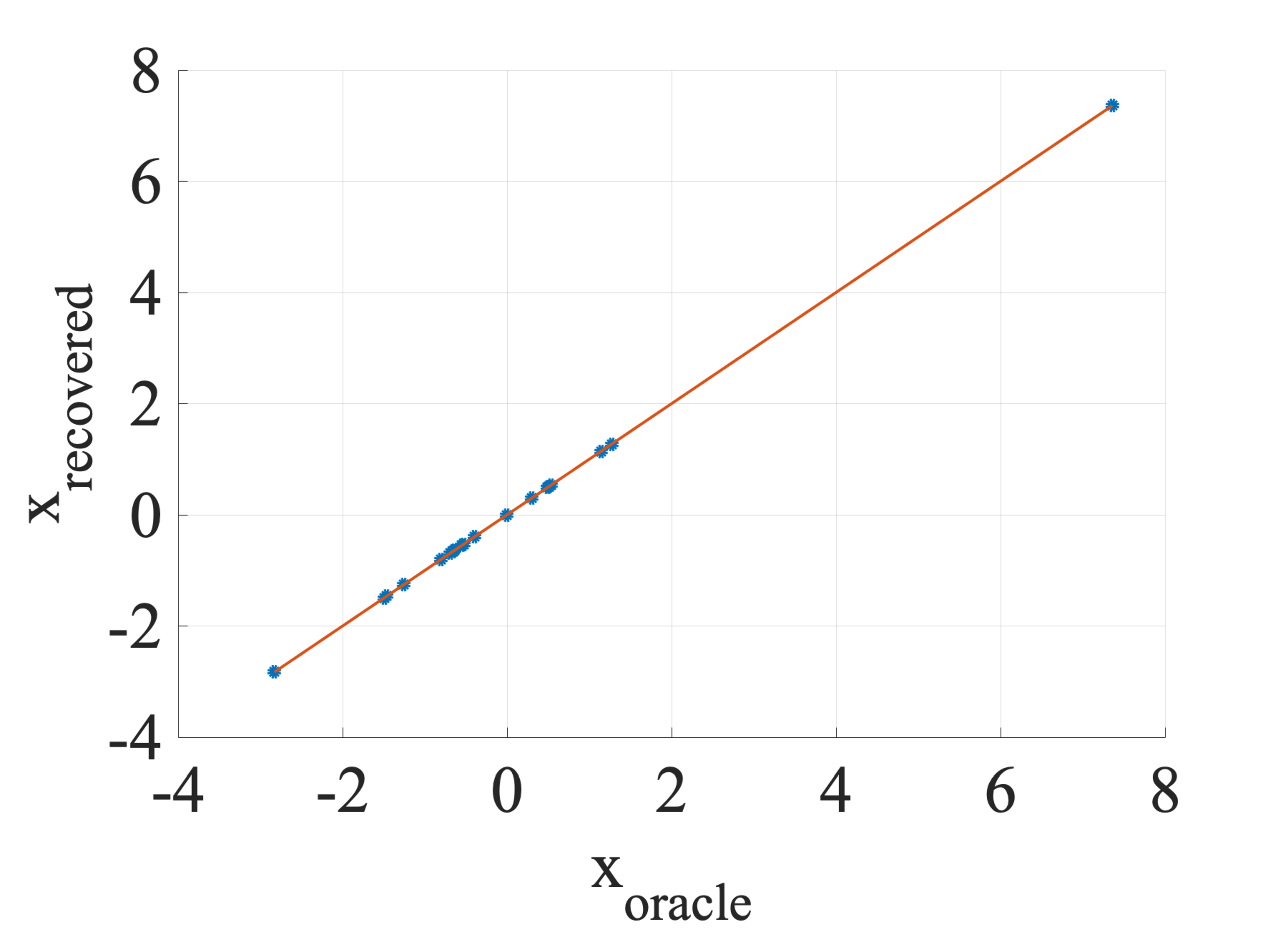}
        \caption{}
        \label{subfig:bp_ex1}
    \end{subfigure}
    \hfill
    \begin{subfigure}[b]{0.46\textwidth}
    \centering
        \includegraphics[width = \textwidth,height = 6cm]{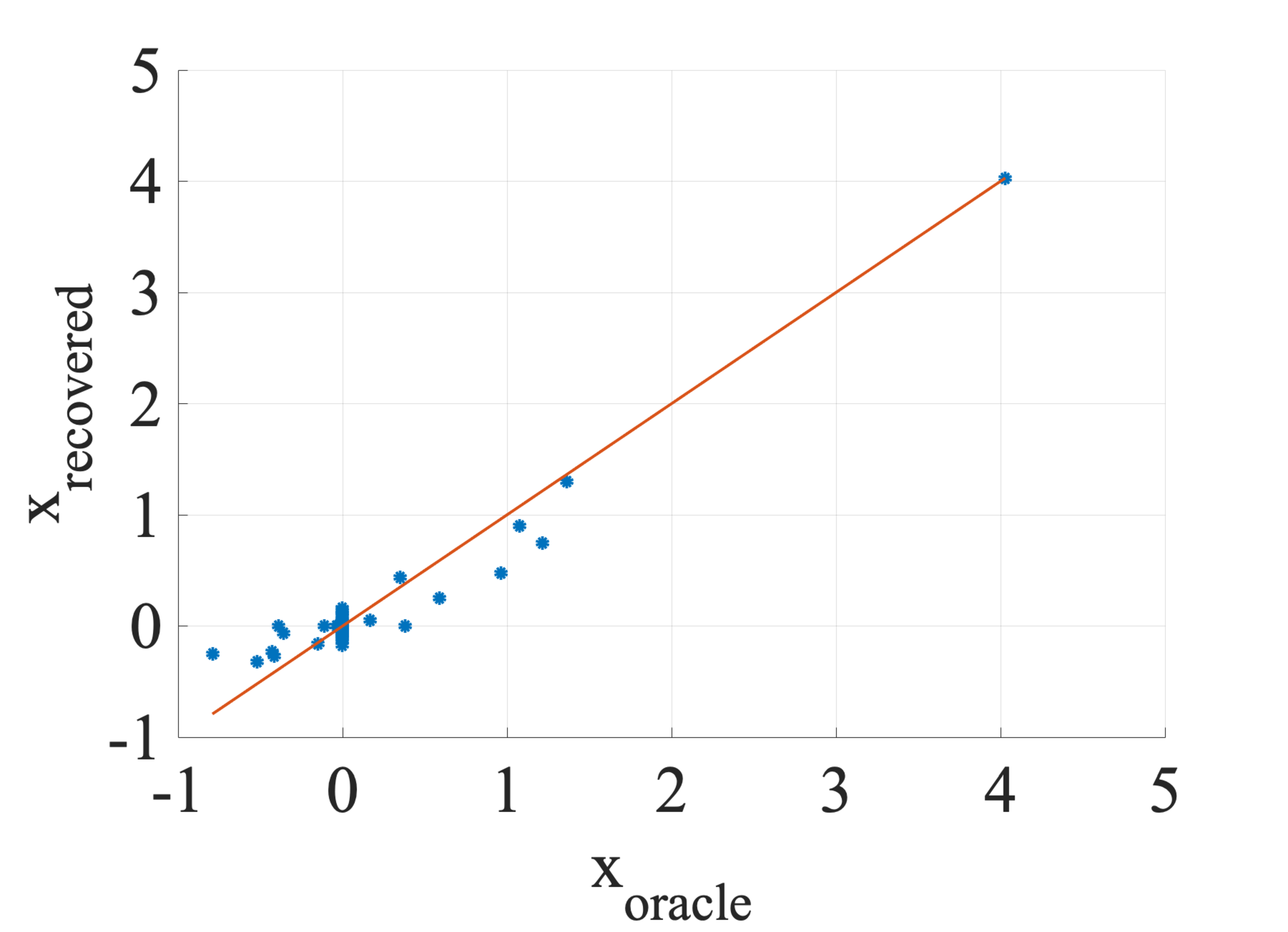}
    \caption{}
    \label{subfig:bp_ex2}
    \end{subfigure}
    \caption{Sparse recovery of periodic signals in NPDs using basis pursuit (a sparse recovery framework) and its agreement with true sparse representation of the signal (a) L = 200 and (b) L = 80.}
    \label{fig:bp_examples}
\end{figure}
In a preliminary version of this work \cite{saidi2021sparse}, we established a new and improved condition for exact sparse recovery with NPDs in the noise-free setting, limited to the special case where $m = 1$, i.e., when the periodic signal consists of one single period. In this work, we extend our study along several fronts. First, we establish exact recovery conditions for periodic mixtures consisting of multiple hidden periods. The conditions rest on newly introduced notions of nested periodic coherence and restricted coherence that leverage structural properties of NPDs.  
Second, we establish new bounds for exact support recovery in presence of noise. 
The following summarizes the main contributions of this paper.
\begin{itemize}
    \item We introduce the new notion of nested periodic coherence, which we leverage to derive a tight upper bound on an exact recovery-type condition for periodic mixtures with sparse representations in an NPD. 
    \item We define the notion of restricted coherence and derive an alternative bound that is computationally more efficient and not restricted to a given sparsity level. 
    \item 
    We derive improved conditions to exactly recover the support set of a periodic mixture in presence of noise for both bounded and Gaussian noise. 
    \item We demonstrate our results for different families of NPDs, including the Farey and RPT dictionaries.
    \end{itemize}

\subsection{Notation and paper organization} We use lowercase letters for scalars, bold lowercase letters for vectors and bold uppercase letters for matrices. For a vector $\bv$ of size $l$ and entries $v_i$, $\|\bv\|_p:=\left(\sum_{i=1}^l |v_i|^p\right)^{1/p}$ denotes its $\ell_p$-norm for $p\geq 1$, $\|\bv\|_0$ is its $\ell_0$-norm which is the number of its nonzero entries, \textcolor{black}{and $\|\bv\|_{\infty} := \max_{i} |v_i|$ denotes its $\ell_{\infty}$-norm.}
For an $L\times N$ matrix $\bA$ with entries $a_{i,j}$, the $\ell_1$-induced norm is $\|\bA\|_{1,1} := \max_{1\leq j\leq N} \sum_{i=1}^L |a_{i,j}|$, $\bA^H$ is its conjugate-transpose, $\bA^{\dagger}=\left(\bA^H\bA\right)^{-1}\bA^H$ is its pseudo-inverse, and $\ba_i$ is its $i$-th column. For a value  $p$, we use $q|p$ to indicate that $q$ is a divisor of $p$. The Euler totient function $\phi\left(p\right)$ of $p$ is the number of positive integers smaller than $p$ that are co-prime to $p$. Set $\mathbb{N} = \{1,\ldots,N\}$ contains all the values between $1$ to $N$. Given a set $S$, the set $S^c$ denotes its complement and $|S|$ its cardinality and $2^S$ denotes its power set that is the collection of all subsets of $S$. We use $\bigoplus_{i = 1}^N \mathcal{G}_i$ to show the direct sum of the subspaces $\mathcal{G}_1,\mathcal{G}_2,\ldots,\mathcal{G}_N$ that is the span of union of subspaces. The matrix $\bK_{S}$ is the matrix $\bK$ restricted to the atoms indexed by set $S$. We use the notation $\bx \sim \mathcal{N}(\mathbf{0}, \sigma^2 \bI_L)$ to show that an $L \times 1$ random vector $\bx$ has a Gaussian distribution, where the mean vector is $\mathbf{0}$ and covariance matrix is $\sigma^2\bI_L$. We denote the minimum eigenvalue of a square matrix $\bA$ with $\lambda_{\min}$.

In Section \ref{sec:background}, we provide a brief overview of relevant sparse recovery techniques and existing guarantees, and present basic background on Ramanujan subspaces and the construction of NPDs. In Section \ref{sec:contribution}, we introduce the new notions of nested periodic coherence and restricted coherence and present the main theorems. We present our numerical results in Section \ref{sec:result}. In Section \ref{sec:discussion}, we present a discussion on the main results and conclude the paper.

\section{Background and preliminaries}
\label{sec:background}
\subsection{Sparse recovery conditions} \label{subsec:sparse_recovery}
Given an underdetermined system of linear equations $\by = \bK\bx$,  one can recover the vector $\bx$ exactly from the measurements $\by$ provided that $\bx$ is sparse, that is, if only a few of its entries are nonzero \cite{donoho2003optimally}. 
A $k$-sparse vector is one with at most $k$ nonzero values. Henceforth, we assume that the columns of the matrix $\bK$ are of unit norm, i.e., $\|\bk_i\|_2 = 1, \forall i\in \mathbb{N}$. The sparse vector $\bx$ can be recovered by minimizing its $\ell_0$-norm subject to the data constraint as, 
\begin{equation}
    \label{eq:l_0_norm}
    \begin{aligned}
        \min_{\bx} \|\bx\|_0 \quad \text{subject to} \quad \by = \bK\bx\:.
    \end{aligned}
\end{equation}
Problem \eqref{eq:l_0_norm} is NP-hard as it requires combinatorial search over all support sets to identify the most sparse solution 
\cite{Baraniuk2007SPM}. 
Therefore, alternative recovery methods were developed, which can be broadly categorized into two main categories: convex relaxation-based and greedy methods. An instance of the former is the popular basis pursuit (BP) in \eqref{eq:l_1_norm}, which uses the $\ell_1$-norm as a surrogate function for the $\ell_0$-norm,  
\begin{equation}
    \label{eq:l_1_norm}
    \begin{aligned}
        \min_{\bx} \|\bx\|_1 \quad \text{subject to} \quad \by = \bK\bx\:.
    \end{aligned}
\end{equation}
A notable instance of the latter is the OMP algorithm which iteratively approximates the signal by sparse orthogonal projections onto the subspace spanned by selected atoms of the dictionary \cite{pati1993orthogonal,davis1994adaptive,tropp2004greed}. Other examples are the ROMP \cite{needell2009uniform} and iterative hard thresholding (IHT) \cite{blumensath2009iterative} algorithms.
Several works established the equivalence of \eqref{eq:l_1_norm} and \eqref{eq:l_0_norm} in that they provably yield the same solution under certain conditions 
\cite{Baraniuk2007SPM},\cite{tropp2004greed},\cite{donoho2006compressed},\cite{candes2005restrictedIP},\cite{candes2008restricted},\cite{cohen2009compressed}.
Among such conditions is the restricted isometry property (RIP) introduced in \cite{candes2005restrictedIP} in which a dictionary $\bK$ satisfies 
\begin{equation}
    \label{eq:rip}
    \begin{aligned}
        |1-\delta_k|\|\bx\|_2^2\leq \|\bK\bx\|_2^2 \leq |1+\delta_k| \|\bx\|_2^2 \:
    \end{aligned}
\end{equation}
for all $k$-sparse vectors $\bx$, where $\delta_k$ is known as the isometry constant. 
If the RIP condition \eqref{eq:rip} holds for $\delta_{2k}<\sqrt{2}-1$, then $\bx$ can be recovered exactly using the BP program (\ref{eq:l_1_norm}) \cite{candes2008restricted}. 
Tropp developed an alternative sufficient condition, dubbed exact recovery condition (ERC), which guarantees exact recovery of the sparse vector $\bx$ using BP and OMP 
\cite{tropp2004greed}. 

As many of the aforementioned conditions 
are hard to verify in polynomial time, 
\cite{tropp2004greed} derived alternative conditions based on the notions of coherence \cite{Fuchs2004} and cumulative coherence defined in \eqref{eq:standard_coherence} and \eqref{eq:cumulative_coherence} respectively, as
\begin{equation}
    \label{eq:standard_coherence}
    \begin{aligned}
    \mu = \sup_{1\leq i<j\leq N} |\langle \bk_i,\bk_j\rangle|\:,
    \end{aligned}
\end{equation}
\begin{equation}
    \label{eq:cumulative_coherence}
    \begin{aligned}
    \mu_1\left(k\right) \eqdef \max_{\Lambda_k} \max_{i \notin 
    \Lambda_k} \sum_{j \in \Lambda_k} |\langle \bk_i,\bk_j\rangle|\:,
    \end{aligned}
\end{equation}
where $\langle.,.\rangle$ denotes the inner product and $\Lambda_k$ is a subset of atoms of size $k$. 
The author established that a sufficient condition to recover $\bx$ with sparsity level $k$ is
\begin{equation}
    \label{eq:coherence_guarantee}
    \begin{aligned}
        k < \frac{1}{2}\left(\mu^{-1}+1\right)\:.
    \end{aligned}
\end{equation}
Also, an improved condition to recover $\bx$ exactly is 
\begin{equation}
    \label{eq:cumulative_coherence_guarantee}
    \begin{aligned}
        \mu_1\left(k\right) + \mu_1\left(k-1\right) < 1
    \end{aligned}.
\end{equation}
In the presence of noise, where exact recovery is unrealizable, the literature abounds with \emph{exact support recovery} guarantees, in which it is sought to identify the exact locations of the nonzero entries of the sparse vector \cite{tropp2006just, cai2011orthogonal}.

Existing guarantees are of limited use in the context of periodic signals with sparse representations in NPDs (see Section \ref{sec:result}). This calls for alternative conditions that account for structures intrinsic to such dictionaries, which is the primary goal of this work.

\subsection{Nested periodic matrices}
\label{subsec:NPM}
In this section, we briefly review some of the important properties of nested periodic matrices (NPMs) -- the building blocks of NPDs -- which underpin period estimation via support recovery. 

A $p \times p$ NPM has the following structure \cite{tenneti2015nested}
\begin{equation}
    \label{eq:npm}
    \bA_{p} = 
    \begin{bmatrix}
    \bC_{q_1}&\bC_{q_2} \ldots \bC_{q_k}
    \end{bmatrix},
\end{equation}
where $q_i$'s are the divisors of $p$ in an increasing order, and $\bC_{q_i}$ is a $p\times \phi\left(q_i\right)$ matrix whose columns are periodic with period $q_i$, noting that $\sum_{q_i|p}\phi\left(q_i\right) = p$ \cite{hardy2008introduction}. 
The matrix $\bA_p$ has full rank. From \cite[Lemma 1]{tenneti2015nested}, any periodic $p\times 1$ vector $\by$ with period $q|p$ can be spanned by $q$ columns of $\bA_p$ that have period $q$ or a divisor of $q$. Also, from \cite[Lemma 3]{tenneti2015nested}, any linear combination $\by$ of the columns of $\bA_p$ is periodic with period exactly equal to $\lcm\{n_i\}$, where $\{n_i\}$ is the set consisting of the periods of the columns that contribute to $\by$, which is known as the LCM property. 
There are numerous NPM constructions. Next, we describe a few relevant instances, then describe how they can be used to construct NPDs.
\subsubsection{Ramanujan periodicity transform (RPT)}
The authors in \cite{tenneti2015nested} introduced the RPT matrix, which is constructed from the Ramanujan sums \cite{ramanujan1918sum},
\begin{equation}
\label{eq:ram_sum}
\begin{aligned}
c_q(n) = \sum_{\substack{k=1\\(k,q)=1}}^qe^{j2\pi kn/q}\:,
\end{aligned}
\end{equation}
where $\left(k,q\right)$ is the greatest common divisor (gcd) of $k$ and $q$. In \eqref{eq:ram_sum}, $c_q\left(n\right)$ is an all-integer sequence and periodic with period $q$. The following are few examples demonstrating one cycle of such sequences $c_q(n)$,
\begin{equation*}
\label{eq:ram_examples}
    \begin{aligned}
    c_1(n) =& \{1\},~
    c_2(n) =\{1,-1\}, ~
    c_3(n) =\{2,-1,-1\},\\
    c_4(n) =&\{2,0,-2,0\},~ 
    c_5(n) =\{4,-1,-1,-1,-1\}.
    \end{aligned}
\end{equation*} 


For any divisor $q$ of $p$, a $p \times \phi\left(q\right)$ matrix $\bC_{q}$ can be constructed as,
\begin{equation}
\label{eq:C_q}
\begin{aligned}
\bC_{q}=&
\begin{bmatrix}
\bc_{q} & \bc_{q}^{(1)} & \dots & \bc_{q}^{(\phi(q)-1)}
\end{bmatrix},
\end{aligned}
\end{equation}
where $\bc_q$ is the $p \times 1$ vectorized sequence $c_q\left(n\right)$,
\begin{equation}
 \bc_q =
 \begin{bmatrix}
 c_q\left(0\right)&c_q\left(1\right)&\ldots&c_q\left(p-1\right)
 \end{bmatrix},  
\end{equation}
and $\bc_q^{(i)}$ is the circularly downshifted version of $\bc_q$ with order $i$. For example,
\begin{equation}
 \bc_q^{\left(1\right)} = 
 \begin{bmatrix}
 c_q\left(p-1\right)&c_q\left(0\right)&\ldots&c_q\left(p-2\right)
 \end{bmatrix}.   
\end{equation}
The $p \times p$ NPM can be constructed by concatenating the submatrices $\bC_{q_i}$'s, for all $q_i|p$ as in \eqref{eq:npm}. 
Note that all the submatrices $\bC_{q_i}$ are periodic and of size $p \times \phi\left(q_i\right)$.
It was shown in \cite{vaidyanathan2014ramanujanI} that all $\phi\left(q_i\right)$ columns of the submatrices $\bC_{q_i}$ 
are linearly independent. The resulting $p \times p$ matrix $\bA_p$ is known as the RPT matrix \cite{tenneti2015nested}. As an example, 
\begin{equation}
\label{eq:npm_4}
    \begin{aligned}
\bA_4 = 
\begin{bmatrix}
    1& 1& 2&0\\
    1& -1&0&2\\
    1& 1&-2&0\\
    1& -1&0&-2
\end{bmatrix} \:.
    \end{aligned}
\end{equation}
\subsubsection{Farey dictionary}
\label{subsec:DFT}
Another example of NPMs is constructed from the discrete Fourier transform (DFT). In particular, for every $q_i|p$, we can form the $p \times \phi\left(q_i\right)$ submatrix $\bC_{q_i}$ using the signals 
\begin{equation}
\label{eq:DFT}
       b_{q_i}\left(n\right) = 
        W_{q_i}^{-kn}, \quad  \left(k,q_i\right) = 1,\quad  k = 1,\ldots,q_i,
\end{equation}
where $W_{q_i} = e^{-j2\pi/q_i}$ is the complex exponential. Note that there are exactly $\phi\left(q_i\right)$ values for $k$ that obey the condition $\left(k,q_i\right) = 1$. Therefore, there are $\phi\left(q_i\right)$ periodic signals with period $q_i$, which together form the submatrix $\bC_{q_i}$.

\subsection{Nested periodic dictionaries (NPDs)}
\label{subsec:NPD}
A method to construct an NPD that contains exactly $\phi\left(p\right)$ atoms spanning the corresponding nested periodic subspace is described in \cite{tenneti2015nested}. The main idea is to first build the $p \times p$ NPMs, for every $p \in \mathbb{P} := \{1,\ldots,P_{\max}\}$, where $P_{\max}$ is the largest possible period, and keep the last $\phi\left(p\right)$ columns yielding submatrices $\bC_p$'s, for all $p \in \mathbb{P}$. Then, the NPD $\bK$ is constructed by concatenating matrices $\bR_p$ obtained by periodically extending the matrices $\bC_p$ to length $L$ as
\begin{equation}
    \label{eq:NPD}
    \bK = \begin{bmatrix}
    \bR_1&\bR_2&\ldots&\bR_{P_{\max}}
    \end{bmatrix}\:.
\end{equation}

\subsection{Ramanujan subspaces}\label{subsec:periodic_subspaces}
Let $\mathcal{V}_p$ denote the periodic subspace that contains all signals with period $p$. Periodically extending the columns of $\bA_p$ yields a basis that spans $\mathcal{V}_p$. In addition, a subset of the columns spans $\mathcal{V}_{q_i}$, for all $q_i|p$. The subspace $\mathcal{V}_p$ is the direct sum of the subspaces spanned by the columns of submatrices $\bC_{q_i}$, where $q_i|p$. The Ramanujan subspace $\mathcal{S}_q$ is the column space of the periodically extended version of the matrix $\bC_q$ in \eqref{eq:C_q} \cite{vaidyanathan2014ramanujanI}. Further, it was shown that 
$\bigoplus_{q_i|p} \mathcal{S}_{q_i} = \mathcal{V}_p$.

Two compelling features of Ramanujan subspaces are their (i) Euler structure: the dimension of $\mathcal{S}_q$ is $\phi\left(q\right)$ \cite{tenneti2015nested,vaidyanathan2014ramanujanI,tenneti2016unified}; 
    (ii) LCM property: the period of mixtures in the Ramanujan subspaces of the form \eqref{eq:mixture_definition} is exactly equal to the LCM of its constituent hidden periods and not a divisor of the LCM \cite{tenneti2016unified}. 

Ramanujan subspaces are special \emph{orthogonal} nested periodic subspaces (NPSs) (because of the orthogonality of the subspaces $\mathcal{S}_{q_i}$). An NPS is a broader term used in \cite{tenneti2016unified} to refer to the span of the $\bC_{q}$ matrices of the NPMs, and such subspaces are generally non-orthogonal.

\subsection{Period estimation via support recovery} \label{subsec:sup_recovery_period_estimation}
Using sparse recovery methods like OMP and BP, it has been empirically demonstrated that the support set of a sparse representation of a periodic mixture with hidden periods $p_i \in \mathbb{P} := \{1,\ldots,P_{\max}\}$ can be recovered from an NPD of size $L \times N$. This allows for estimation of the hidden periods using the LCM property. In general, the recovered support set is similar to the case where $L$ is infinitely large, leading to successful period estimation. However, this is not always the case, as illustrated in Fig. \ref{fig:bp_examples}. For instance, the sparse representation of a periodic signal with period 20 and $L=200$ reveals the period as the LCM of $1, 2, 4, 5, 10, 20$ (see Fig. \ref{subfig:coef_200}). In contrast, the true period cannot be estimated from the sparse representation of a periodic signal with period 20 and $L=80$ (see Fig. \ref{subfig:coef_80}). It is important to note that there is currently no theoretical condition to guarantee exact recovery of the support set of the sparse vector and successful period estimation using the LCM property.
\begin{figure}
    \centering
    \begin{subfigure}[b]{0.46\textwidth}
    \centering
        \includegraphics[width = \textwidth,height = 6cm]{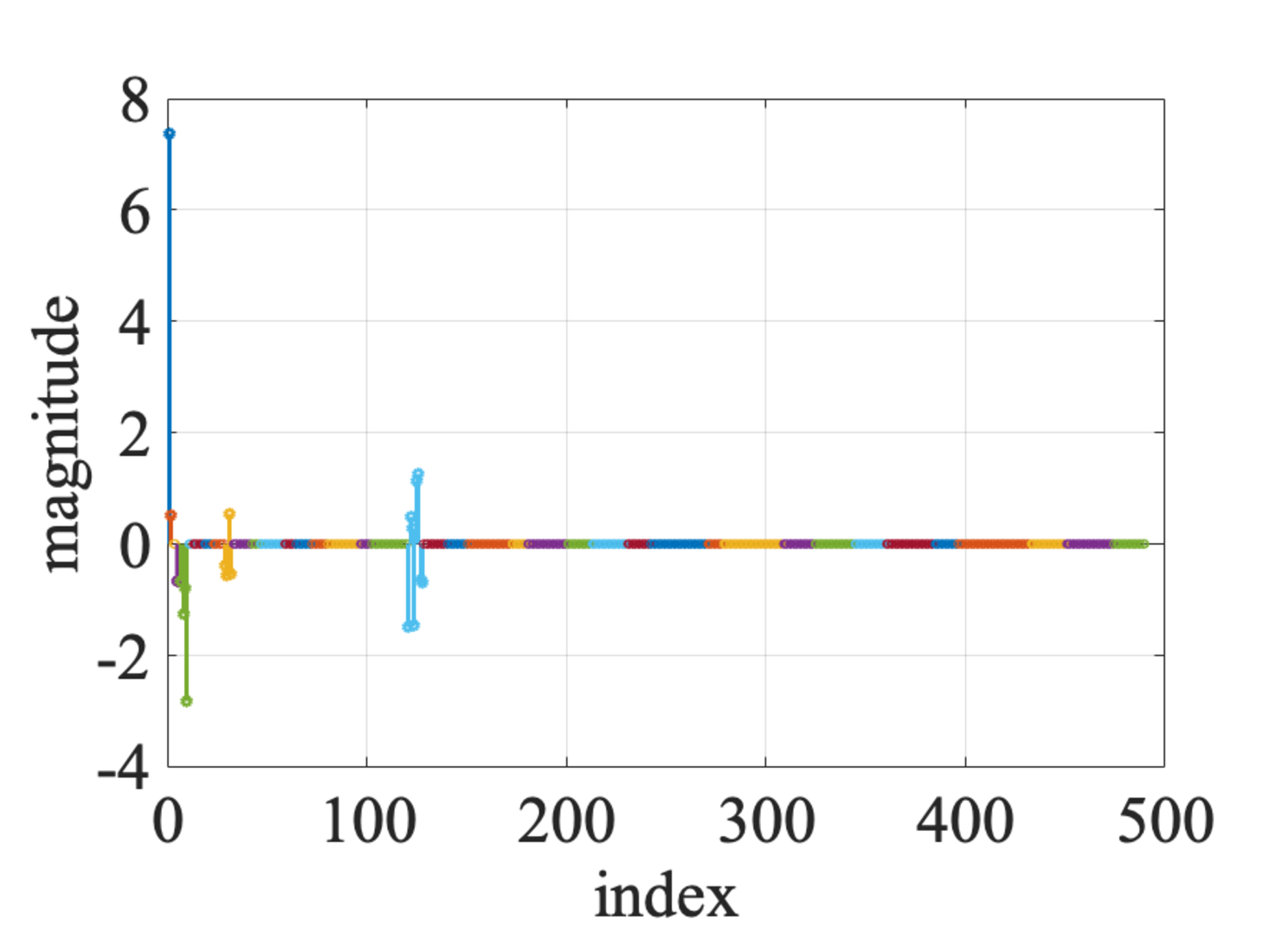}
        \caption{}
        \label{subfig:coef_200}
    \end{subfigure}
    \hfill
    \begin{subfigure}[b]{0.46\textwidth}
    \centering
        \includegraphics[width = \textwidth,height = 6cm]{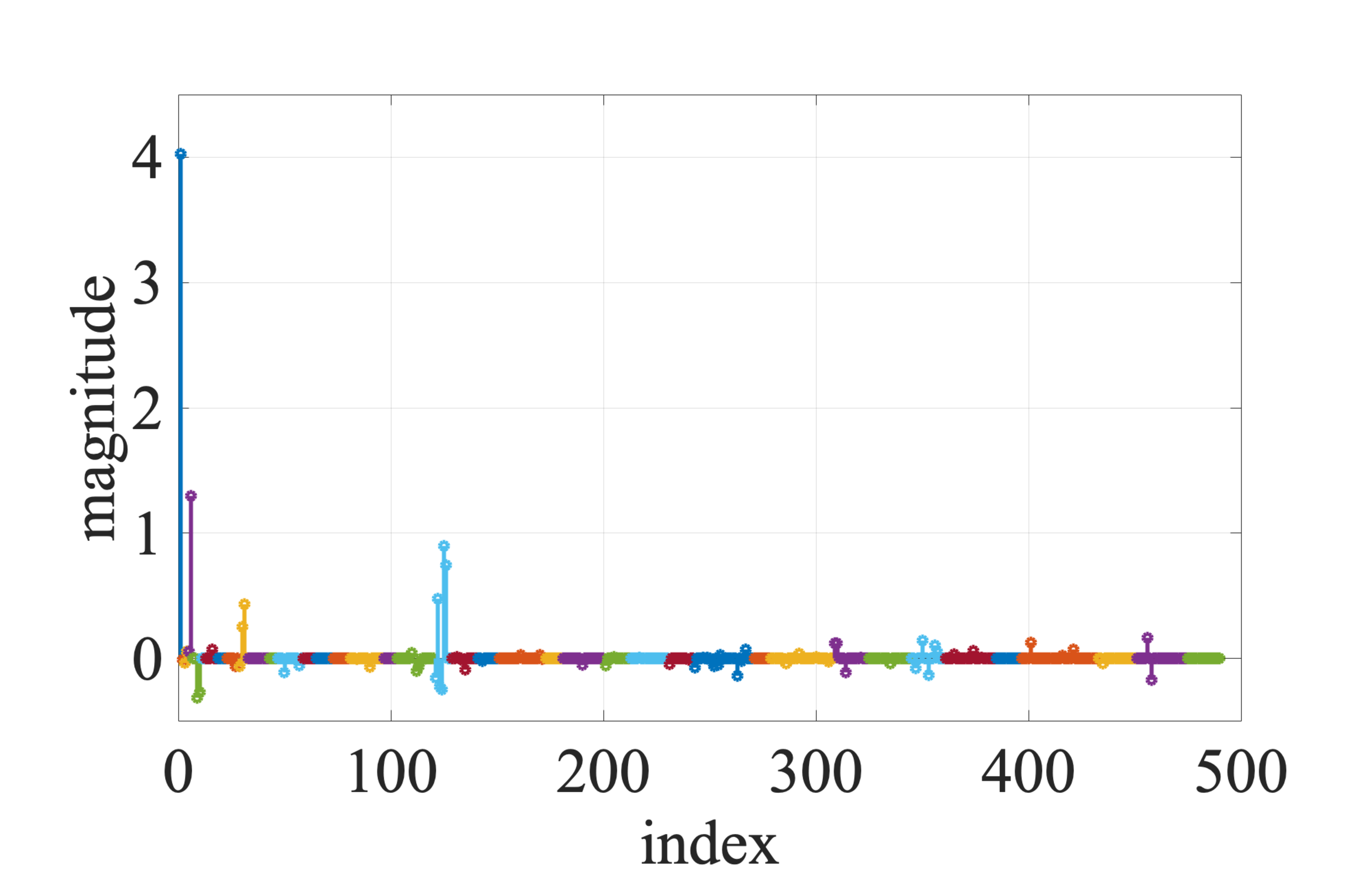}
    \caption{}
    \label{subfig:coef_80}
    \end{subfigure}
    \caption{The nonzero coefficients of the sparse representations of the periodic signals with period 20 (a) L = 200 and (b) L = 80. Here, each group of nonzero coefficients illustrated with a distinct color corresponds to the same submatrix $\bR_p$.}
    \label{fig:bp_examples_coef}
\end{figure}

\section{Problem Setup and Main Results}
\label{sec:contribution}
In a matrix form, a periodic signal of length $L$ can be modeled as
\begin{equation}
    \label{eq:NPD_model}
    \begin{aligned}
    \by = \bK\bx
    \end{aligned}\:,
\end{equation}
where $\bK \in \mathbb{C}^{L\times N}$ is the NPD with $L<N$, and $\by$ is the vector form of $y\left(n\right)$ and  $\bx \in \mathbb{C}^{N}$ is its representation in $\bK$. In the presence of noise, the periodic signal model can be modified to
\begin{equation}
    \label{eq:NPD_model_noisy}
    \begin{aligned}
    \tilde{\by} = \bK\bx +\bw
    \end{aligned},
\end{equation}
where 
$\bw \in \mathbb{R}^L$ denotes the noise vector.
A periodic signal $\by$ composed of a mixture of hidden periods lives in the union of some of the subspaces spanned by the submatrices of the NPDs \cite{tenneti2015nested}. Therefore, estimating the hidden periods of a mixture is equivalent to identifying the correct column support of its representation in an NPD. Given an underdetermined system of equations as the one in \eqref{eq:NPD_model}, i.e., when the number of measurements (rows) $L$ is smaller than the number of atoms (columns) $N$, the support set of the sparse vector $\bx$ in (\ref{eq:NPD_model}) can be identified using a sparse recovery framework \cite{tenneti2018minimum,vaidyanathan2014farey}.
The performance of such techniques depend on multiple features of NPDs, including the functions used to construct these dictionaries and the datalength $L$ \cite{tenneti2018minimum}. The authors in \cite{tenneti2018minimum} established that $L\geq 2P_{\max}$ is both necessary and sufficient to recover the exact support set of the sparse vector of a periodic signal (single period) using the program in \eqref{eq:l_0_norm}, however, as discussed in Section \ref{sec:background} this is an NP-hard problem and is of limited practical use.

We start off with the noise free case where the discrete-periodic signal is modeled as in \eqref{eq:NPD_model}. In general, we consider $y$ to be a periodic mixture that consists of $m$ periodic signals as expressed in \eqref{eq:mixture_definition}. First, we introduce some definitions before stating the main results. All proofs are deferred to the appendix.
\begin{definition}[Support set]
\label{def:support_set}
We define the index set $I_p = \{\sum_{j=1}^{p-1} \phi\left(j\right)+1,\ldots,\sum_{j=1}^{p}\phi\left(j\right)\}$, which consists of the indices of the $\phi\left(p\right)$ atoms of an NPD $\bK$ that have period $p$. 
\end{definition}

\begin{example}
For $p = 5$, we have $\phi\left(5\right) = 4$, and $\sum_{j=1}^{4} \phi\left(j\right) = 6$, therefore, $I_5 = \{7,8,9,10\}$ as shown in Fig. \ref{fig:RPT_example}.
\end{example}
\begin{figure}
    \centering
    \includegraphics[width = 0.5\linewidth,height = 4cm]{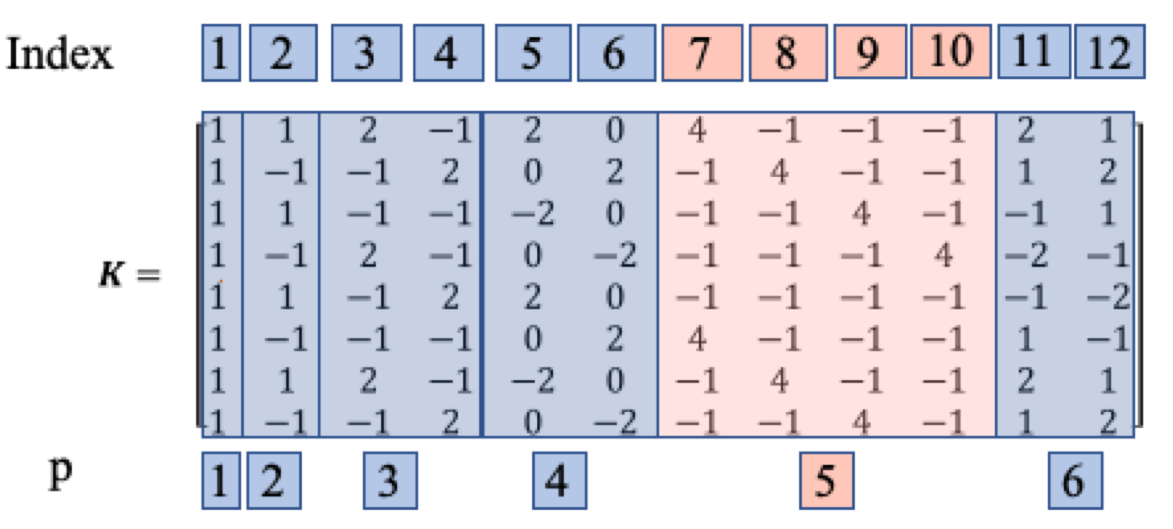}
    \caption{RPT dictionary with $P_{\max} = 6$ and $L = 8$. The periodically extended version of submatrix $\bC_5$ and its corresponding indices in $I_5$ are shown in a distinct color. The columns of the dictionary are not normalized.}
    \label{fig:RPT_example}
\end{figure}
Definition \ref{def:support_set} determines the support set corresponding to the atoms that span the nested periodic subspace of period $p$. Our next definition generalizes this notion to the union of the support sets of a periodic mixture. 
\begin{definition} [Union of support sets]
\label{def:union_support_set}
Let $\mathcal{T} \in 2^\mathbb{P}$ be a non-empty set that contains all the hidden periods of a periodic mixture and
  $Q_p = \{q:  q|p\}$ be the set with all divisors of $p$. Then, $D_{\mathcal{T}} = \bigcup_{p \in \mathcal{T}} Q_p$ is the set of all divisors of the elements in $\mathcal{T}$ and $S_{\mathcal{T}} := \bigcup_{q \in D_{\mathcal{T}}} I_q$ is the set containing the support of the 
  periodic mixture (obtained as the union of the index sets $I_q$ over all the divisors in $ D_{\mathcal{T}}$). For the special case where $m = 1$, we use the notation $S_p$, $p \in \mathbb{P}$, for the support set, defined as the union of the sets $I_q$ for all the divisors $q\in Q_p$, i.e., $S_{p} := \bigcup_{q \in Q_p} I_q$. 
\end{definition}

\begin{example} \label{ex:union_suppport_set}
Assume $P_{\max} = 8$, $\mathcal{T}_1 = \{3,5\} \in 2^\mathbb{P}$. Then $D_{\mathcal{T}_1} = \{1,3,5\}$ and $S_{\mathcal{T}_1} = \{1,3,4,7,8,9,10\}$. For $\mathcal{T}_2 = \{3,4\}$, we have $D_{\mathcal{T}_2} = \{1,2,3,4\}$ and $S_{\mathcal{T}_2} = \{1,2,3,4,5,6\}$.
\end{example}

We remark that a sparse mixture of $m$ elements could induce different sparsity levels as highlighted in Example \ref{ex:union_suppport_set}. So, henceforth we will focus on a set $\mathbb{Q}_k\left(m\right)$ with bounded sparsity defined next.
\begin{definition} \label{def:Qkm}
Denote by $\mathbb{Q}\left(m\right)$ the collection of all subsets of $2^\mathbb{P}$ with $m$ elements that are not divisors of one another, i.e., 
\[
\mathbb{Q}\left(m\right):=\{\mathcal{T}\in 2^\mathbb{P}: |\mathcal{T}|=m ~\wedge ~ p_i \not\vert p_j, \forall p_i, p_j \in\mathcal{T} \}.
\] 
Further, define 
$\mathbb{Q}_k(m):=\{\mathcal{T}\in\mathbb{Q}\left(m\right): |S_{\mathcal{T}}|\leq k\}$. 
\end{definition}

\subsection{Exact sparse recovery guarantees in the noise-free case}
\label{subsec:method_noiseless}
In this section, we establish exact recovery guarantees for periodic signals. Conditions such as \eqref{eq:coherence_guarantee} and \eqref{eq:cumulative_coherence_guarantee} ignore the structure of the NPDs, thus do not yield meaningful recovery bounds in this context, as will be shown in Section \ref{sec:result}. 
Therefore, we leverage the properties of NPDs, namely the Euler structure and the LCM property, to provide new bounds tailored to NPDs. We seek improved conditions in the spirit of \eqref{eq:coherence_guarantee} and \eqref{eq:cumulative_coherence_guarantee} that provide achievable bounds for large sparsity level $k$, and are easy to verify. To this end, we introduce the notion of nested periodic inter-coherence (NPI) and nested periodic intra-coherence (NPA) that restrict the coherence calculation to the set $\mathbb{Q}_k\left(m\right)$.

\begin{definition}
\label{def:npi_coh}
Given an $L\times N$ NPD with $N = \sum_{p = 1}^{P_{\max}}\phi\left(p\right)$, we define the nested periodic inter-coherence and 
intra-coherence in \eqref{eq:npi_coherence} and \eqref{eq:npa_coherence}, respectively, as,   
\begin{equation}
    \label{eq:npi_coherence}
    \zeta_{k,m} \eqdef  \max_{\mathcal{T}\in \mathbb{Q}_k\left(m\right)} \max_{\substack{i \in S_{\mathcal{T}}^c}} \sum_{\substack{j \in S_{\mathcal{T}}}} |\langle \bk_i,\bk_j\rangle|\:,
\end{equation}
and
\begin{equation}
    \label{eq:npa_coherence}
    \nu_{k,m} \eqdef  \max_{\mathcal{T}\in \mathbb{Q}_k\left(m\right)} \max_{\substack{i \in S_{\mathcal{T}}}} \sum_{\substack{j \in S_{\mathcal{T}}\\i\neq j}} |\langle \bk_i,\bk_j\rangle|\:.
\end{equation}

\end{definition}

\textcolor{black}{The schematics in Fig. \ref{fig:coherence_ex2}.a and \ref{fig:coherence_ex2}.b illustrate the calculation of the NPI restricted to  $\mathcal{T} = \{3,4\} \in \mathbb{Q}_6\left(2\right)$. The atoms in $S^c_{\mathcal{T}}$ and $S_{\mathcal{T}}$ are shown in green and blue boxes, respectively. The blue squares in Fig. \ref{fig:coherence_ex2}.b mark the inner products of the atoms corresponding to sets $S_{\mathcal{T}}$ and $S_{\mathcal{T}}^c$ for $\mathcal{T} = \{3,4\}$. The red rectangles designate the summands in the summation over $j\in S_{\mathcal{T}}$. The maximum of these summations is the inner maximization in \eqref{eq:npi_coherence}. Finally, by maximizing over all sets in $\mathbb{Q}_6\left(2\right)$, we obtain $\zeta_{6,2}$ (the outer maximization in \eqref{eq:npi_coherence}).}
Based on these new  definitions, we can obtain a tight upper bound on an ERC-type condition when restricted to NPDs and periodic mixtures for sets in $\mathbb{Q}_k\left(m\right)$, and in turn the new recovery condition in Theorem \ref{thr:guarantee_upperbound_mixture_strong} for corresponding periodic mixtures.

\begin{theorem}[Recovery condition for mixtures in $\mathbb{Q}_k\left(m\right)$]
\label{thr:guarantee_upperbound_mixture_strong}
Suppose that $\bK$ is an NPD of size $L\times N$, $\mathcal{T} \in \mathbb{Q}_k\left(m\right)$ contains the hidden periods of a periodic mixture where $\mathbb{Q}_k\left(m\right)$ is as in Definition \ref{def:Qkm}, and $\zeta_{k,m}$ and $\nu_{k,m}$ are the NPI and NPA in \eqref{eq:npi_coherence} and \eqref{eq:npa_coherence}, respectively.
If the following holds for the given number of hidden periods $m$ and sparsity level $k$,
\begin{equation}
    \label{eq:guarantee_NPD_strong}
    \zeta_{k,m} +  \nu_{k,m} < 1 \:,
\end{equation}
then one can identify all the hidden periods of the periodic mixture by recovering the sparse vector $\bx$ using BP and OMP.
\end{theorem}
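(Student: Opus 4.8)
The plan is to reduce the statement to Tropp's exact recovery condition (ERC) \cite{tropp2004greed} specialized to the true support $S_{\mathcal{T}}$, and then to bound the ERC functional using the two newly introduced coherences. Recall that the ERC certifies exact recovery by both BP and OMP of any signal whose representation is supported on a set $\Lambda$ as soon as $\max_{i \in \Lambda^c}\|\bK_{\Lambda}^{\dagger}\bk_i\|_1 < 1$. By the divisor/subspace structure reviewed in Section~\ref{sec:background} (a period-$p$ component is spanned by atoms whose periods divide $p$), a periodic mixture with hidden periods $\mathcal{T}$ lies in the span of the atoms indexed by $S_{\mathcal{T}}$, so its representation is supported on $\Lambda = S_{\mathcal{T}}$; once this support is recovered, the Euler/LCM structure identifies the hidden periods. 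Hence it suffices to verify the ERC for $\Lambda = S_{\mathcal{T}}$.

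First I would factor $\bK_{S_{\mathcal{T}}}^{\dagger}\bk_i = (\bK_{S_{\mathcal{T}}}^H\bK_{S_{\mathcal{T}}})^{-1}\bK_{S_{\mathcal{T}}}^H\bk_i$ and invoke submultiplicativity of the $\ell_1$-induced norm:
\[
\|\bK_{S_{\mathcal{T}}}^{\dagger}\bk_i\|_1 \;\leq\; \|(\bK_{S_{\mathcal{T}}}^H\bK_{S_{\mathcal{T}}})^{-1}\|_{1,1}\;\|\bK_{S_{\mathcal{T}}}^H\bk_i\|_1 .
\]
The second factor equals $\sum_{j\in S_{\mathcal{T}}}|\langle \bk_j,\bk_i\rangle|$, which for $i\in S_{\mathcal{T}}^c$ is exactly the inner maximand of the NPI, so it is bounded by $\zeta_{k,m}$. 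For the first factor I would write the Gram matrix as $\bG = \bK_{S_{\mathcal{T}}}^H\bK_{S_{\mathcal{T}}} = \bI + \bH$, where the unit-norm columns put $1$'s on the diagonal and $\bH$ is the hollow matrix of cross inner products; its max-column-sum norm is precisely the inner maximand of the NPA, giving $\|\bH\|_{1,1}\leq \nu_{k,m}$. A Neumann-series expansion of $(\bI+\bH)^{-1}$ then yields $\|\bG^{-1}\|_{1,1} \leq (1-\nu_{k,m})^{-1}$, which is licensed because $\nu_{k,m}<1$ follows immediately from the hypothesis (as $\zeta_{k,m}\geq 0$).

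Combining the two bounds gives $\max_{i\in S_{\mathcal{T}}^c}\|\bK_{S_{\mathcal{T}}}^{\dagger}\bk_i\|_1 \leq \zeta_{k,m}/(1-\nu_{k,m})$, and the assumed inequality $\zeta_{k,m}+\nu_{k,m}<1$ is algebraically equivalent to $\zeta_{k,m}/(1-\nu_{k,m})<1$. The ERC therefore holds on $S_{\mathcal{T}}$, which establishes exact recovery and hence identification of all hidden periods. Since $\zeta_{k,m}$ and $\nu_{k,m}$ are themselves maxima over all $\mathcal{T}\in\mathbb{Q}_k\left(m\right)$, the bound applies uniformly, so only membership $\mathcal{T}\in\mathbb{Q}_k\left(m\right)$ is needed, not knowledge of $\mathcal{T}$ itself.

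The analytic core — the norm factorization and the Neumann bound — mirrors the classical cumulative-coherence argument behind \eqref{eq:cumulative_coherence_guarantee}; the step I expect to require the most care is the bookkeeping that ties the \emph{restricted} coherences $\zeta_{k,m}$ and $\nu_{k,m}$ to the $\ell_1$-induced norms on the \emph{correct} support $S_{\mathcal{T}}$, together with the structural verification that certifying the ERC on $S_{\mathcal{T}}$ genuinely yields the period-bearing support (via the Euler/LCM property) rather than merely some coefficient vector. This is the point where the intrinsic NPD structure, rather than a generic worst-case over all index subsets, is what makes the bound nontrivially achievable.
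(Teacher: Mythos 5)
Your analytic core --- the factorization $\|\bK_{S_{\mathcal{T}}}^{\dagger}\bk_i\|_1 \leq \|(\bK_{S_{\mathcal{T}}}^H\bK_{S_{\mathcal{T}}})^{-1}\|_{1,1}\,\|\bK_{S_{\mathcal{T}}}^H\bk_i\|_1$, the bound $\|\bK_{S_{\mathcal{T}}}^H\bk_i\|_1\leq\zeta_{k,m}$ for $i\in S_{\mathcal{T}}^c$, the hollow Gram decomposition with column-sum norm at most $\nu_{k,m}$, and the Neumann-series bound $(1-\nu_{k,m})^{-1}$ --- is exactly the computation in the paper's Appendix~\ref{App:proof_strong_mixture}, so that part stands. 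The genuine gap is the reduction step you dispatch in one sentence: ``its representation is supported on $\Lambda=S_{\mathcal{T}}$, hence it suffices to verify the ERC for $\Lambda = S_{\mathcal{T}}$.'' What the nested-periodic structure actually gives is only $\tilde{S}\subseteq S_{\mathcal{T}}$, where $\tilde{S}$ is the true support of $\bx$, and the inclusion can be strict: for instance, a period-$6$ signal lying entirely in the Ramanujan subspace $\mathcal{S}_6$ has support $I_6\subsetneq S_6$. Tropp's exact recovery theorem is stated for the \emph{exact} support, and the ERC is not monotone under shrinking the support set: the ERC for $\tilde{S}$ maximizes over $\tilde{S}^c\supset S_{\mathcal{T}}^c$, which includes the atoms in $S_{\mathcal{T}}\setminus\tilde{S}$, and for those atoms $\|\bK_{\tilde{S}}^{\dagger}\bk_i\|_1$ is not controlled by anything you have bounded. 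So establishing $\|(\bK_{S_{\mathcal{T}}}^H\bK_{S_{\mathcal{T}}})^{-1}\bK_{S_{\mathcal{T}}}^H\bK_{S_{\mathcal{T}}^c}\|_{1,1}<1$ does not, by itself, license an appeal to Tropp's theorem.

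This missing step is precisely what the paper isolates and proves as Lemma~\ref{lemma:mixture_general} in Appendix~\ref{App:General_ERC_NPD_mixture}: assuming the superset condition \eqref{eq:ERC_mixture_general}, it establishes BP recovery by a three-case analysis of any competing support $S'$ (intersecting both $S_{\mathcal{T}}$ and $S_{\mathcal{T}}^c$; contained in $S_{\mathcal{T}}^c$; contained in $S_{\mathcal{T}}$, the last case ruled out because $\bK_{S_{\mathcal{T}}}$ is full rank), and establishes OMP recovery by showing that the residuals remain in the column span of $\bK_{S_{\mathcal{T}}}$, so the correlation-ratio argument keeps every selection inside $S_{\mathcal{T}}$, with correctness of the final coefficients again following from full rank. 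You explicitly flagged this as ``the step requiring the most care'' but then deferred it rather than supplying it; in a complete proof it must be carried out, since it is the only place where the superset $S_{\mathcal{T}}$ (over which $\zeta_{k,m}$ and $\nu_{k,m}$ are defined) gets connected to the unknown true support $\tilde{S}$. Once that lemma is in hand, the rest of your argument coincides with the paper's.
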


The condition in \eqref{eq:guarantee_NPD_strong} is a tight upper bound for the ERC restricted to NPDs, and is computationally more efficient than the ERC. However, since it is restricted to mixtures with hidden periods in $\mathbb{Q}_k\left(m\right)$, we seek an alternative bound that is simultaneously easy to compute and not restricted to a certain sparsity level $k$ and number of hidden periods $m$.  
To this end, we introduce the notions of restricted inter-coherence and intra-coherence.
\begin{definition} \label{def:restricted_cumulative_coh}
We define the restricted inter-coherence and restricted intra-coherence of an NPD $\bK$ for period $p$ in \eqref{eq:restricted_inter_coh_special} and \eqref{eq:restricted_intra_coh_special}, respectively,
\begin{equation}
    \label{eq:restricted_inter_coh_special}
    \begin{aligned}
    \zeta_p \eqdef \max_{i \in S_p^c} \sum_{j \in S_p} |\langle \bk_i,\bk_j\rangle|\:,
    \end{aligned}
 \end{equation}
 and
\begin{equation}
    \label{eq:restricted_intra_coh_special}
    \nu_p \eqdef  \max_{\substack{i \in S_{p}}} \sum_{\substack{j \in S_{p}\\i\neq j}} |\langle \bk_i,\bk_j\rangle|\:.
\end{equation}
where the set $S_p$ is given in Definition \ref{def:union_support_set}.
\end{definition}
Next, we establish a sufficient condition for recovering the exact sparse representations of periodic mixtures in an NPD using the notion of restricted coherence.

\begin{theorem}[Computationally efficient alternative bound]
\label{thr:single_2_mixture}
Suppose $\bK$ is an NPD and the set $\mathcal{T}$ contains the hidden periods of a periodic mixture. Let $\zeta_p$ and $\nu_p$ be the restricted inter-coherence and restricted intra-coherence of $\bK$ for period $p \in \mathbb{P}$ as defined in \eqref{eq:restricted_inter_coh_special} and \eqref{eq:restricted_intra_coh_special}, respectively. If the condition in \eqref{eq:single_2_mixture} holds, then BP and OMP can recover the exact sparse vector $\bx$ in \eqref{eq:NPD_model} that represents the periodic mixture.
\begin{equation}
    \label{eq:single_2_mixture}
    2\left(\sum_{p_j\in \mathcal{T}} \zeta_{p_j}\right) +  \max_{p_j\in\mathcal{T}}\nu_{p_j} - \min_{p_j\in\mathcal{T}} \zeta_{p_j} < 1 \:.
\end{equation}
\end{theorem}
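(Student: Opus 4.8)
The plan is to reduce the claim to Tropp's exact recovery condition (ERC) and then bound the cumulative coherences of the \emph{combined} support $S_{\mathcal{T}}$ by the single-period restricted coherences. Writing $\zeta_{\mathcal{T}} := \max_{i\in S_{\mathcal{T}}^c}\sum_{j\in S_{\mathcal{T}}}|\langle\bk_i,\bk_j\rangle|$ and $\nu_{\mathcal{T}} := \max_{i\in S_{\mathcal{T}}}\sum_{j\in S_{\mathcal{T}},\, j\neq i}|\langle\bk_i,\bk_j\rangle|$, the mechanism behind Theorem~\ref{thr:guarantee_upperbound_mixture_strong} applies verbatim to the single set $\mathcal{T}$: a Neumann-series estimate $\|(\bK_{S_{\mathcal{T}}}^H\bK_{S_{\mathcal{T}}})^{-1}\|_{1,1}\le (1-\nu_{\mathcal{T}})^{-1}$ together with $\|\bK_{S_{\mathcal{T}}}^H\bk_i\|_1\le \zeta_{\mathcal{T}}$ for $i\in S_{\mathcal{T}}^c$ yields $\max_{i\in S_{\mathcal{T}}^c}\|\bK_{S_{\mathcal{T}}}^{\dagger}\bk_i\|_1 \le \zeta_{\mathcal{T}}/(1-\nu_{\mathcal{T}})$, so that $\zeta_{\mathcal{T}}+\nu_{\mathcal{T}}<1$ already forces the ERC and hence recovery by both BP and OMP. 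It therefore suffices to show that hypothesis \eqref{eq:single_2_mixture} implies $\zeta_{\mathcal{T}}+\nu_{\mathcal{T}}<1$.

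The structural fact that drives everything is the identity $S_{\mathcal{T}}=\bigcup_{p_j\in\mathcal{T}} S_{p_j}$, which follows at once from $D_{\mathcal{T}}=\bigcup_{p_j\in\mathcal{T}} Q_{p_j}$ and the definition $S_{p}=\bigcup_{q\in Q_{p}} I_q$. I would first exploit this cover to bound the inter-coherence: for any $i\in S_{\mathcal{T}}^c$ we have $i\in S_{p_j}^c$ for \emph{every} $p_j\in\mathcal{T}$, so splitting the sum over the cover and applying the definition of $\zeta_{p_j}$ blockwise gives $\zeta_{\mathcal{T}}\le \sum_{p_j\in\mathcal{T}}\zeta_{p_j}$.

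Next I would treat the intra-coherence $\nu_{\mathcal{T}}$ by a home-period decomposition. For a fixed maximizing $i\in S_{\mathcal{T}}$, pick a period $p_\ell\in\mathcal{T}$ with $i\in S_{p_\ell}$ and split $S_{\mathcal{T}}\setminus\{i\}=(S_{p_\ell}\setminus\{i\})\cup(S_{\mathcal{T}}\setminus S_{p_\ell})$. The home block is controlled by the home intra-coherence, $\le \nu_{p_\ell}\le \max_{p_j\in\mathcal{T}}\nu_{p_j}$; the off-home block is distributed over the remaining sets of the cover, each contributing (after the bookkeeping discussed below) at most $\zeta_{p_j}$, for a total $\sum_{p_j\neq p_\ell}\zeta_{p_j}\le \sum_{p_j\in\mathcal{T}}\zeta_{p_j}-\min_{p_j\in\mathcal{T}}\zeta_{p_j}$, the home block's inter-coherence being saved (which is the origin of the $-\min_{p_j}\zeta_{p_j}$ correction). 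Adding the two estimates produces $\zeta_{\mathcal{T}}+\nu_{\mathcal{T}}\le 2\sum_{p_j\in\mathcal{T}}\zeta_{p_j}+\max_{p_j\in\mathcal{T}}\nu_{p_j}-\min_{p_j\in\mathcal{T}}\zeta_{p_j}$, which is precisely \eqref{eq:single_2_mixture}; as a consistency check, the $m=1$ specialization collapses to $\zeta_p+\nu_p<1$.

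The hard part is the off-home block of the intra-coherence estimate, namely charging the cross-support coherences to the \emph{inter}-coherences $\zeta_{p_j}$ rather than to the intra-coherences $\nu_{p_j}$. The obstruction is that the single-period supports $S_{p_j}$ are not disjoint: they share the atoms indexed by the common divisors of the periods, and in particular every $S_{p_j}$ contains the DC atom $I_1$. Hence a low-period atom $i$ may lie in several $S_{p_j}$ simultaneously, so that a cross term $\sum_{j\in S_{p_j}\setminus S_{p_\ell}}|\langle\bk_i,\bk_j\rangle|$ is \emph{a priori} an intra-$S_{p_j}$ quantity and not the inter-$S_{p_j}$ quantity required above. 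I expect the crux of the proof to be a careful divisor-bookkeeping argument — assigning each off-home atom to a single period via its own period, and using that such atoms are excluded from the home block — that nonetheless dominates these overlap terms by $\zeta_{p_j}$; it is precisely this overlap accounting, combined with the symmetric use of $\zeta_{\mathcal{T}}$ and $\nu_{\mathcal{T}}$, that explains both the factor $2$ and the corrective $-\min_{p_j}\zeta_{p_j}$ term in \eqref{eq:single_2_mixture}.
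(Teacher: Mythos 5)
Your proposal follows essentially the same route as the paper's own proof: reduce to the ERC-type condition \eqref{eq:ERC_mixture_general} on the enlarged support $S_{\mathcal{T}}$ (the paper's Lemma \ref{lemma:mixture_general}), bound $\|\bK_{S_{\mathcal{T}}}^H\bK_{S_{\mathcal{T}}^c}\|_{1,1}\le\sum_{p_j\in\mathcal{T}}\zeta_{p_j}$ using the cover $S_{\mathcal{T}}=\bigcup_{p_j\in\mathcal{T}}S_{p_j}$, write the Gram matrix as $\bK_{S_{\mathcal{T}}}^H\bK_{S_{\mathcal{T}}}=\bB+\bI$ and invert it by a Neumann series, and control $\|\bB\|_{1,1}$ by a home-period decomposition in which the home block is charged to $\nu_{p_\ell}$ and each off-home block to $\zeta_{p_j}$. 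Your accounting for the factor $2$ and the $-\min_{p_j\in\mathcal{T}}\zeta_{p_j}$ correction also matches the paper, which proves $\|\bB\|_{1,1}\le\max_{p_j\in\mathcal{T}}\bigl(\nu_{p_j}+\sum_{p_i\in\mathcal{T}\setminus p_j}\zeta_{p_i}\bigr)$ and then relaxes the maximum.

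However, the step you flagged as "the hard part" and left open is a genuine gap, and you should know that the paper does not fill it either. In the paper's proof, $i^*\in S_{p_1}$ is the index achieving $\|\bB\|_{1,1}$, and the off-home contribution is bounded by asserting that the inner products $|\langle\bk_{i^*},\bk_{j'}\rangle|$ with $j'\in S_{p_j}\setminus S_{p_1}$ "are in the summation" defining $\zeta_{p_j}$ in \eqref{eq:restricted_inter_coh_special}. That assertion is valid only when $i^*\in S_{p_j}^c$, since the definition of $\zeta_{p_j}$ requires the fixed atom to lie \emph{outside} $S_{p_j}$. Exactly as you observed, when the period of atom $i^*$ divides both $p_1$ and $p_j$ (for instance any atom in $I_1$, which lies in every $S_p$, so the situation is unavoidable once $m\ge 2$), these terms are intra-coherence summands for $S_{p_j}$, and the definitions only yield the weaker charge $\nu_{p_j}$, with no general guarantee that $\nu_{p_j}\le\zeta_{p_j}$. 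So there is no hidden divisor-bookkeeping lemma in the paper that you failed to reconstruct; the paper's argument silently covers only the case in which the maximizer $i^*$ belongs to a single $S_{p_j}$. What the home-period argument honestly delivers is $\|\bB\|_{1,1}\le\max_{i\in S_{\mathcal{T}}}\bigl[\sum_{j:\, i\in S_{p_j}}\nu_{p_j}+\sum_{j:\, i\notin S_{p_j}}\zeta_{p_j}\bigr]$, and closing the gap between this and the paper's stated bound requires either an NPD-specific property such as $\nu_p\le\zeta_p$, or a restriction/weakening of the statement (e.g., replacing the off-home charges by $\max(\zeta_{p_j},\nu_{p_j})$ in \eqref{eq:single_2_mixture}).
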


The result of Theorem \ref{thr:single_2_mixture} reduces to the following for the special case when $m = 1$,
\begin{equation}
    \label{eq:guarantee_restricted_coh_NPD}
    \begin{aligned}
        \zeta_p + \nu_p < 1 \:.
    \end{aligned}
\end{equation}
The condition of Theorem \ref{thr:single_2_mixture} in \eqref{eq:single_2_mixture} is less tight than the previous bound in \eqref{eq:guarantee_NPD_strong} since Definition \ref{def:restricted_cumulative_coh} ignores the common divisors between the hidden periods in the calculations -- versus Definition \ref{def:npi_coh}, which accounts for the union of divisors. However, 
condition \eqref{eq:single_2_mixture} significantly improves the amount of computations and can be more easily verified than \eqref{eq:guarantee_NPD_strong}. Another advantage is that it is not restricted to a fixed $k$ and $m$, thus is applicable to arbitrary periodic mixtures.

We can also obtain a comparable bound to that of Theorem \ref{thr:guarantee_upperbound_mixture_strong} for signals defined by the set $\mathbb{Q}_k\left(m\right)$ as a consequence of Theorem \ref{thr:single_2_mixture}. 
\begin{corollary}
\label{cor:single_2_mixture}
Let $\mathcal{T} \in \mathbb{Q}_k\left(m\right)$ be a set that contains the hidden periods of a periodic mixture. If
\begin{equation}
   \label{eq:single_2_mixture_Qkm}
    \max_{\mathcal{T}\in \mathbb{Q}_k\left(m\right)}    \left(2\left(\sum_{p_j\in \mathcal{T}} \zeta_{p_j}\right) + \max_{p_j\in\mathcal{T}}\nu_{p_j} - \min_{p_j\in\mathcal{T}} \zeta_{p_j}\right) < 1 \:,
\end{equation}
then BP and OMP can recover the representation of any periodic mixture of $m$ periodic signals admitting a $k$-sparse representation in an NPD.
\end{corollary}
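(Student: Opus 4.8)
The plan is to read Corollary~\ref{cor:single_2_mixture} as the uniform, sparsity-parametrized version of Theorem~\ref{thr:single_2_mixture}, and to obtain it by quantifying that theorem over every admissible mixture. The only substantive point beyond a direct invocation is to verify that the hidden-period set of any periodic mixture of $m$ signals admitting a $k$-sparse representation in $\bK$ necessarily belongs to the collection $\mathbb{Q}_k\left(m\right)$ of Definition~\ref{def:Qkm}.

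First I would fix an arbitrary such mixture, with hidden-period set $\mathcal{T}$, and check that $\mathcal{T}\in\mathbb{Q}_k\left(m\right)$. By the two criteria defining hidden periods in \eqref{eq:mixture_definition}, the $m$ constituent periods are distinct and none divides another, so $\mathcal{T}\in\mathbb{Q}\left(m\right)$. Next, Definition~\ref{def:union_support_set} identifies the support of the mixture's representation $\bx$ in $\bK$ with $S_{\mathcal{T}}$, the union of the index sets $I_q$ over the divisors in $D_{\mathcal{T}}$; the hypothesis that $\bx$ is $k$-sparse then reads $|S_{\mathcal{T}}|\le k$, and combining this with $\mathcal{T}\in\mathbb{Q}\left(m\right)$ gives $\mathcal{T}\in\mathbb{Q}_k\left(m\right)$ exactly as in Definition~\ref{def:Qkm}.

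With this membership established, the hypothesis \eqref{eq:single_2_mixture_Qkm} supplies the rest. Since its left-hand side is a maximum over all $\mathcal{T}\in\mathbb{Q}_k\left(m\right)$, the bound being strictly below $1$ forces the per-set inequality
\[
2\left(\sum_{p_j\in \mathcal{T}} \zeta_{p_j}\right) +  \max_{p_j\in\mathcal{T}}\nu_{p_j} - \min_{p_j\in\mathcal{T}} \zeta_{p_j} < 1
\]
to hold for every individual $\mathcal{T}\in\mathbb{Q}_k\left(m\right)$, which is precisely the hypothesis \eqref{eq:single_2_mixture} of Theorem~\ref{thr:single_2_mixture}. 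I would then apply Theorem~\ref{thr:single_2_mixture} to the fixed mixture to conclude that BP and OMP recover its representation $\bx$ exactly. As the mixture was arbitrary among those with $m$ hidden periods and a $k$-sparse representation, and each such mixture was shown to index a set in $\mathbb{Q}_k\left(m\right)$, the recovery guarantee holds uniformly over all of them.

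The argument is essentially a bookkeeping reduction, so I do not expect a genuine technical obstacle; Theorem~\ref{thr:single_2_mixture} carries the analytic content. The one place demanding care --- and the only place where the parameter $k$ enters --- is the first step: confirming, via Definition~\ref{def:union_support_set}, that $k$-sparsity of the representation together with the non-divisibility of the hidden periods yields $\mathcal{T}\in\mathbb{Q}_k\left(m\right)$ rather than merely $\mathcal{T}\in\mathbb{Q}\left(m\right)$.
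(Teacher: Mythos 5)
Your proposal is correct and matches the paper's intended argument: the paper states this corollary as an immediate consequence of Theorem~\ref{thr:single_2_mixture}, precisely because the maximum in \eqref{eq:single_2_mixture_Qkm} being below $1$ forces the per-set condition \eqref{eq:single_2_mixture} for every $\mathcal{T}\in\mathbb{Q}_k\left(m\right)$, after which the theorem is applied to each admissible mixture. Your added bookkeeping step (checking $\mathcal{T}\in\mathbb{Q}_k\left(m\right)$ via non-divisibility and $|S_{\mathcal{T}}|\le k$) is consistent with the paper's convention of measuring sparsity by the cardinality of the superset $S_{\mathcal{T}}$ rather than the possibly smaller exact support $\tilde{S}$, so no gap remains.
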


In Appendix \ref{App:General_ERC_NPD_mixture}, we discuss the ERC as it pertains to NPDs. It is worth noting that our new definitions of NPA and NPI, as well as the restricted inter-coherence and restricted intra-coherence, allow us to establish tight upper bounds on the ERC. While it is possible to demonstrate that the ERC holds for a larger sparsity level $k$ by evaluating all elements in the set $\mathbb{Q}_k\left(m\right)$ for some values of $k$, this approach is computationally expensive. Additionally, existing bounds based on standard coherence are too loose to provide reliable sparse recovery guarantees for NPDs. However, the new results presented in Theorem \ref{thr:guarantee_upperbound_mixture_strong} and Corollary \ref{cor:single_2_mixture} offer more computationally efficient and tighter bounds on the ERC. In Section \ref{sec:result}, we provide information on the necessary run times to verify these conditions. It is important to note that while the ERC guarantees exact sparse recovery, it does not reveal much information about the dictionaries beyond this. In contrast, coherence is an interpretable statistic that can offer more insight into the dictionaries. Therefore, we introduce the NPA and NPI, which are tailored for NPDs and based on the notion of coherence. While outside the scope of this work, these metrics provide additional insight into the structure of NPDs beyond their ability to support sparse recovery.

\subsection{Support recovery guarantees in the presence of noise}
\label{subsec:method_noisy_case}
We turn our attention in this section to the model in \eqref{eq:NPD_model_noisy} in which the observation signal $\tilde{\by}$ is a periodic signal $\by$ \eqref{eq:NPD_model} contaminated with noise $\bw$. We limit the NPDs to those with real values, i.e., $\bK \in \mathbb{R}^{L \times N}$. In the context of period estimation, it is important to recover the exact support (not a subset), otherwise we could end up with an incorrect estimate of the period. 
Following \cite{cai2011orthogonal}, we focus on the OMP algorithm to recover the exact support of the periodic signal, where we use $\br_{i}$ to denote the residual at iteration $i$ of the OMP algorithm. We consider two cases for the additive noise $\bw$: the $\ell_2$-norm bounded noise, i.e., $\|\bw\|_2\leq \epsilon$, and Gaussian noise, i.e., $\bw \sim \mathcal{N}\left(\mathbf{0},\sigma^2\bI_L\right)$.

\subsubsection{Bounded noise}
In this case, the $\ell_2$-norm of the noise vector in \eqref{eq:NPD_model_noisy} is bounded, i.e., $\|\bw\|_2\leq \epsilon$, for some small $\epsilon > 0$. Following \cite{cai2011orthogonal}, we can assert the following result. 
\begin{theorem}
\label{thr:noise_bounded_general_zeta_nu}
Consider the model in \eqref{eq:NPD_model_noisy}, in which $\tilde{\by}$ is a periodic mixture of $m$ periodic signals with hidden periods in $\mathcal{T} \in \mathbb{Q}_k\left(m\right)$ and $\|\bw\|_2\leq \epsilon$. 
If \eqref{eq:guarantee_NPD_strong} holds and all the nonzero coefficients $x_i$ satisfy
\begin{equation}
    \label{eq:omp_npd_noise_bnd_zeta_nu}
|x_i|\geq \frac{2\epsilon}{\left(1-\zeta_{k,m}-\nu_{k,m}\right)\left(1-\nu_{k,m}\right)},
\end{equation}
where $\zeta_{k,m}$ and $\nu_{k,m}$ are as in \eqref{eq:npi_coherence} and \eqref{eq:npa_coherence}, respectively, then the OMP algorithm with the stopping rule $\|\br_i\|_2\leq \epsilon$ recovers the exact support of $\tilde{\by}$.
\end{theorem}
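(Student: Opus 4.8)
The plan is to follow the OMP analysis of \cite{cai2011orthogonal}, substituting the generic coherence quantities with the NPD-specific $\zeta_{k,m}$ and $\nu_{k,m}$, and to argue by induction on the OMP iterations that (i) every selected atom lies in the true support $S_{\mathcal{T}}$, and (ii) the stopping rule fires exactly when $S_{\mathcal{T}}$ has been exhausted. Write $S:=S_{\mathcal{T}}$, so $|S|\le k$ and $\tilde{\by}=\bK_S\bx_S+\bw$ with $\|\bw\|_2\le\epsilon$. Suppose that at the start of a given iteration the set $\hat{S}\subseteq S$ of already-selected atoms is a proper subset, and let $R:=S\setminus\hat{S}\neq\emptyset$ be the as-yet-unrecovered support. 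The first step is to record the structural form of the residual: since $\br=(\bI-\bP_{\hat{S}})\tilde{\by}$ with $\bP_{\hat{S}}$ the orthogonal projector onto $\spn(\bK_{\hat{S}})$ and $\hat{S}\subseteq S$, one may write $\br=\bK_S\bd+\tilde{\bw}$, where $\tilde{\bw}=(\bI-\bP_{\hat{S}})\bw$ obeys $\|\tilde{\bw}\|_2\le\|\bw\|_2\le\epsilon$ (projections are non-expansive), $\bd$ coincides with $\bx$ on $R$, and $\br$ is orthogonal to $\bk_i$ for every $i\in\hat{S}$.

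Second, I would bound the two competing correlations. For an incorrect atom $j\in S^c$, expanding $\langle\bk_j,\bK_S\bd\rangle$ and invoking the nested periodic inter-coherence \eqref{eq:npi_coherence} (applicable because $\mathcal{T}\in\mathbb{Q}_k\left(m\right)$, whence $\sum_{i\in S}|\langle\bk_j,\bk_i\rangle|\le\zeta_{k,m}$) together with Cauchy--Schwarz on the noise term gives $|\langle\bk_j,\br\rangle|\le\|\bd\|_{\infty}\zeta_{k,m}+\epsilon$. For the correct atoms, the key observation is that $\|\bd\|_{\infty}$ is attained on $R$: were the maximizing index in $\hat{S}$, orthogonality of $\br$ to $\bk_i$, $i\in\hat{S}$, combined with the intra-coherence bound \eqref{eq:npa_coherence} would force $\|\bd\|_{\infty}\le\nu_{k,m}\|\bd\|_{\infty}$, impossible since $\nu_{k,m}<1$. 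Hence $\|\bd\|_{\infty}=\|\bx_R\|_{\infty}\ge\min_{i\in S}|x_i|=:x_{\min}$, and choosing the maximizing index $i^{\star}\in R$ yields, via \eqref{eq:npa_coherence} and Cauchy--Schwarz, $\max_{i\in R}|\langle\bk_i,\br\rangle|\ge|\langle\bk_{i^{\star}},\br\rangle|\ge\|\bd\|_{\infty}(1-\nu_{k,m})-\epsilon$.

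Third, I would close the induction. OMP selects a correct atom whenever $\max_{i\in R}|\langle\bk_i,\br\rangle|>\max_{j\in S^c}|\langle\bk_j,\br\rangle|$; combining the two estimates, this holds once $\|\bd\|_{\infty}(1-\zeta_{k,m}-\nu_{k,m})>2\epsilon$, which is secured by $\|\bd\|_{\infty}\ge x_{\min}$, the hypothesis \eqref{eq:omp_npd_noise_bnd_zeta_nu}, and $1-\nu_{k,m}\le 1$, using $\zeta_{k,m}+\nu_{k,m}<1$ from \eqref{eq:guarantee_NPD_strong}. For the stopping rule I must rule out premature termination: while $R\neq\emptyset$, the noiseless part satisfies $\|\bK_S\bd\|_2\ge|\langle\bk_{i^{\star}},\bK_S\bd\rangle|\ge\|\bd\|_{\infty}(1-\nu_{k,m})\ge x_{\min}(1-\nu_{k,m})\ge 2\epsilon$, so $\|\br\|_2\ge\|\bK_S\bd\|_2-\epsilon>\epsilon$ and OMP continues; here the factor $(1-\nu_{k,m})$ appearing in \eqref{eq:omp_npd_noise_bnd_zeta_nu} is precisely what guarantees this inequality. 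Finally, once $\hat{S}=S$ the residual equals $(\bI-\bP_S)\bw$, whose norm is at most $\epsilon$, so the rule $\|\br_i\|_2\le\epsilon$ triggers exactly after all $|S|$ correct atoms have been chosen, and OMP recovers $S$ exactly.

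The hard part will be the second step: establishing the residual decomposition $\br=\bK_S\bd+\tilde{\bw}$ and proving that $\|\bd\|_{\infty}$ is attained on the unrecovered support $R$ (so that it is bounded below by $x_{\min}$). This is what permits the per-atom NPI and NPA inequalities to be applied with the clean constants $\zeta_{k,m}$ and $\nu_{k,m}$, and it is where the restriction $\mathcal{T}\in\mathbb{Q}_k\left(m\right)$ is genuinely invoked. Everything else is routine triangle-inequality and Cauchy--Schwarz bookkeeping converting the two correlation estimates into the selection and non-termination guarantees.
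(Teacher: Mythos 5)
Your proof takes a genuinely different route from the paper's. The paper does not re-derive the OMP iteration analysis at all: it proves a Gershgorin-type eigenvalue bound (Lemma \ref{lemma:lambda_min}: $\nu_{k,m}<1$ implies $\lambda_{\min}\geq 1-\nu_{k,m}$ for the Gram matrix of the support atoms), invokes Proposition 1 of \cite{cai2011orthogonal} as a black box --- if the ERC constant $M$ of \eqref{eq:ERC_tropp} satisfies $M<1$ and $|x_i|\geq 2\epsilon/\bigl(\left(1-M\right)\lambda_{\min}\bigr)$, then OMP with the stopping rule $\|\br_i\|_2\leq\epsilon$ recovers the exact support --- and then substitutes the bounds $M\leq\zeta_{k,m}+\nu_{k,m}$ and $\lambda_{\min}\geq 1-\nu_{k,m}$. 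You instead re-prove the Cai--Wang proposition from scratch: the residual decomposition $\br=\bK_S\bd+\tilde{\bw}$, the observation that $\langle\bk_i,\tilde{\bw}\rangle=0$ for $i\in\hat{S}$ (which is what makes your ``$\|\bd\|_{\infty}$ is attained on $R$'' argument clean), the two correlation bounds via $\zeta_{k,m}$ and $\nu_{k,m}$, and the termination analysis. This is self-contained, the constants reproduce \eqref{eq:omp_npd_noise_bnd_zeta_nu} exactly, and the factor $1-\nu_{k,m}$ enters through your correct-atom lower bound, playing precisely the role that $\lambda_{\min}$ plays in the paper.

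There is, however, one genuine gap: you set $S:=S_{\mathcal{T}}$, call it ``the true support,'' and define $x_{\min}:=\min_{i\in S}|x_i|$. But the exact support $\tilde{S}$ of $\bx$ satisfies only $\tilde{S}\subseteq S_{\mathcal{T}}$, and the containment can be proper --- for instance, a signal lying in a single Ramanujan subspace $\mathcal{S}_p$ is supported on $I_p$ alone, not on all of $S_p=\bigcup_{q|p}I_q$; the paper is explicit about this distinction in the proof of Lemma \ref{lemma:mixture_general}. When $\tilde{S}\subsetneq S_{\mathcal{T}}$ your argument fails in two places. First, $x_{\min}=0$ because $\bx$ vanishes on $S_{\mathcal{T}}\setminus\tilde{S}$, while the hypothesis \eqref{eq:omp_npd_noise_bnd_zeta_nu} constrains only the \emph{nonzero} coefficients; every selection and non-termination inequality you derive runs through $\|\bd\|_{\infty}\geq x_{\min}$, so they all become vacuous. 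Second, your claim (ii) is false in this case: the residual equals $\left(\bI-\bP_{\hat{S}}\right)\bw$, hence has norm at most $\epsilon$, as soon as $\hat{S}\supseteq\tilde{S}$, which can occur well before $S_{\mathcal{T}}$ is exhausted; and any atoms of $S_{\mathcal{T}}\setminus\tilde{S}$ that OMP did pick along the way would be false positives for \emph{exact} support recovery. So as written your proof establishes the theorem only under the extra assumption $\tilde{S}=S_{\mathcal{T}}$. Repairing it requires running the induction with $S=\tilde{S}$; the ``incorrect'' atoms then split into $j\in S_{\mathcal{T}}^c$, controlled by $\zeta_{k,m}$, and $j\in S_{\mathcal{T}}\setminus\tilde{S}$, controlled only by $\nu_{k,m}$ (such $j$ lie inside $S_{\mathcal{T}}$, so \eqref{eq:npa_coherence} rather than the inter-coherence applies), which replaces your clean constant $\zeta_{k,m}$ by $\max\left(\zeta_{k,m},\nu_{k,m}\right)$ in the incorrect-atom bound. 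This is exactly the subtlety the paper absorbs into its terse claim $M\leq\zeta_{k,m}+\nu_{k,m}$ when applying \cite{cai2011orthogonal} to $\tilde{S}$, so you should at minimum flag the assumption under which your constants hold.
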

In addition, we establish another sufficient condition for the bounded noise case that leverages the alternative bound in Theorem \ref{thr:single_2_mixture}.
\begin{theorem}
\label{thr:noise_bounded_general_s2m}
Suppose $\tilde{\by}$ is a periodic mixture of $m$ hidden periods in $\mathcal{T}$ and $\|\bw\|_2\leq \epsilon$. If \eqref{eq:single_2_mixture} holds and all the nonzero coefficients $x_i$ satisfy
\begin{equation}
\begin{aligned}
\label{eq:noise_bounded_s2m}
        &|x_i|\geq\\
    &\frac{2\epsilon}{\left(1-2\sum_{p_j\in \mathcal{T}} \zeta_{p_j} - \hat{\nu}_{p} + \check{\zeta}_p\right)\left(1-\sum_{p_j\in \mathcal{T}} \zeta_{p_j}-\hat{\nu}_p + \check{\zeta}_p\right)},
    \end{aligned}
\end{equation}
where $\hat{\nu}_{p} = \max_{p_j\in\mathcal{T}}\nu_{p_j}$ and $\check{\zeta}_{p} = \min_{p_j\in\mathcal{T}} \zeta_{p_j}$, and $\zeta_p$ and $\nu_p$ are as in \eqref{eq:restricted_inter_coh_special} and \eqref{eq:restricted_intra_coh_special}, respectively, then the OMP algorithm with the stopping rule $\|\br_i\|_2\leq \epsilon$ recovers the exact support of $\tilde{\by}$.
\end{theorem}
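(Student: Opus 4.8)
The plan is to follow the OMP-with-noise analysis of \cite{cai2011orthogonal} in the same manner used to prove Theorem \ref{thr:noise_bounded_general_zeta_nu}, but to replace the nested periodic coherences $\zeta_{k,m},\nu_{k,m}$ by the restricted-coherence surrogates supplied by the proof of Theorem \ref{thr:single_2_mixture}. The engine of the argument is an inductive claim: if at the start of iteration $i$ every index selected so far lies in the true support $S_{\mathcal{T}}$, then the greedy atom selection of OMP again picks an index inside $S_{\mathcal{T}}$, i.e. $\max_{j\in S_{\mathcal{T}}}|\langle \br_i,\bk_j\rangle| > \max_{j\in S_{\mathcal{T}}^c}|\langle \br_i,\bk_j\rangle|$. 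Once this selection claim is established, I would show that the stopping rule $\|\br_i\|_2\le\epsilon$ halts the algorithm precisely when $S_{\mathcal{T}}$ has been exhausted, neither sooner (which would drop a true period) nor later (which would inject a spurious one), yielding exact support recovery.

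To verify the inductive step I would decompose the residual at iteration $i$ as $\br_i = \bK(\bx - \hat{\bx}_i) + \tilde{\bw}_i$, where $\hat{\bx}_i$ is supported on the already-selected subset of $S_{\mathcal{T}}$ and $\tilde{\bw}_i$ collects the projected noise, which still obeys $\|\tilde{\bw}_i\|_2\le\epsilon$ since orthogonal projections are non-expansive. Bounding $|\langle\br_i,\bk_j\rangle|$ for $j\in S_{\mathcal{T}}^c$ from above and $\max_{j\in S_{\mathcal{T}}}|\langle\br_i,\bk_j\rangle|$ from below then reduces, up to the Cauchy--Schwarz noise term $\epsilon\|\bk_j\|_2=\epsilon$, to controlling two cumulative-coherence quantities over the support: the off-support cross sum and the on-support cross sum.

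The crucial link is that the proof of Theorem \ref{thr:single_2_mixture} already bounds these two quantities through the restricted inter- and intra-coherences, showing that the off-support sum is at most $\sum_{p_j\in\mathcal{T}}\zeta_{p_j}$ and the on-support sum is at most $\sum_{p_j\in\mathcal{T}}\zeta_{p_j}+\hat{\nu}_p-\check{\zeta}_p$, where the $-\check{\zeta}_p$ and $+\hat{\nu}_p$ corrections arise from isolating the contribution of the particular atom being scored. Substituting these two bounds in place of $\zeta_{k,m}$ and $\nu_{k,m}$ turns the denominators appearing in Theorem \ref{thr:noise_bounded_general_zeta_nu} into exactly $(1-2\sum_{p_j\in\mathcal{T}}\zeta_{p_j}-\hat{\nu}_p+\check{\zeta}_p)$ and $(1-\sum_{p_j\in\mathcal{T}}\zeta_{p_j}-\hat{\nu}_p+\check{\zeta}_p)$, which are positive precisely because \eqref{eq:single_2_mixture} holds. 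The coefficient lower bound \eqref{eq:noise_bounded_s2m} is then exactly the threshold guaranteeing that the true-support correlation strictly dominates the false one at every iteration and that $\|\br_i\|_2$ remains above $\epsilon$ until the last correct atom is chosen.

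I expect the main obstacle to be the faithful transfer of the restricted-coherence bounds from the noise-free decomposition in Theorem \ref{thr:single_2_mixture} into the residual-based inequalities of the noisy OMP analysis — in particular, ensuring the correction terms $\check{\zeta}_p$ and $\hat{\nu}_p$ appear with the correct signs depending on whether the scored atom is on or off the support, and confirming that the looseness introduced by ignoring common divisors among the hidden periods (as Definition \ref{def:restricted_cumulative_coh} does, unlike Definition \ref{def:npi_coh}) does not break the strict inequality required for correct selection. Verifying the stopping-rule timing, namely that $\|\br_i\|_2$ crosses $\epsilon$ exactly after the last correct selection, is routine once the selection claim and the coefficient lower bound are in hand.
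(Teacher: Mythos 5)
Your proposal is correct and follows essentially the same route as the paper: the paper proves this theorem exactly as it proves Theorem \ref{thr:noise_bounded_general_zeta_nu}, namely by invoking the noisy-OMP support-recovery result of \cite{cai2011orthogonal} together with a Gershgorin-type bound $\lambda_{\min} \geq 1-\sum_{p_j\in \mathcal{T}} \zeta_{p_j}-\hat{\nu}_p + \check{\zeta}_p$ on the on-support Gram matrix, and then substituting the restricted-coherence bounds from the proof of Theorem \ref{thr:single_2_mixture} (off-support sum $\leq \sum_{p_j\in\mathcal{T}}\zeta_{p_j}$, on-support sum $\leq \sum_{p_j\in\mathcal{T}}\zeta_{p_j}+\hat{\nu}_p-\check{\zeta}_p$), which produces exactly the two denominator factors you identify. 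The only difference is presentational: you re-derive the OMP selection/stopping induction that the paper cites as a black box.
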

\subsubsection{Gaussian noise}
Suppose $\bw \sim \mathcal{N}\left(\mathbf{0},\sigma^2\bI_L\right)$.
\textcolor{black}{We extend the result of Theorem \ref{thr:noise_bounded_general_s2m} to the Gaussian case.
\begin{theorem}
 \label{thr:noise_gaussian_s2m}
Suppose $\bw \sim \mathcal{N}\left(\mathbf{0},\sigma^2\bI_L\right)$, and $\tilde{\by}$ is a periodic mixture of $m$ periodic signals with its hidden periods in $\mathcal{T}$. 
If \eqref{eq:single_2_mixture} holds and all the nonzero coefficients $x_i$ satisfy
\begin{equation}
\begin{aligned}
    \label{eq:gaussian_coeff_s2m}
    &|x_i| \geq\\ 
    &\frac{2\sigma \sqrt{L + 2\sqrt{L\log L}}}{\left(1-2\sum_{p_j\in \mathcal{T}} \zeta_{p_j} - \hat{\nu}_{p} + \check{\zeta}_p\right)\left(1-\sum_{p_j\in \mathcal{T}} \zeta_{p_j}-\hat{\nu}_p + \check{\zeta}_p\right)}\:,
    \end{aligned}
\end{equation}
then the OMP algorithm with the stopping rule $\|\br_i\|_{2}\leq \sigma \sqrt{L + 2\sqrt{L \log L}}$ recovers the exact support with probability at least $1-1/L$.
\end{theorem}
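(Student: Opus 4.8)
The plan is to reduce the Gaussian case to the bounded-noise guarantee of Theorem \ref{thr:noise_bounded_general_s2m}. The key observation is that the coefficient lower bound \eqref{eq:gaussian_coeff_s2m} is precisely the bounded-noise bound \eqref{eq:noise_bounded_s2m} under the substitution $\epsilon = \sigma\sqrt{L + 2\sqrt{L\log L}}$, and the stopping rule $\|\br_i\|_2 \leq \sigma\sqrt{L + 2\sqrt{L\log L}}$ is the corresponding specialization of $\|\br_i\|_2 \leq \epsilon$. Hence, if the Gaussian noise realization happens to satisfy $\|\bw\|_2 \leq \sigma\sqrt{L + 2\sqrt{L\log L}}$, then the hypotheses of Theorem \ref{thr:noise_bounded_general_s2m} hold verbatim with that choice of $\epsilon$ and OMP recovers the exact support. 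The whole argument therefore amounts to controlling the probability that the noise norm exceeds this threshold.

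First I would introduce the event $\calE := \{\|\bw\|_2 \leq \sigma\sqrt{L + 2\sqrt{L\log L}}\}$. Since $\bw \sim \mathcal{N}(\mathbf{0},\sigma^2\bI_L)$, the normalized squared norm $\|\bw\|_2^2/\sigma^2$ is a chi-squared random variable with $L$ degrees of freedom. Applying a standard chi-squared tail bound (of Laurent--Massart type) with deviation $t = 2\sqrt{L\log L}$ gives $\Pr(\|\bw\|_2^2/\sigma^2 \geq L + 2\sqrt{L\log L}) \leq 1/L$, so that $\Pr(\calE) \geq 1 - 1/L$. This is the step that pins down the exact form of the threshold $\sqrt{L + 2\sqrt{L\log L}}$ appearing in both the coefficient condition and the stopping rule.

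Next I would condition on $\calE$ and invoke Theorem \ref{thr:noise_bounded_general_s2m} with $\epsilon = \sigma\sqrt{L + 2\sqrt{L\log L}}$. Under $\calE$ the noise is $\ell_2$-bounded by $\epsilon$, condition \eqref{eq:single_2_mixture} is assumed, and \eqref{eq:gaussian_coeff_s2m} coincides with \eqref{eq:noise_bounded_s2m}; thus OMP with stopping rule $\|\br_i\|_2 \leq \epsilon$ recovers the exact support of $\tilde{\by}$. Since the event that OMP succeeds contains $\calE$, removing the conditioning yields exact support recovery with probability at least $1 - 1/L$, as claimed.

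The main obstacle is the tail bound delivering the stated constant; everything else is a direct transfer from the bounded-noise theorem. In its exact form the Laurent--Massart inequality reads $\Pr(Z \geq L + 2\sqrt{Lx} + 2x) \leq e^{-x}$, which carries an additional $+2x$ term, so obtaining the clean threshold $L + 2\sqrt{L\log L}$ requires either a sub-Gaussian-in-the-bulk estimate of the form $\Pr(Z \geq L + t) \leq \exp(-t^2/(4(L+t)))$ together with a check that the lower-order terms are absorbed over the relevant range of $L$, or a direct appeal to the concentration lemma used in \cite{cai2011orthogonal}. Once $\calE$ is established with probability at least $1-1/L$, the reduction to Theorem \ref{thr:noise_bounded_general_s2m} is immediate, because on $\calE$ the deterministic threshold $\sigma\sqrt{L + 2\sqrt{L\log L}}$ serves as a valid bounded-noise level $\epsilon$.
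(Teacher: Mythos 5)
Your proposal is correct and takes essentially the same route as the paper: the paper likewise reduces the Gaussian case to Theorem \ref{thr:noise_bounded_general_s2m} with $\epsilon = \sigma\sqrt{L + 2\sqrt{L\log L}}$, after establishing that the noise lies in $B_2 = \{\bw : \|\bw\|_2 \leq \sigma\sqrt{L+2\sqrt{L\log L}}\}$ with probability at least $1-1/L$. The only difference is that the paper does not re-derive the chi-squared tail estimate at all; it directly cites Lemma 3 of \cite{cai2011orthogonal} (also \cite{cai2009recovery}) for $P(\bw \in B_2) \geq 1-1/L$, which is exactly the ``direct appeal to the concentration lemma'' fallback you mention and sidesteps the Laurent--Massart slack you correctly flagged.
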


We can establish another result for the Gaussian case similar to Theorem \ref{thr:noise_bounded_general_zeta_nu} based on the condition in \eqref{eq:guarantee_NPD_strong}. This result is omitted for brevity.}

\subsection{Extension to refined conditions in the noise-free case}
Our numerical results in section \ref{sec:result} reveal that Theorem \ref{thr:guarantee_upperbound_mixture_strong} offers significant improvement over the condition in  \eqref{eq:cumulative_coherence_guarantee}, in the sense that it provides an achievable bound for larger sparsity levels $k$. Nevertheless, the condition in \eqref{eq:guarantee_NPD_strong} based on the NPI and NPA is conservative in two ways. First, the summation in \eqref{eq:npi_coherence} and \eqref{eq:npa_coherence} is over all $|S_\mathcal{T}|$ values, whereas the support set $\tilde{S}\subseteq S_\mathcal{T}$, so the upper bound in \eqref{eq:guarantee_NPD_strong} overestimates the coherence. 
Second, the outer maximization in Definition \ref{def:npi_coh} is defined over all sets in $\mathbb{Q}_k\left(m\right)$. As such, the sets $\mathcal{T} \in \mathbb{Q}_k\left(m\right)$ with support sets $S_{\mathcal{T}}$ of cardinality smaller than $k$ are typically dominated by other sets in $\mathbb{Q}_k\left(m\right)$. 
In this section, we derive a more refined bound by accounting for the actual sparsity level $s \leq k$, and treating $k$ as the parameter characterizing the set $\mathbb{Q}_k\left(m\right)$. 
First, we introduce the cumulative nested periodic inter-coherence (CNPI) and cumulative nested periodic intra-coherence (CNPA) for set $\mathbb{Q}_k\left(m\right)$. These notions account explicitly for sets $S_{\mathcal{T}}$ of cardinality $s$ smaller than $k$. As a result, we are able to derive a refined condition 
that resolves the foregoing limitations and gives more insight into the phase transitions in the $s$\textendash $k$-plane.
\begin{definition}
\label{def:npi_cc}
Given an $L\times N$ NPD with $N = \sum_{p = 1}^{P_{\max}}\phi\left(p\right)$, we define the cumulative nested periodic inter-coherence and cumulative nested periodic intra-coherence in \eqref{eq:npi_cc} and \eqref{eq:npa_cc}, respectively, as
\begin{equation}
    \label{eq:npi_cc}
    \zeta_{k,m}\left(s\right) \eqdef  \max_{\mathcal{T}\in \mathbb{Q}_k\left(m\right)} \max_{\substack{i \in S_{\mathcal{T}}^c}} \max_{\Lambda_s} \sum_{\substack{j \in \Lambda_s}} |\langle \bk_i,\bk_j\rangle|\:,
\end{equation}
and
\begin{equation}
    \label{eq:npa_cc}
    \nu_{k,m}\left(s\right) \eqdef  \max_{\mathcal{T}\in \mathbb{Q}_k\left(m\right)} \max_{\substack{i \in S_{\mathcal{T}}}} \max_{\Lambda_s} \sum_{\substack{j \in \Lambda_s\\i\neq j}} |\langle \bk_i,\bk_j\rangle|\:.
\end{equation}
where $\Lambda_s \subseteq S_{\mathcal{T}}$ denotes a subset of at most $s\leq k$ nonzero coefficients.
\end{definition}

Fig. \ref{fig:coherence_ex2}.c contrasts the NPI with the CNPI. As an example, for $\mathcal{T} = \{3,4\}$, $k = 6$ and  $s = 3$, instead of summing over all $j \in S_{\mathcal{T}}$, for the latter we find the summation over the $s$ largest values. It is pertinent to note that the sparse recovery algorithms are oblivious to the values of $s$ and $k$. Based on these new definitions, we assert the following theorem.
\begin{figure*}
    \centering
        \includegraphics[width = \textwidth,height = 9cm]{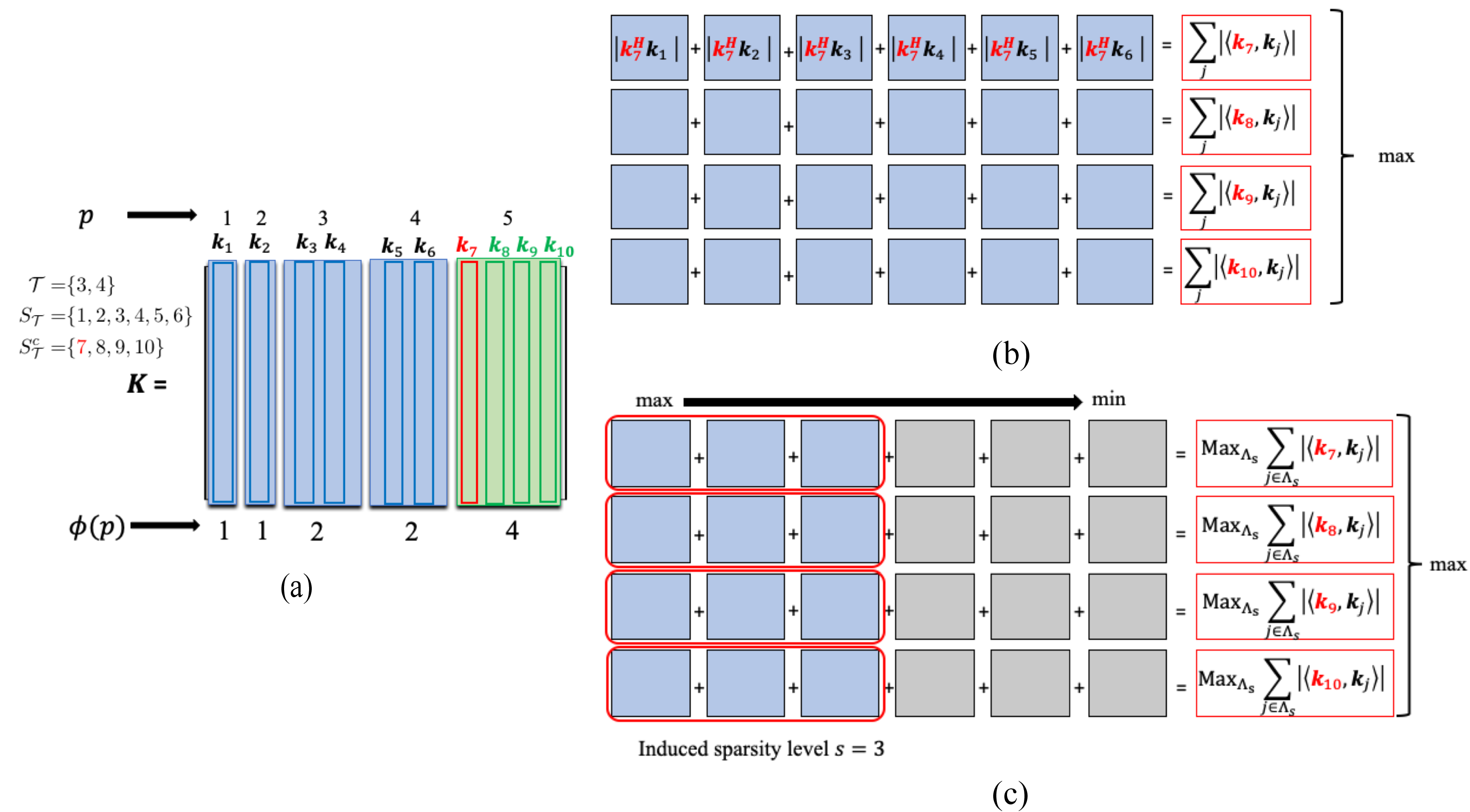}
    \caption{Calculation of the NPI in \eqref{eq:npi_coherence}. (a) For $\mathcal{T} = \{3,4\} \in \mathbb{Q}_6\left(2\right)$, the atoms in $S^c_{\mathcal{T}}$ and $S_{\mathcal{T}}$ are shown in green and blue boxes, respectively. The inner summation in \eqref{eq:npi_coherence} is over the absolute value of the inner products between one atom from $S^c_{\mathcal{T}}$ (e.g., $i = 7$ marked with a red rectangle) and all the atoms in $S_{\mathcal{T}}$. (b) The inner products of the atoms in $S_\mathcal{T}$ and $S_\mathcal{T}^c$ are shown in blue color. Each row corresponds to an atom from $S_\mathcal{T}^c$, e.g., the first row shows the summands (inner products) in the summation over $j\in S_{\mathcal{T}}$ for $i = 7$. The inner maximization in \eqref{eq:npi_coherence} selects the maximizing row.
    (c) The difference between the NPI and CNPI, for $\mathcal{T} = \{3,4\}$ and $s = 3$: instead of summing over all $j \in S_{\mathcal{T}}$, we sum over the largest $s$ values.}
    \label{fig:coherence_ex2}
\end{figure*}

\begin{theorem}[Refined recovery bound]
\label{thr:guarantee_mixture_sparse_based}
Suppose that $\bK$ is an NPD of size $L \times N$, and $\zeta_{k,m}\left(s\right)$ and $\nu_{k,m}\left(s\right)$ are the CNPI and CNPA in \eqref{eq:npi_cc} and \eqref{eq:npa_cc}, respectively. Let $\mathcal{T}\in \mathbb{Q}_k\left(m\right)$ contain the hidden periods of a periodic mixture that admits a sparse representation with sparsity level $s\leq k$ in the NPD. If the condition in \eqref{eq:guarantee_NPD_mixture_sparse} holds, then BP and OMP can identify the hidden periods of the periodic mixture by recovering its sparse representation in the NPD.
\begin{equation}
    \label{eq:guarantee_NPD_mixture_sparse}
    \zeta_{k,m}\left(s\right) + \nu_{k,m}\left(s-1\right) < 1 \:.
\end{equation}
\end{theorem}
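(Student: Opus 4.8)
The plan is to reduce the claim to Tropp's exact recovery condition (ERC) and then certify the ERC for the true support of the mixture using the CNPI and CNPA. Since the ERC is known to guarantee that both BP and OMP recover the unique sparsest representation \cite{tropp2004greed}, it suffices to show that, whenever \eqref{eq:guarantee_NPD_mixture_sparse} holds, the support $\tilde{S}\subseteq S_{\mathcal{T}}$ of the mixture (with $|\tilde{S}|=s$) satisfies $\max_{i\notin\tilde{S}}\|\bK_{\tilde{S}}^{\dagger}\bk_i\|_1<1$. This mirrors the structure of the proof of Theorem \ref{thr:guarantee_upperbound_mixture_strong}, but the sums are now taken over the actual support of size $s$ rather than over all of $S_{\mathcal{T}}$, which is precisely what the cumulative quantities in Definition \ref{def:npi_cc} are designed to capture.

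First I would invoke the standard factorization $\|\bK_{\tilde{S}}^{\dagger}\bk_i\|_1\le\|(\bK_{\tilde{S}}^H\bK_{\tilde{S}})^{-1}\|_{1,1}\,\|\bK_{\tilde{S}}^H\bk_i\|_1$ and bound the two factors separately. For the Gram factor, write $\bK_{\tilde{S}}^H\bK_{\tilde{S}}=\bI+\bE$, where $\bE$ has a zero diagonal because the atoms are unit-norm. Each column of $\bE$ indexed by $j\in\tilde{S}$ has $\ell_1$-mass $\sum_{l\in\tilde{S},\,l\neq j}|\langle\bk_l,\bk_j\rangle|$, a sum of at most $s-1$ inner products among atoms of $S_{\mathcal{T}}$; by \eqref{eq:npa_cc} this is at most $\nu_{k,m}(s-1)$, so $\|\bE\|_{1,1}\le\nu_{k,m}(s-1)$. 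A Neumann-series argument then yields $\|(\bK_{\tilde{S}}^H\bK_{\tilde{S}})^{-1}\|_{1,1}\le(1-\nu_{k,m}(s-1))^{-1}$, which is finite precisely because \eqref{eq:guarantee_NPD_mixture_sparse} forces $\nu_{k,m}(s-1)<1$ (and, incidentally, certifies that $\bK_{\tilde{S}}$ has full column rank, so the representation is unique).

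For the second factor, $\|\bK_{\tilde{S}}^H\bk_i\|_1=\sum_{j\in\tilde{S}}|\langle\bk_i,\bk_j\rangle|$ is a sum of $s$ inner products of a competing atom $\bk_i$ with atoms in $\tilde{S}\subseteq S_{\mathcal{T}}$. For $i\in S_{\mathcal{T}}^c$ this is bounded by $\zeta_{k,m}(s)$ directly from \eqref{eq:npi_cc}. Combining the two bounds gives $\|\bK_{\tilde{S}}^{\dagger}\bk_i\|_1\le\zeta_{k,m}(s)/(1-\nu_{k,m}(s-1))$, which is strictly below $1$ exactly when $\zeta_{k,m}(s)+\nu_{k,m}(s-1)<1$, i.e.\ under \eqref{eq:guarantee_NPD_mixture_sparse}. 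Finally, exact recovery of the sparse vector identifies $\tilde{S}$, and by the correspondence between the support of the representation and its constituent subspaces, together with the LCM property reviewed in Section \ref{subsec:periodic_subspaces}, this pins down the hidden periods of the mixture.

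The main obstacle I anticipate is the treatment of competing atoms $i\in S_{\mathcal{T}}\setminus\tilde{S}$, which lie inside the nested support but outside the true support: for these the numerator is an intra-coherence-type sum over $s$ terms rather than a quantity bounded by $\zeta_{k,m}(s)$. Here I would lean on the structural properties of NPDs --- the orthogonality of the Ramanujan subspaces $\mathcal{S}_{q_i}$ and the Euler/LCM structure of $S_{\mathcal{T}}$ --- to argue that such interior competitors are already subsumed by the worst-case competitor captured by $\zeta_{k,m}(s)$, so that the single condition \eqref{eq:guarantee_NPD_mixture_sparse} suffices. Making this step airtight (rather than merely bounding the interior case by $\nu_{k,m}(s)$ and thereby asking for a stronger inequality) is the crux of the argument, and it is also where the refinement over \eqref{eq:cumulative_coherence_guarantee} and over Theorem \ref{thr:guarantee_upperbound_mixture_strong} must be justified with care.
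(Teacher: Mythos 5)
Your mechanical steps are correct and mirror the paper's algebra: the factorization $\|\bK_{\tilde{S}}^{\dagger}\bk_i\|_1\le\|(\bK_{\tilde{S}}^H\bK_{\tilde{S}})^{-1}\|_{1,1}\|\bK_{\tilde{S}}^H\bk_i\|_1$, the Neumann-series bound $\|(\bK_{\tilde{S}}^H\bK_{\tilde{S}})^{-1}\|_{1,1}\le(1-\nu_{k,m}(s-1))^{-1}$, and the bound $\|\bK_{\tilde{S}}^H\bk_i\|_1\le\zeta_{k,m}(s)$ for $i\in S_{\mathcal{T}}^c$ are exactly the computations in Appendix \ref{App:proof_strong_mixture}, with the cumulative quantities substituted. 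But the step you yourself flag as the crux --- competitors $i\in S_{\mathcal{T}}\setminus\tilde{S}$ --- is a genuine gap, and your proposed repair does not close it. Claiming that interior competitors are ``subsumed by the worst-case competitor captured by $\zeta_{k,m}(s)$'' amounts to asserting $\nu_{k,m}(s)\le\zeta_{k,m}(s)$, which nothing in Definition \ref{def:npi_cc} provides; and the orthogonality of the Ramanujan subspaces you want to invoke holds for the $p\times p$ NPMs, not for atoms periodically extended to arbitrary length $L$ (if it did hold, all the coherences in the paper would vanish and the theorems would be vacuous). So certifying Tropp's ERC \eqref{eq:ERC_tropp} on the exact support $\tilde{S}$ from \eqref{eq:guarantee_NPD_mixture_sparse} alone cannot work by this route; the honest fallback is $\max\{\zeta_{k,m}(s),\nu_{k,m}(s)\}+\nu_{k,m}(s-1)<1$, a strictly stronger hypothesis than the theorem states.

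The idea your proposal is missing is that the paper never certifies the ERC on $\tilde{S}$ at all. Lemma \ref{lemma:mixture_general} imposes the ERC-type condition \eqref{eq:ERC_mixture_general} on the enlarged, structurally determined set $S_{\mathcal{T}}\supseteq\tilde{S}$, and its proof shows that atoms in $S_{\mathcal{T}}\setminus\tilde{S}$ require \emph{no coherence control whatsoever}: under \eqref{eq:ERC_mixture_general} the matrix $\bK_{S_{\mathcal{T}}}$ has full column rank, so any representation of $\by$ supported inside $S_{\mathcal{T}}$ must coincide with the true one (this disposes of BP alternatives with $S'\subseteq S_{\mathcal{T}}$ and of OMP momentarily selecting a ``wrong'' atom inside $S_{\mathcal{T}}$), while alternatives that use atoms outside $S_{\mathcal{T}}$ are ruled out by an $\ell_1$ comparison through $\bK_{S_{\mathcal{T}}}^{\dagger}$, under which interior atoms map to columns of $\ell_1$-norm exactly $1$ and exterior atoms to columns of norm strictly less than $1$. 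In other words, the operative dichotomy is interior/exterior to $S_{\mathcal{T}}$, not interior/exterior to $\tilde{S}$, so the troublesome quantity $\nu_{k,m}(s)$ never appears. You should be aware of one residual subtlety: the paper proves Theorem \ref{thr:guarantee_mixture_sparse_based} only by declaring it follows the ``same reasoning'' as Theorem \ref{thr:guarantee_upperbound_mixture_strong}, and instantiating Lemma \ref{lemma:mixture_general} with sums restricted to $s$ terms is itself not spelled out there (the hypothesis \eqref{eq:ERC_mixture_general} involves sums over all of $S_{\mathcal{T}}$). Your instinct that this is where care is needed is therefore sound --- but the repair lies in reworking the superset argument of Lemma \ref{lemma:mixture_general}, not in sharpening coherence bounds for competitors against $\tilde{S}$.
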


Theorem \ref{thr:guarantee_mixture_sparse_based} rests on the notion of cumulative nested periodic coherence introduced in Definition \ref{def:npi_cc} and provides a more relaxed condition when the sparsity level is $s$ for some $s \leq k$.



\section{Numerical Results} \label{sec:result}
\subsection{Exact support recovery condition for the noise-free case} \label{subsec:result_noise_free}
 In this section, we investigate the implications of the derived conditions for periodic mixtures. 
We compute an exact recovery condition in the spirit of ERC \cite{tropp2004greed} (stated in the appendix) which serves as a baseline to gauge the tightness of the derived bounds.  While computing an ERC requires combinatorial search over support sets, the structure of NPDs allows us to restrict the search to sets $S_{\mathcal{T}}$ for every $\mathcal{T} \in \mathbb{Q}_k\left(m\right)$ given in Definition \ref{def:Qkm}.   
We refer to this baseline in our figures as $M_k\left(m\right)$. 
Fig. \ref{fig:erc_mixture} shows the results for $m = 1$ and $m = 2$, $P_{\max} = 100$ and $L = 1915$ for the Farey and RPT dictionary.
\begin{figure}
    \centering
    \begin{subfigure}[b]{0.46\textwidth}
    \centering
        \includegraphics[width = \textwidth,height = 6cm]{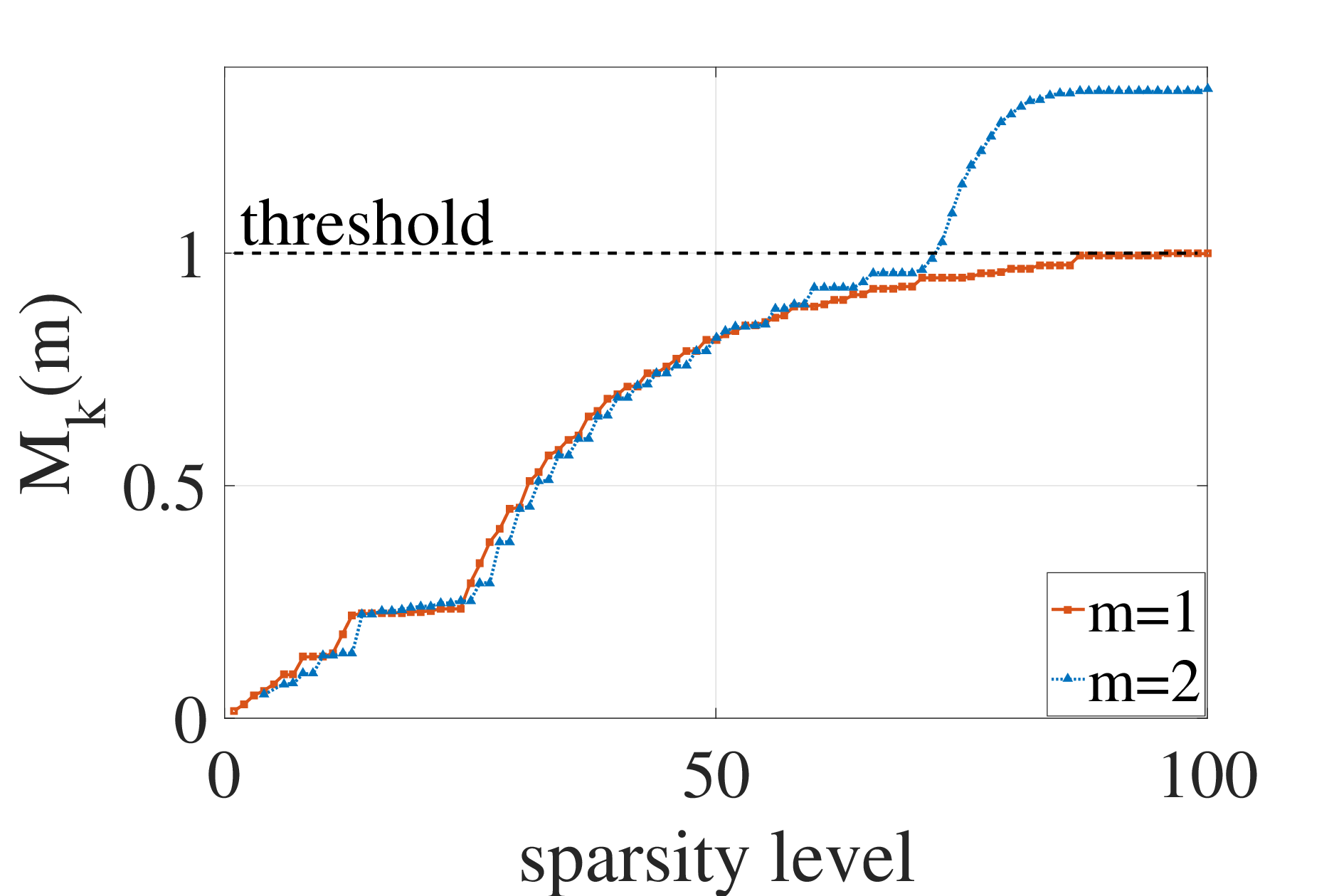}
        \caption{}
        \label{subfig:farey_mixture_erc}
    \end{subfigure}
    \hfill
    \begin{subfigure}[b]{0.46\textwidth}
    \centering
        \includegraphics[width = \textwidth,height = 6cm]{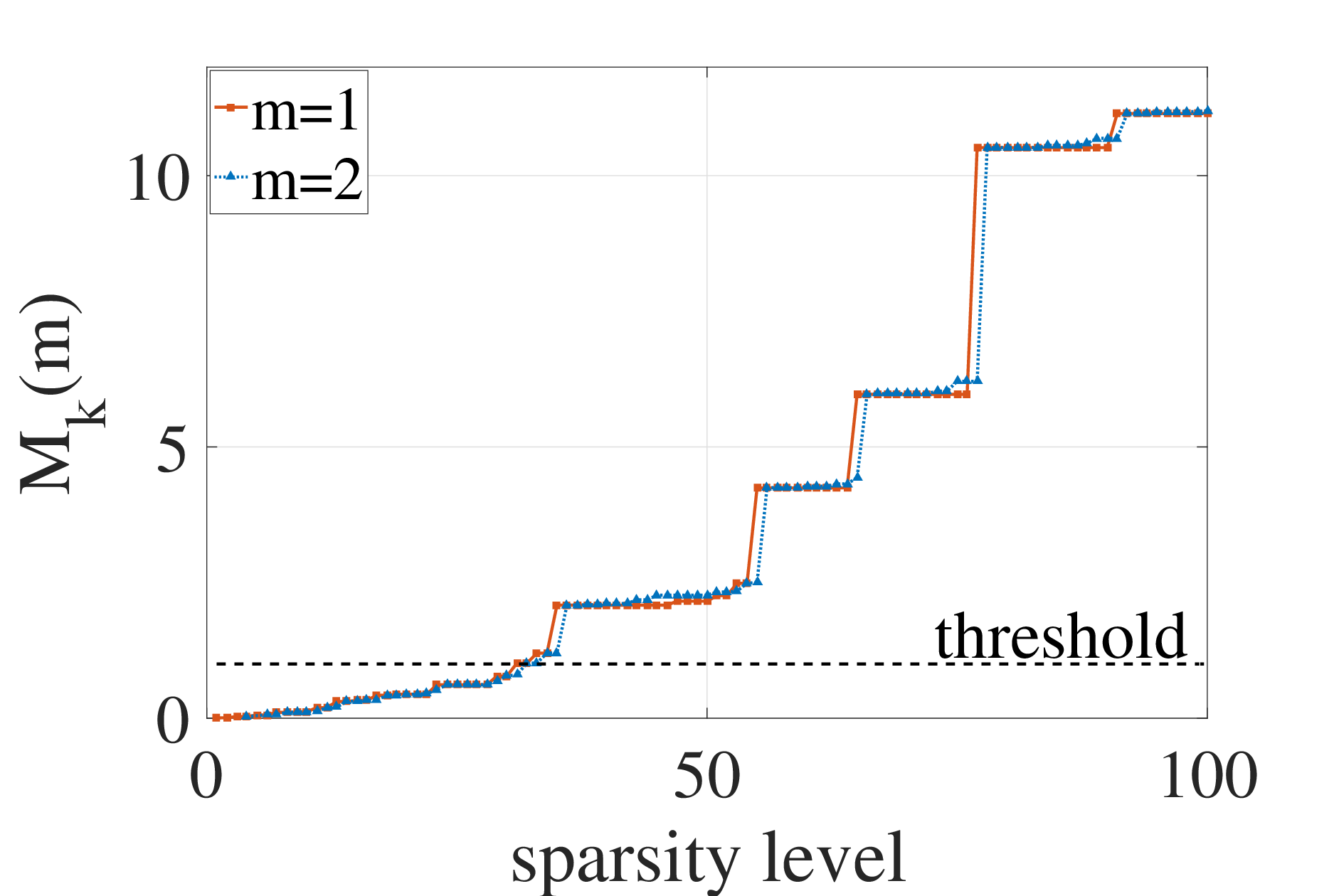}
    \caption{}
    \label{subfig:rpt_mixture_erc}
    \end{subfigure}
    \caption{Evaluation of the condition in (\ref{eq:ERC_mixture_general}) restricted to set $\mathbb{Q}_k\left(m\right)$ for (a) the Farey dictionary and (b) the RPT dictionary when $L = 1915$.}
    \label{fig:erc_mixture}
\end{figure}
\begin{figure}
    \centering
    \begin{subfigure}[b]{0.46\textwidth}
    \centering
        \includegraphics[width = \textwidth,height = 6cm]{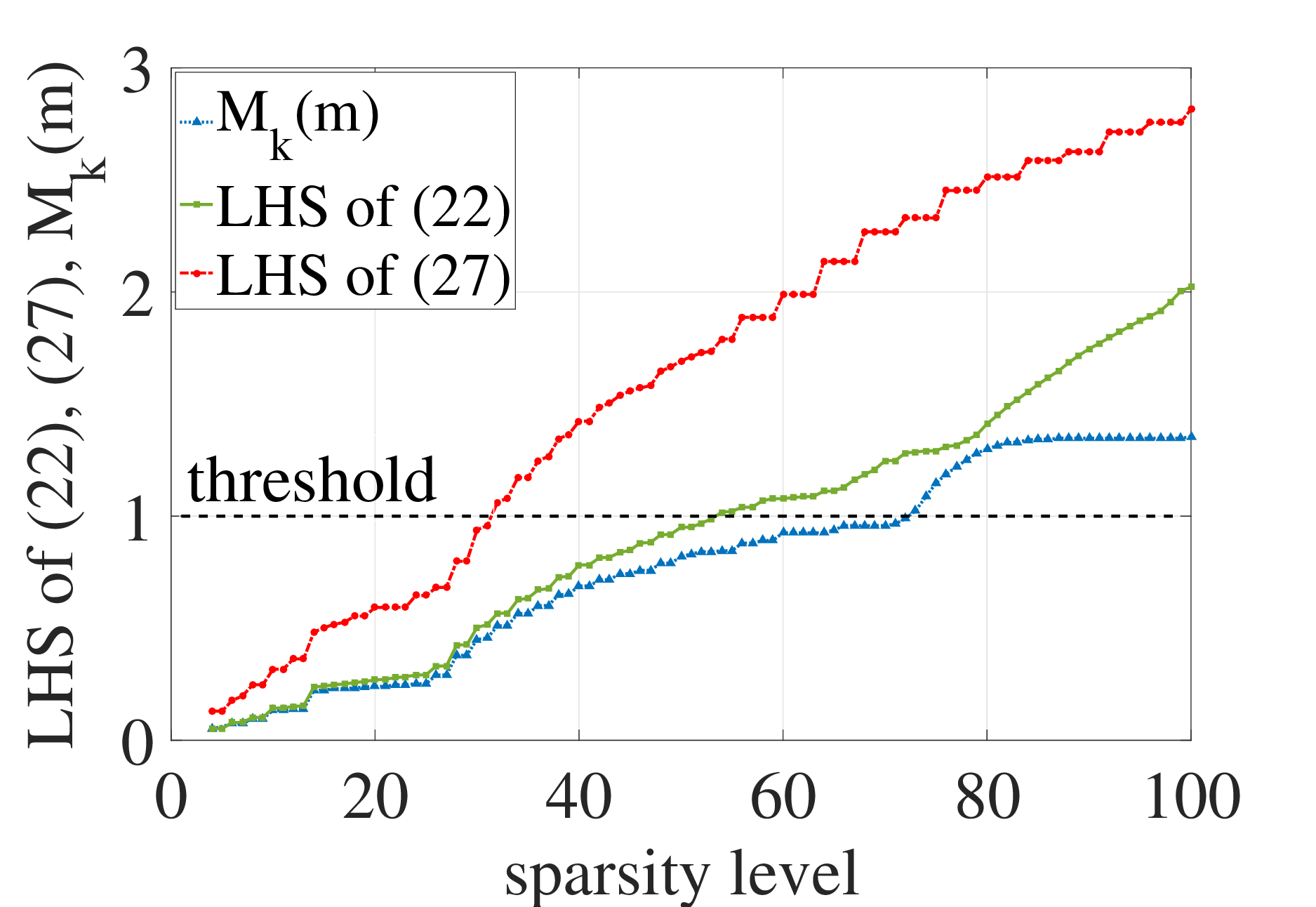}
        \caption{}
        \label{subfig:farey_mixture_bounds}
    \end{subfigure}
    \hfill
    \begin{subfigure}[b]{0.46\textwidth}
    \centering
        \includegraphics[width = \textwidth,height = 6cm]{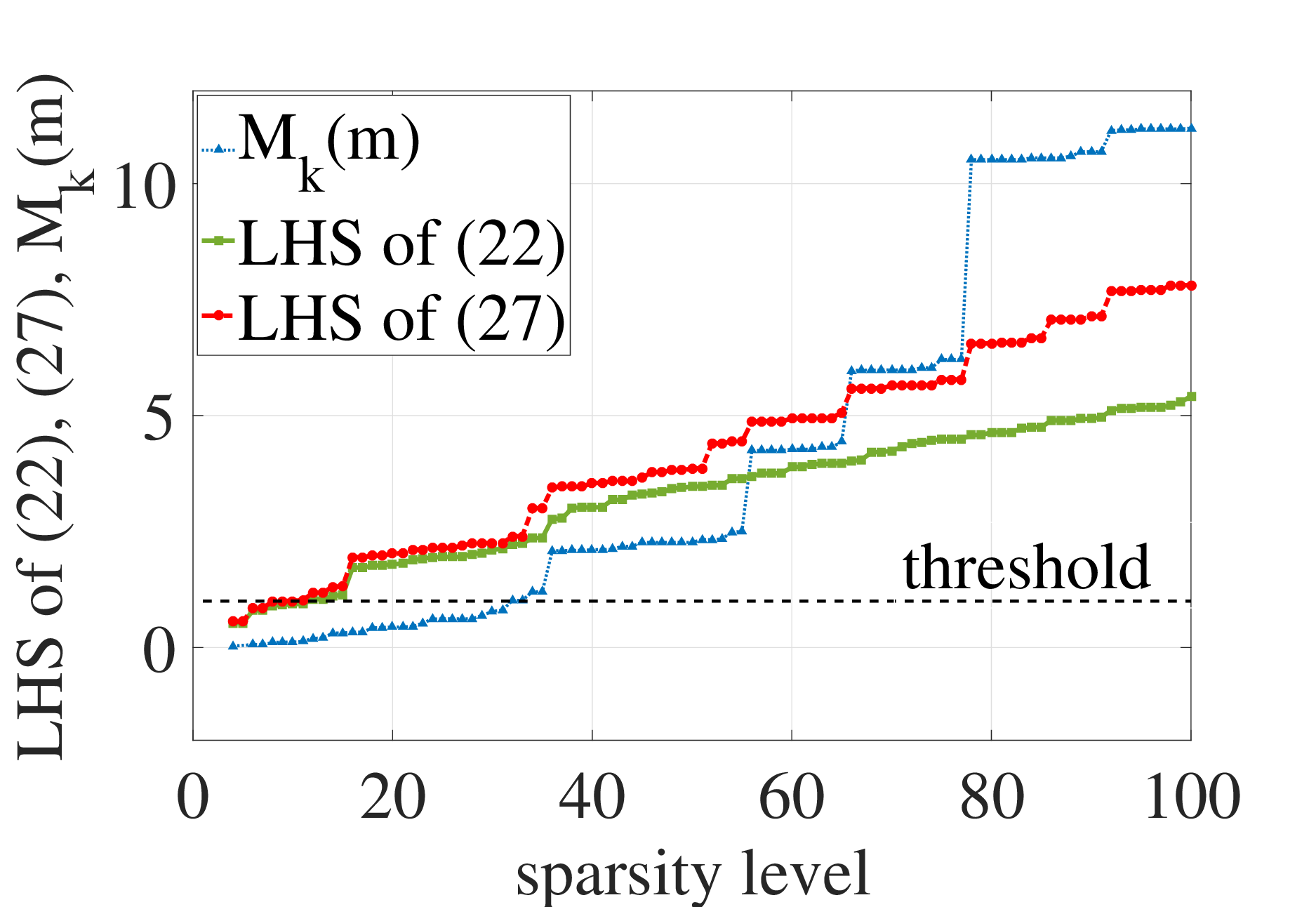}
    \caption{}
    \label{subfig:rpt_mixture_bounds}
    \end{subfigure}
    
   \begin{subfigure}[b]{0.46\textwidth}
   \centering
    \includegraphics[width = \textwidth,height = 6cm]{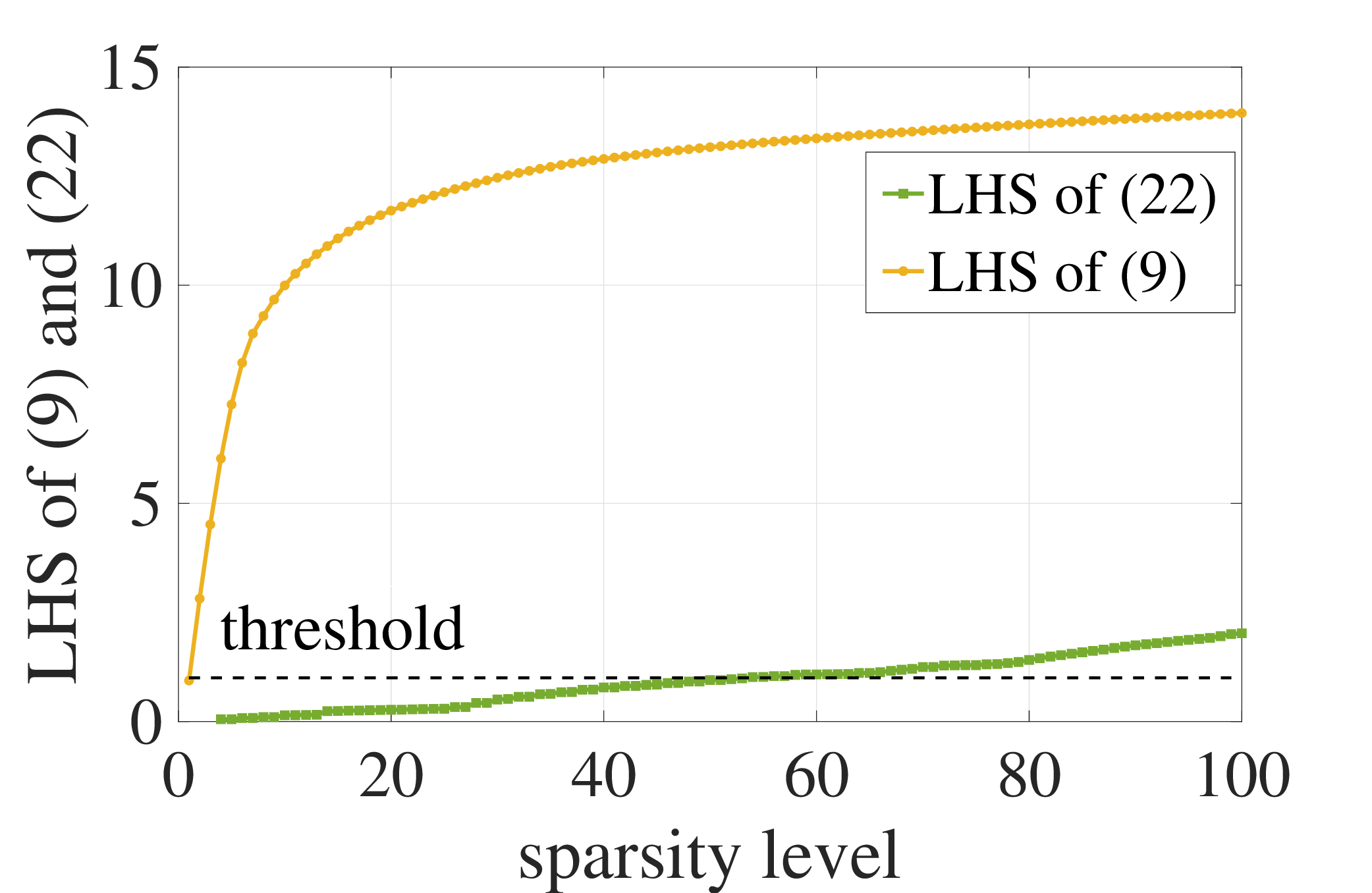}
    \caption{}
    \label{subfig:farey_mixture_bounds_old}
   \end{subfigure}
   \hfill
    \begin{subfigure}[b]{0.46\textwidth}
    \centering
    \includegraphics[width = \textwidth,height = 6cm]{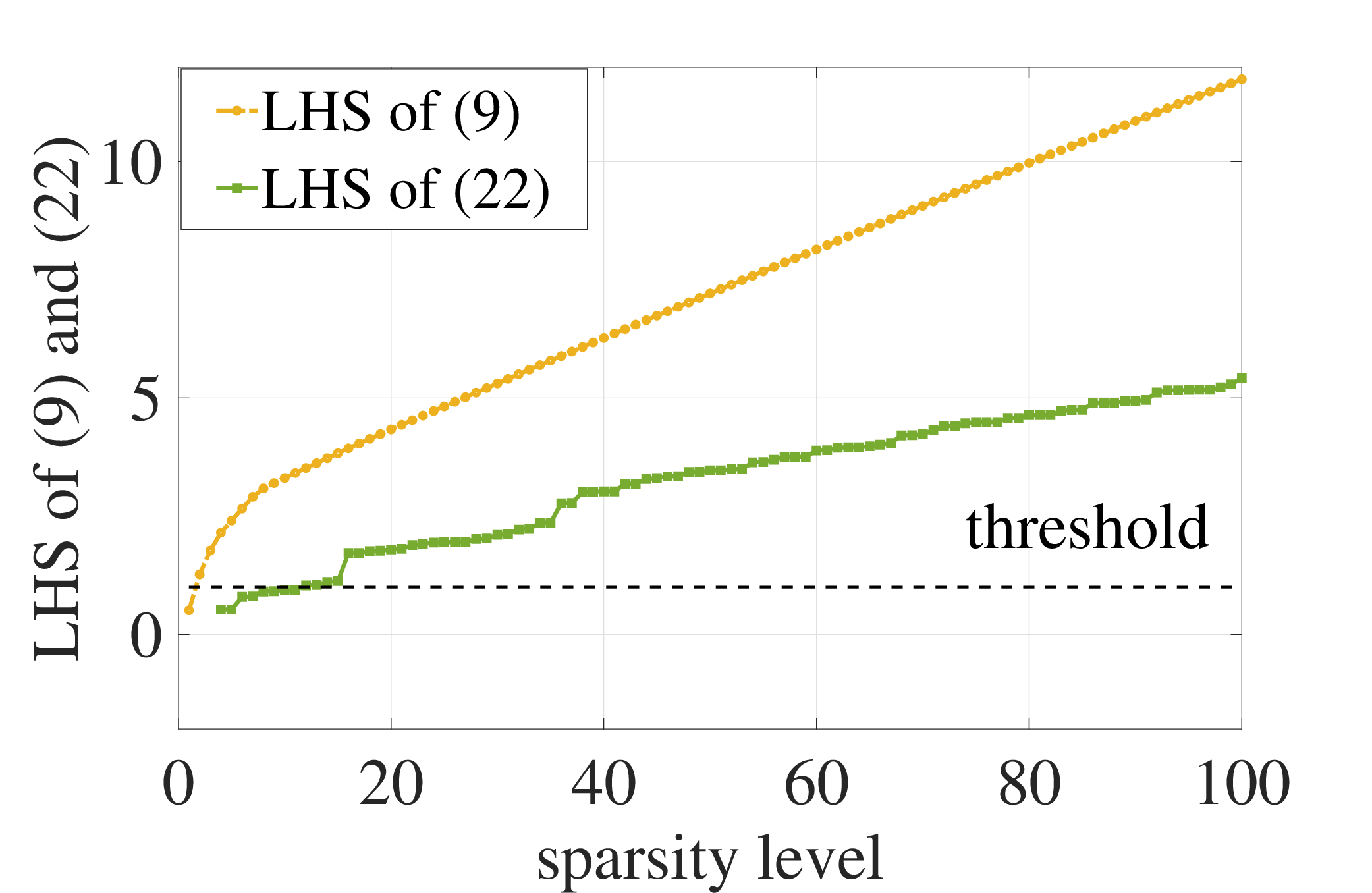}
    \caption{}
    \label{subfig:rpt_mixture_bounds_old}
    \end{subfigure}
    \caption{Comparison between $M_k(m)$ with $m=2$ and the conditions in \eqref{eq:guarantee_NPD_strong} and \eqref{eq:single_2_mixture_Qkm} for (a) the Farey dictionary and (b) the RPT dictionary. In addition (c) and (d)  show the improvement of \eqref{eq:guarantee_NPD_strong} over the bound in \eqref{eq:cumulative_coherence_guarantee} for the Farey and RPT dictionary, respectively. Here $P_{\max} = 100$ and $L = 1915$.}
    \label{fig:mixture_bounds}
\end{figure}
\begin{figure}
    \centering
    \begin{subfigure}[b]{0.46\textwidth}
    \centering        \includegraphics[width = \textwidth,height = 6cm]{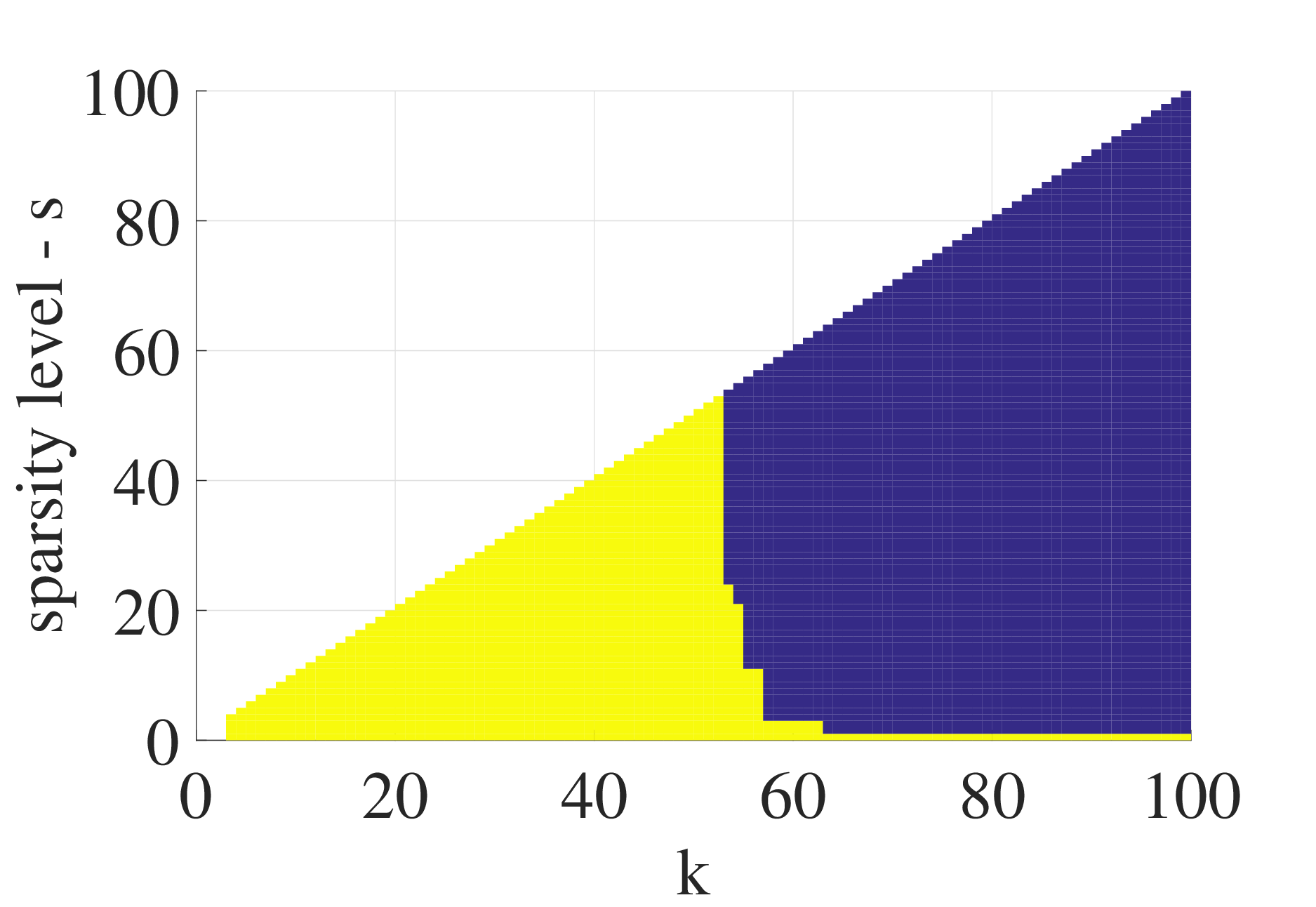}
        \caption{}
        \label{subfig:mixture_pt_dft}
    \end{subfigure}
    \hfill
    \begin{subfigure}[b]{0.46\textwidth}
    \centering
        \includegraphics[width = \textwidth,height = 6cm]{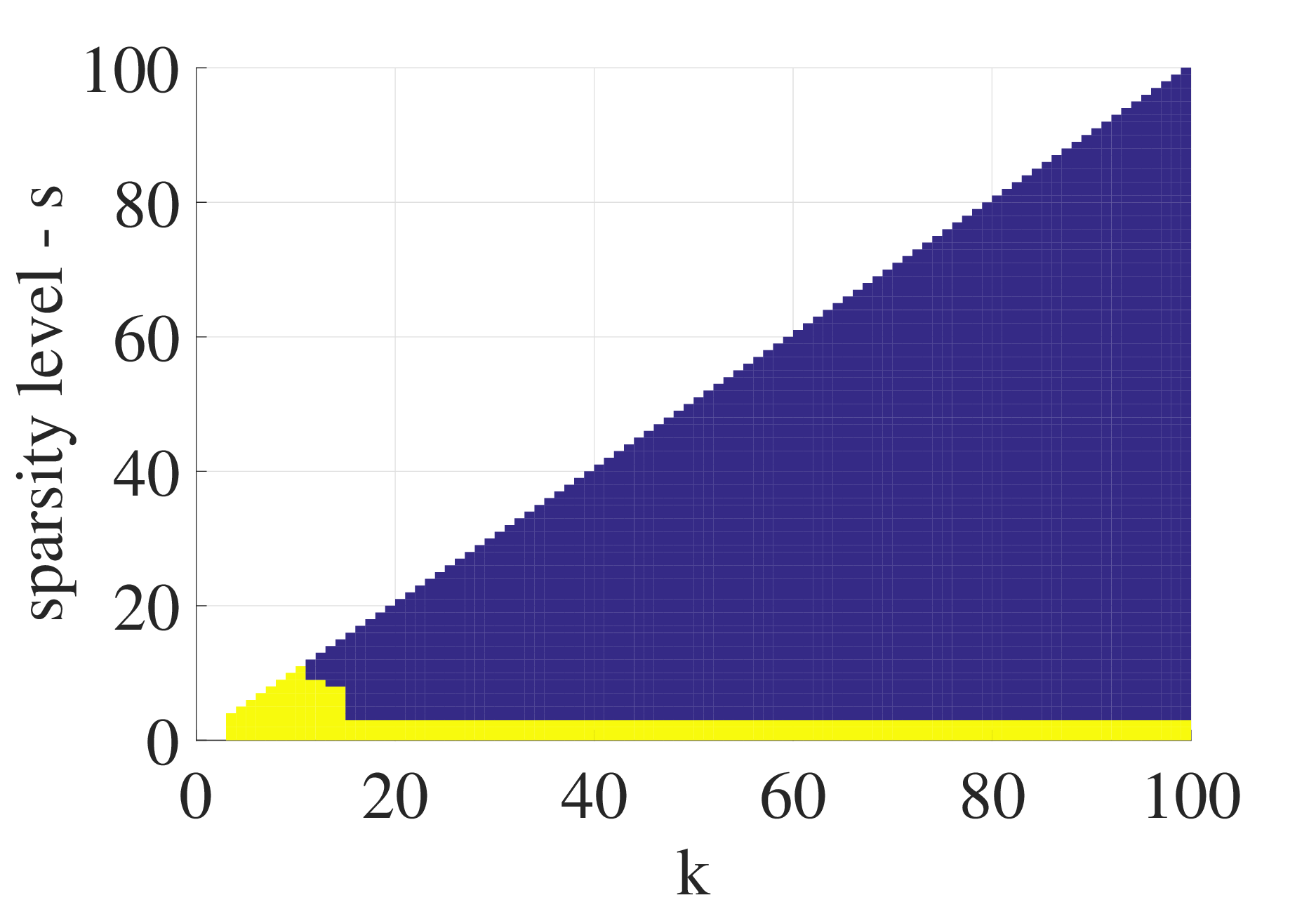}
    \caption{}
    \label{subfig:mixture_pt_rpt}
    \end{subfigure}
    \caption{Phase transition plots with respect to $k$ and sparsity level $s$ for periodic mixtures with $m = 2$, (a) for the Farey dictionary and (b) for the RPT dictionary when $P_{\max} = 100$ and $L = 1915$. The darker color indicates that the condition stated in \eqref{eq:guarantee_NPD_mixture_sparse} was not satisfied for that particular $k$ and $s$.
}
    \label{fig:mixture_pt}
\end{figure}
As shown in Fig. \ref{subfig:farey_mixture_erc}, $M_k\left(1\right) < 1$ for all $k \leq 100$ with the Farey dictionary, so the condition is met for all single period cases $p \in \mathbb{P}$. As a reminder, the sparsity level for a $p$-periodic signal under the NPDs is $\sum_{q|p} \phi\left(q\right) = p$. As $m$ increases, the achievability condition of the ERC condition given in Lemma \ref{lemma:mixture_general} ceases to hold at smaller values of $k$.
On the other hand, the bound is achievable over a smaller range of values of $k$ for the RPT dictionary as shown in Fig. \ref{subfig:rpt_mixture_erc}. 

Next, we examine the implications of Theorem \ref{thr:guarantee_upperbound_mixture_strong} and Corollary \ref{cor:single_2_mixture}, where we derived two alternative bounds on the recovery condition tailored to set $\mathbb{Q}_k\left(m\right)$ for $m = 2$. In Fig. \ref{fig:mixture_bounds}, 
we compare the new improved bounds in \eqref{eq:guarantee_NPD_strong} and \eqref{eq:single_2_mixture_Qkm} with $M_k\left(m\right)$ and the bound in \eqref{eq:cumulative_coherence_guarantee} for the Farey and RPT dictionary. We choose $P_{\max} = 100$ and $L = 1915$. As shown in Fig. \ref{subfig:farey_mixture_bounds} and Fig. \ref{subfig:rpt_mixture_bounds}, the conditions of Theorem \ref{thr:guarantee_upperbound_mixture_strong} and Corollary \ref{cor:single_2_mixture} are tight upper bounds on $M_k\left(m\right)$ when the left hand side (LHS) in \eqref{eq:guarantee_NPD_strong} and \eqref{eq:single_2_mixture_Qkm} are smaller than $1$. As shown in the proof in Appendix \ref{App:proof_strong_mixture} and Appendix \ref{App:proof_single_2_mixture}, these bounds are valid if $\nu_{k,m} < 1$ and $\max_{p_j\in\mathcal{T}}\nu_{p_j}+\sum_{p_j \in \mathcal{T}}\zeta_{p_j} - \min_{p_j\in\mathcal{T}}\zeta_{p_j}<1$ in \eqref{eq:guarantee_NPD_strong} and \eqref{eq:single_2_mixture_Qkm}, respectively. Therefore, for some values of $k$, the bounds are invalid as can be seen in Fig. \ref{subfig:rpt_mixture_bounds}. On the other hand, \eqref{eq:guarantee_NPD_strong} provides significant improvement over the stringent condition in \eqref{eq:cumulative_coherence_guarantee}, which is only achievable for sparsity level $k = 1$ for both the RPT and Farey dictionary, as shown in Fig. \ref{subfig:farey_mixture_bounds_old} and Fig. \ref{subfig:rpt_mixture_bounds_old}.

Furthermore, Table \ref{tab:runtime} shows the required run times to compute and verify the ERC-based bound, i.e., $M_k\left(m\right)$, the bound in Theorem \ref{thr:guarantee_upperbound_mixture_strong} based on NPA and NPI, and the computationally efficient bound for both the Farey and RPT dictionary and all the elements in the set $\mathbb{Q}_k\left(m\right)$, when $m = 2$, and $1 \leq k \leq 100$.
\begin{table}
    \centering
    \caption{Run time to establish the exact recovery conditions in Theorem 1, Corollary 3, and $M_k\left(m\right)$ for $m = 2$ and $1 \leq k \leq 100$ for the Farey and RPT dictionaries.}
    \label{tab:runtime}
    \begin{tabular}{ccc}
        Method & Run time (seconds) & Run time (seconds)\\
        & Farey dictionary &RPT dictionary\\ \hline
        $M_k\left(m\right)$ & 521.84  & 234.84\\ \
        LHS of \eqref{eq:guarantee_NPD_strong} & 216.77 & 121.54\\
        LHS of \eqref{eq:single_2_mixture_Qkm} & 31.18 & 24.64
    \end{tabular}
\end{table}
Next, we investigate the implications of Theorem \ref{thr:guarantee_mixture_sparse_based}, which gives a refined recovery bound when the induced sparsity level is assumed to be $s$ for some $s \leq k$. Fig. \ref{fig:mixture_pt} shows phase transition plots with respect to different values of $k$  and sparsity level $s \leq k$. Fig. \ref{subfig:mixture_pt_dft} shows that for $k<54$, the condition \eqref{eq:guarantee_NPD_mixture_sparse} is met for all $s\leq k$ with the Farey dictionary, which is in agreement with the results of Theorem \ref{thr:guarantee_upperbound_mixture_strong} depicted in Fig. \ref{subfig:farey_mixture_bounds}. Furthermore, it shows that for $k\geq54$, BP and OMP can recover the exact sparse vector for some $s\leq k$, and always, when $s = 1$ in agreement with the results in Fig. \ref{subfig:farey_mixture_erc}. Similarly, Fig. \ref{subfig:mixture_pt_rpt} shows the phase transition plot for the RPT dictionary. It shows that for $k<12$, the condition in \eqref{eq:guarantee_NPD_mixture_sparse} is met for all $s\leq k$, and for all $k\leq 100$, BP and OMP can recover the exact sparse vector of a periodic mixture with $m = 2$, and $s \leq 3$.
Lastly, we validate these results numerically, by recovering the sparse representations of periodic mixtures using the BP and OMP techniques. To this end, we generate $100$ periodic mixtures for each element in $\mathbb{Q}_k\left(2\right)$, and use both methods to recover the sparse vectors in \eqref{eq:NPD_model}. We choose the RPT dictionary with $P_{\max} = 20$ and $L = 100$. To evaluate performance, we compute the root mean square error (RMSE) between the recovered $\hat{\bx}$ and the original vector $\bx$ that is the solution to the restricted mean square error problem.
Fig. \ref{subfig:guarantee_bound_rpt} and \ref{subfig:guarantee_bound_dft} show the phase transitions with respect to $k$ and $s$ for the RPT dictionary and the Farey dictionary with $P_{\max} = 20$ and $L = 100$, respectively. As shown, the sufficient condition to recover the exact sparse vector holds for $k\leq 5$ in the case of the RPT dictionary and for $k \leq 11$ for the Farey dictionary. Fig. \ref{subfig:numerical_rpt} and Fig. \ref{subfig:numerical_dft} show the average RMSE versus $k$ and verify perfect recovery when $k\leq 5$ for the RPT dictionary and $k \leq 11$ for the Farey dictionary for both BP and OMP.
\begin{figure}
    \centering
    \begin{subfigure}[b]{0.46\textwidth}
    \centering
        \includegraphics[width = \textwidth,height = 6cm]{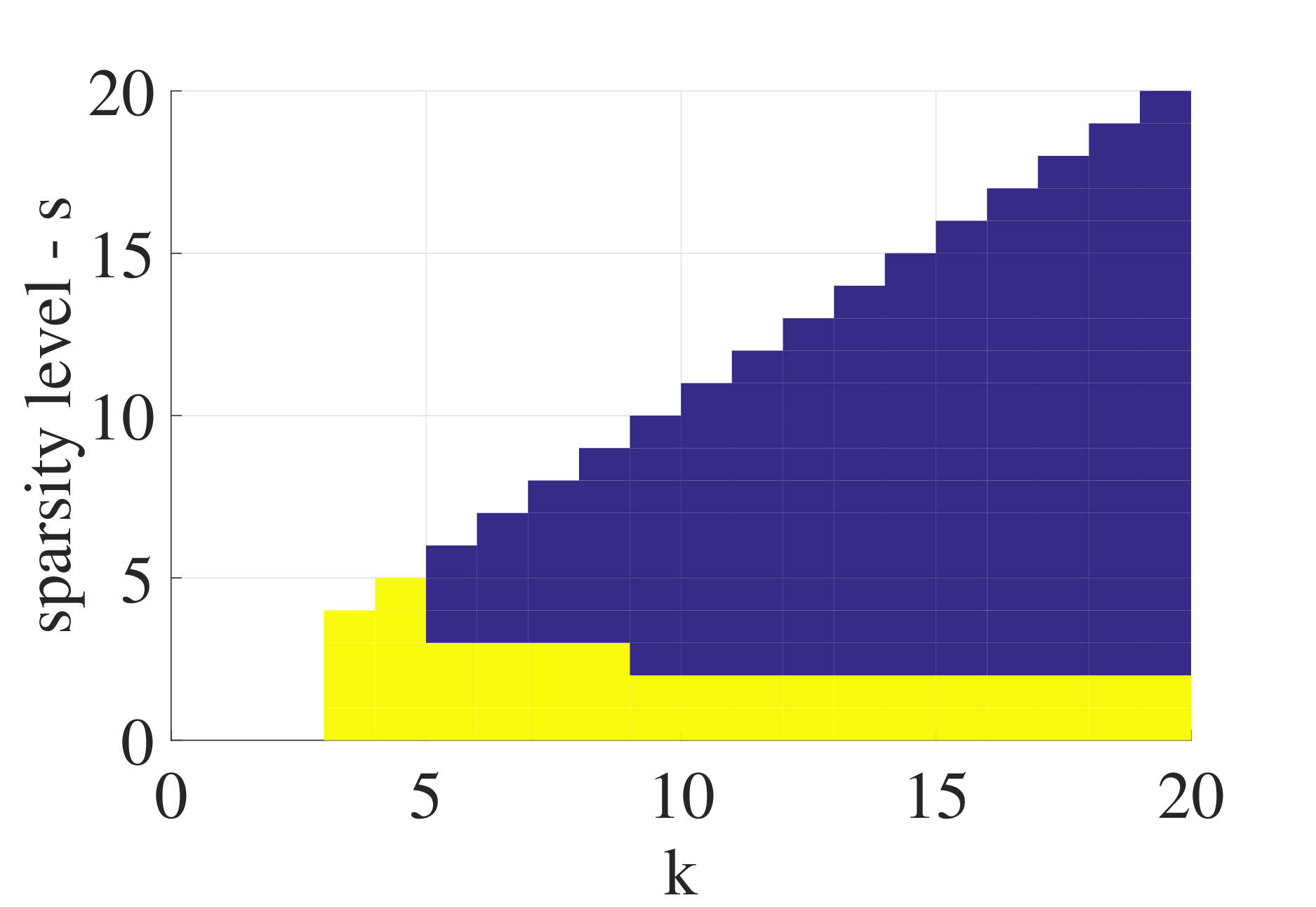}
        \caption{}
        \label{subfig:guarantee_bound_rpt}
    \end{subfigure}
    \hfill
    \begin{subfigure}[b]{0.46\textwidth}
    \centering
        \includegraphics[width = \textwidth,height = 6cm]{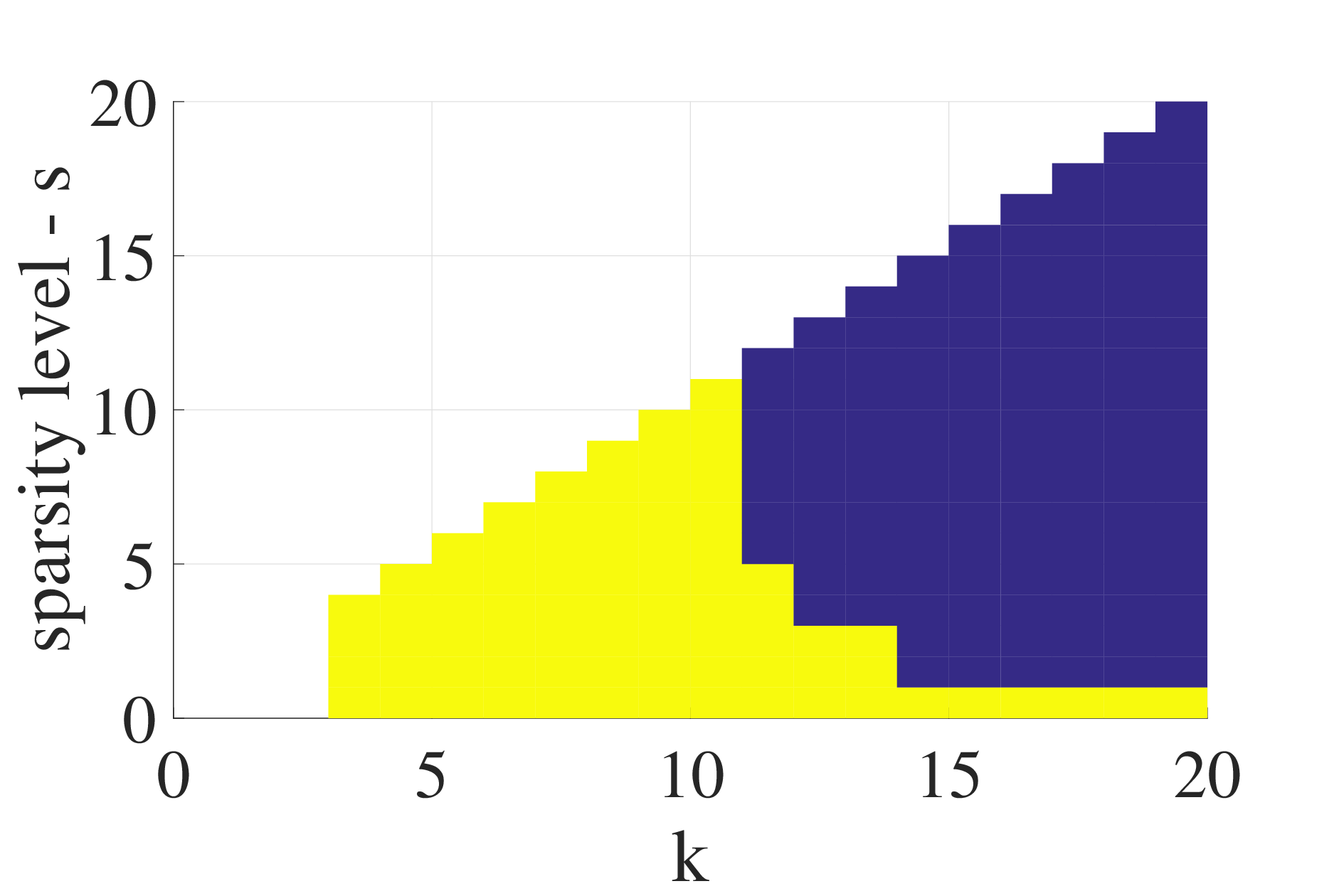}
    \caption{}
    \label{subfig:guarantee_bound_dft}
    \end{subfigure}
     \begin{subfigure}[b]{0.46\textwidth}
    \centering
        \includegraphics[width = \textwidth,height = 6cm]{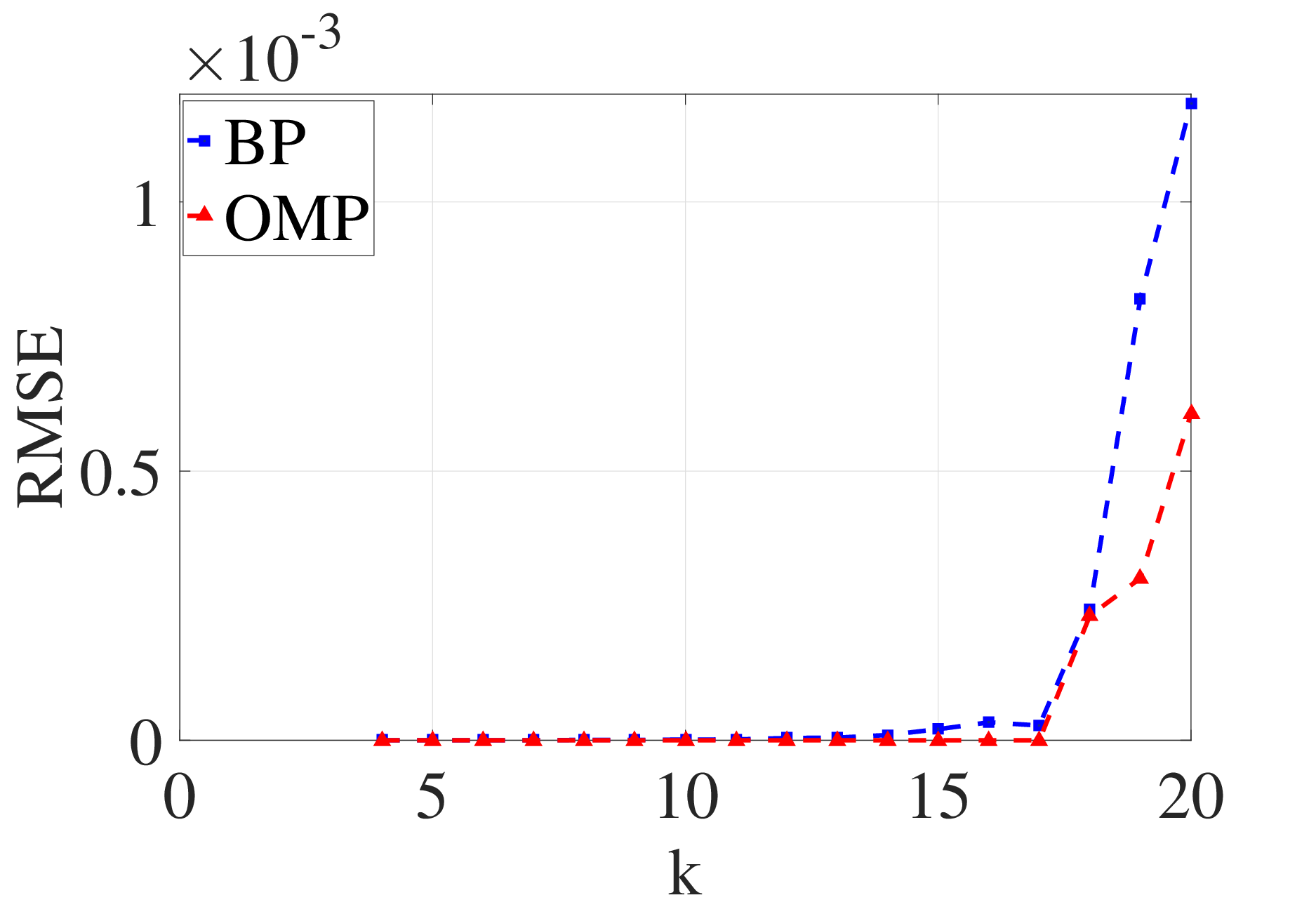}
    \caption{}
    \label{subfig:numerical_rpt}
    \end{subfigure}
    \hfill
    \begin{subfigure}[b]{0.46\textwidth}
    \centering
        \includegraphics[width = \textwidth,height = 6cm]{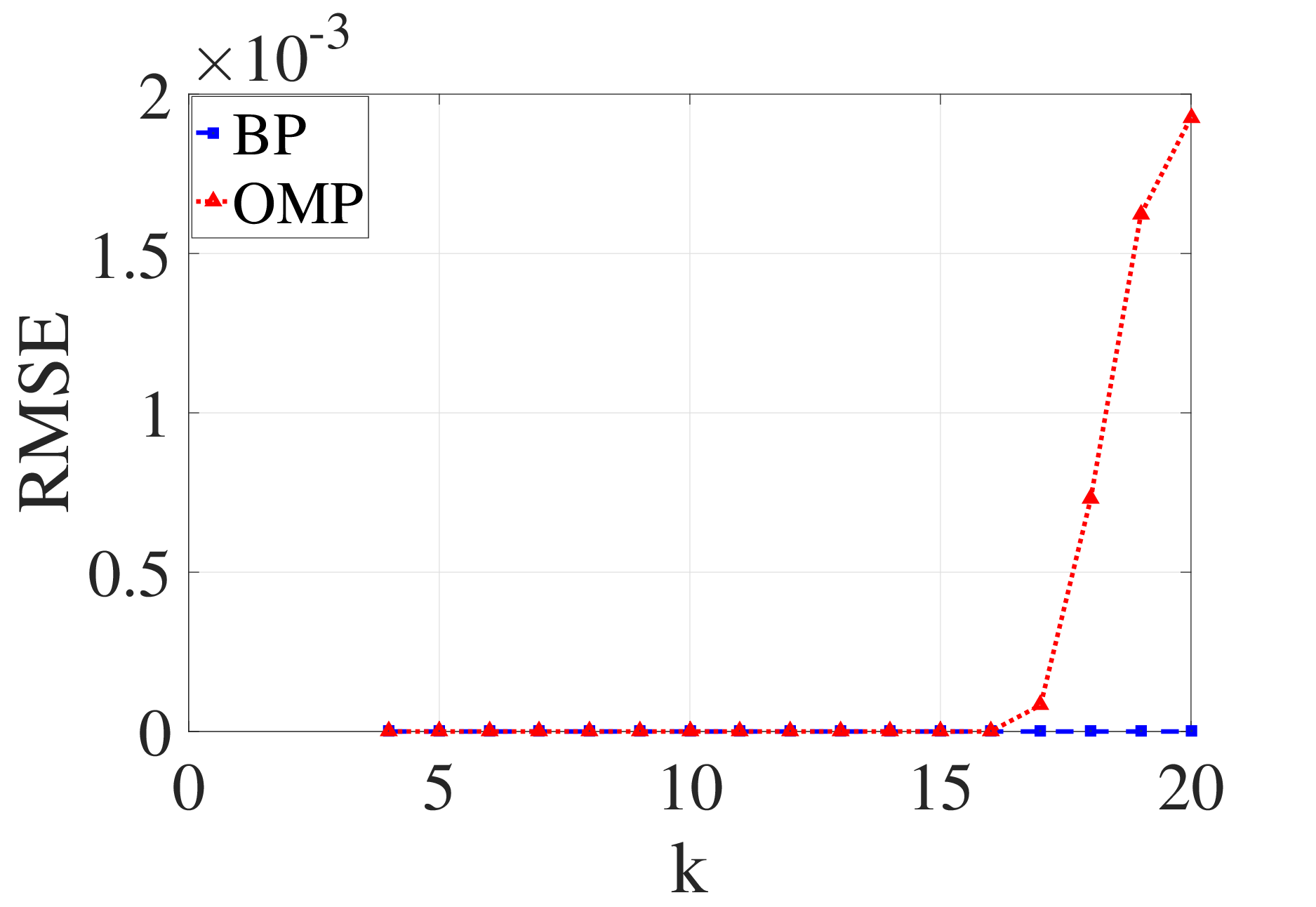}
    \caption{}
    \label{subfig:numerical_dft}
    \end{subfigure}
    \caption{Phase transition plots with respect to $k$ and $s$ for (a) the RPT dictionary (left above) and (b) the Farey dictionary (right above), with $P_{\max} = 20$ and $L = 100$. Average RMSE for BP and OMP algorithms (c) for the RPT dictionary (left below) and (d) the Farey dictionary (right below).}
    \label{fig:numerical}
\end{figure}

\subsection{Exact support recovery condition in the presence of noise} \label{subsec:result_noisy}
\subsubsection{Bounded noise} To validate the results of Theorem \ref{thr:noise_bounded_general_zeta_nu} and \ref{thr:noise_bounded_general_s2m},
we choose the RPT dictionary with $P_{\max} = 20$, $L = 100$ and set $\epsilon = 0.5$. We consider a periodic signal with period $4$. We verified the conditions \eqref{eq:guarantee_NPD_strong} and \eqref{eq:single_2_mixture} and ensured they hold. Then, we randomly generated $2000$ sparse signals that are supported only on $S_{4}$, and ensured that the absolute values of the nonzero coefficients $|x_i|$ are greater than some threshold $\gamma$ by adding (or subtracting) $\gamma$ to (from) $x_i$. Then, based on the model in \eqref{eq:NPD_model_noisy}, we constructed $2000$ periodic and noisy signals, and used the OMP algorithm to recover the underlying support. We increase $\gamma$ from 0 until it meets the condition in \eqref{eq:noise_bounded_s2m}. At each step, we calculate the RMSE and success rate. Recovery is deemed successful only if the recovered support set is identical to $S_4$. Fig. \ref{subfig:noise_bounded_rpt4_zeta_nu_success} and \ref{subfig:noise_bounded_rpt4_zeta_nu_rmse} show the success rate and RMSE, respectively. As shown, the RMSE is $0$ and success rate is $1$ for $\gamma$ greater than the right hand side (RHS) of \eqref{eq:noise_bounded_s2m} that is $1.21$. Note that, given the setting above, the RHS of \eqref{eq:omp_npd_noise_bnd_zeta_nu} is $6.72$, suggesting a more restrictive condition. We remark that for a periodic signal with period $4$ that induces a sparsity level  $k = 4$, the condition in \cite[Theorem 1]{cai2011orthogonal} requires $\mu<\frac{1}{2k-1}=0.1429$. However, for the described dictionary $\mu = 0.5285$. Therefore, the existing conditions are of little use in the context of NPDs.

\begin{figure}
    \centering
    \begin{subfigure}[b]{0.46\textwidth}
    \centering
        \includegraphics[width = \textwidth,height = 6cm]{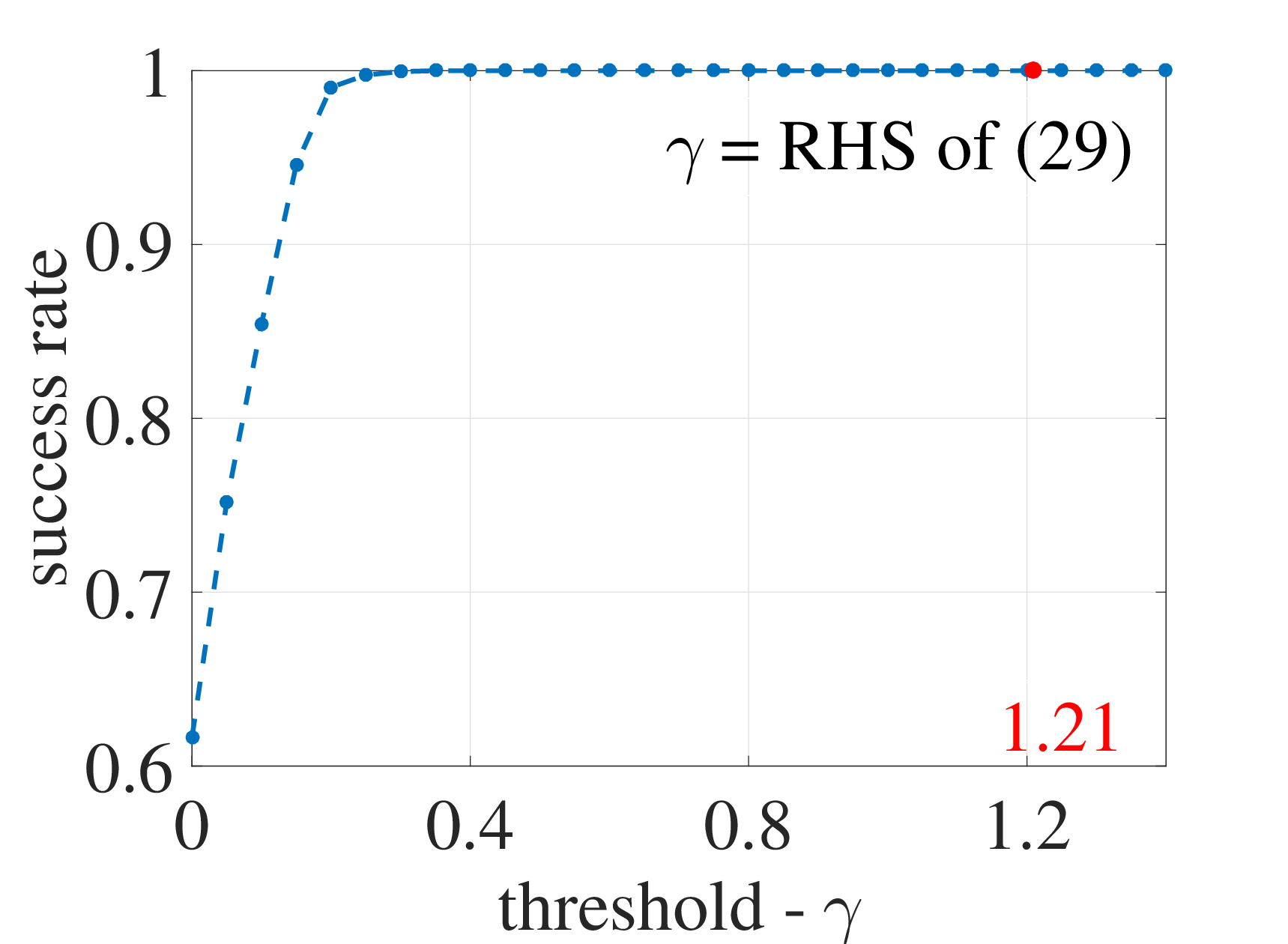}
        \caption{}
        \label{subfig:noise_bounded_rpt4_zeta_nu_success}
    \end{subfigure}
    \hfill
    \begin{subfigure}[b]{0.46\textwidth}
    \centering
        \includegraphics[width = \textwidth,height = 6cm]{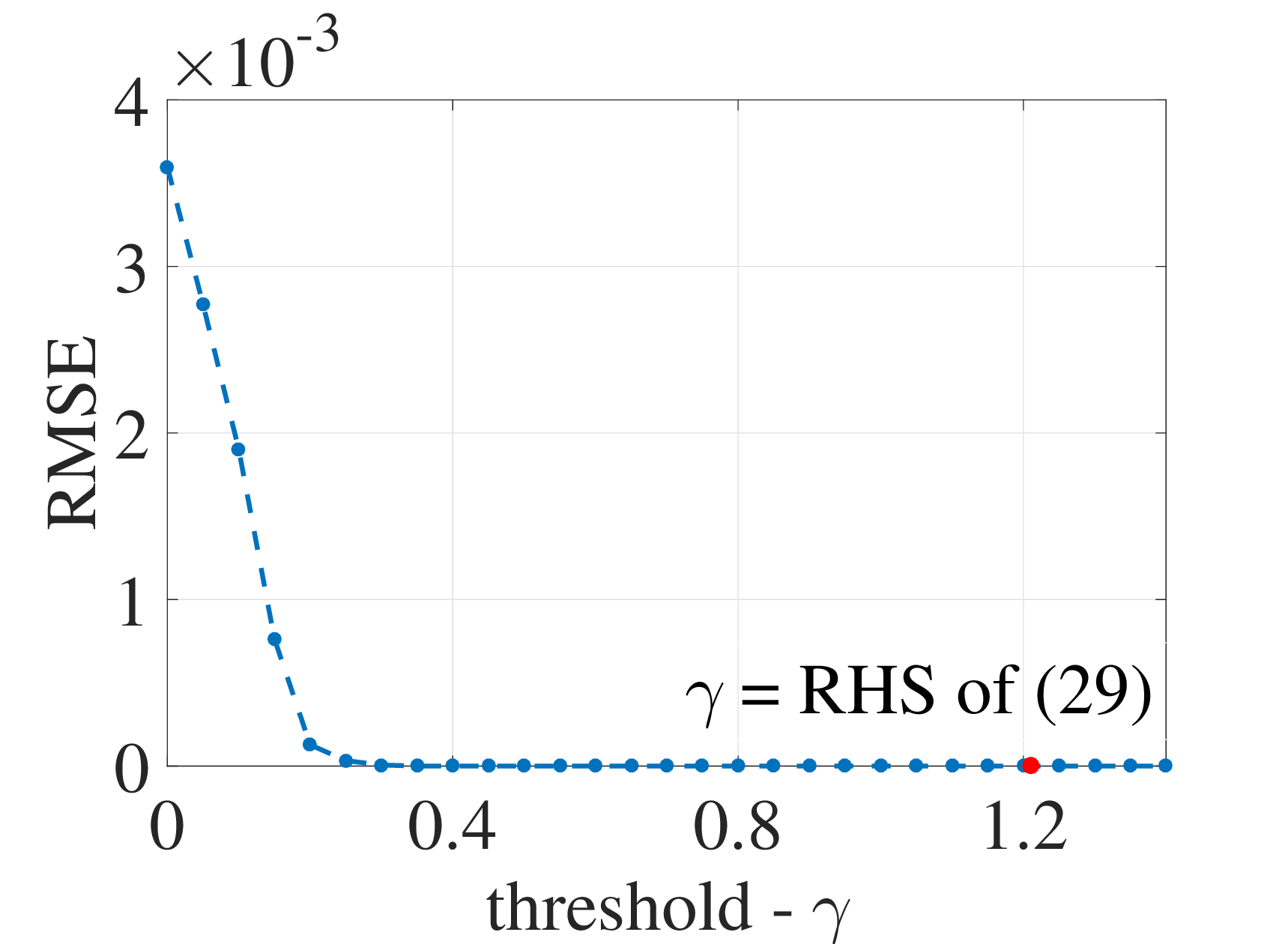}
    \caption{}
    \label{subfig:noise_bounded_rpt4_zeta_nu_rmse}
    \end{subfigure}
    \caption{(a) The success rate for the OMP algorithm in the presence of the $\ell_2$-norm bounded noise. (b) The RMSE. Here, $P_{\max} = 20$ and $L = 100$ and the RPT dictionary is used.}
    \label{fig:noise_bounded_rpt4}
\end{figure}

\subsubsection{Gaussian noise}
\textcolor{black}{In this section, we validate the results of Theorem \ref{thr:noise_gaussian_s2m}. For this purpose, we choose the RPT dictionary, with $P_{\max} = 20$ and $L = 100$. We randomly generate periodic signals with period $4$ by first constructing the sparse vector $\bx$ supported on $S_4$ such that the nonzero coefficients obey
\begin{equation}
\begin{aligned}
    \label{eq:gaussian_coeff}
    &|x_i| \geq\\ 
    &\alpha\frac{2\sigma \sqrt{L + 2\sqrt{L\log L}}}{\left(1-2\sum_{p_j\in \mathcal{T}} \zeta_{p_j} - \hat{\nu}_{p} + \check{\zeta}_p\right)\hspace{-3pt}\left(1-\sum_{p_j\in \mathcal{T}} \zeta_{p_j}-\hat{\nu}_p + \check{\zeta}_p\right)}\:
    \end{aligned}
\end{equation}
for some $0 \leq \alpha \leq 1$. For $\alpha = 1$, the condition agrees with that 
in Theorem \ref{thr:noise_gaussian_s2m}. We gradually increase $\alpha$ by a step-size of $0.01$ and generate $2000$ periodic signals at each step based on the model in \eqref{eq:NPD_model_noisy}. Then, we use the OMP algorithm to recover the support set of $\bx$. Recovery is successful if the support set is exactly recovered, 
otherwise it counts as an error. In addition, we recover the sparse vector $\bx$ and calculate the RMSE, shown in Fig. \ref{fig:noise_gaussian_s2m}. When the nonzero coefficients obey the condition in \eqref{eq:gaussian_coeff_s2m}, the OMP algorithm was able to recover the exact support with probability greater than $1 - 1/L$. 
\begin{figure}
    \centering
    \begin{subfigure}[b]{0.46\textwidth}
    \centering
        \includegraphics[width = \textwidth,height = 6cm]{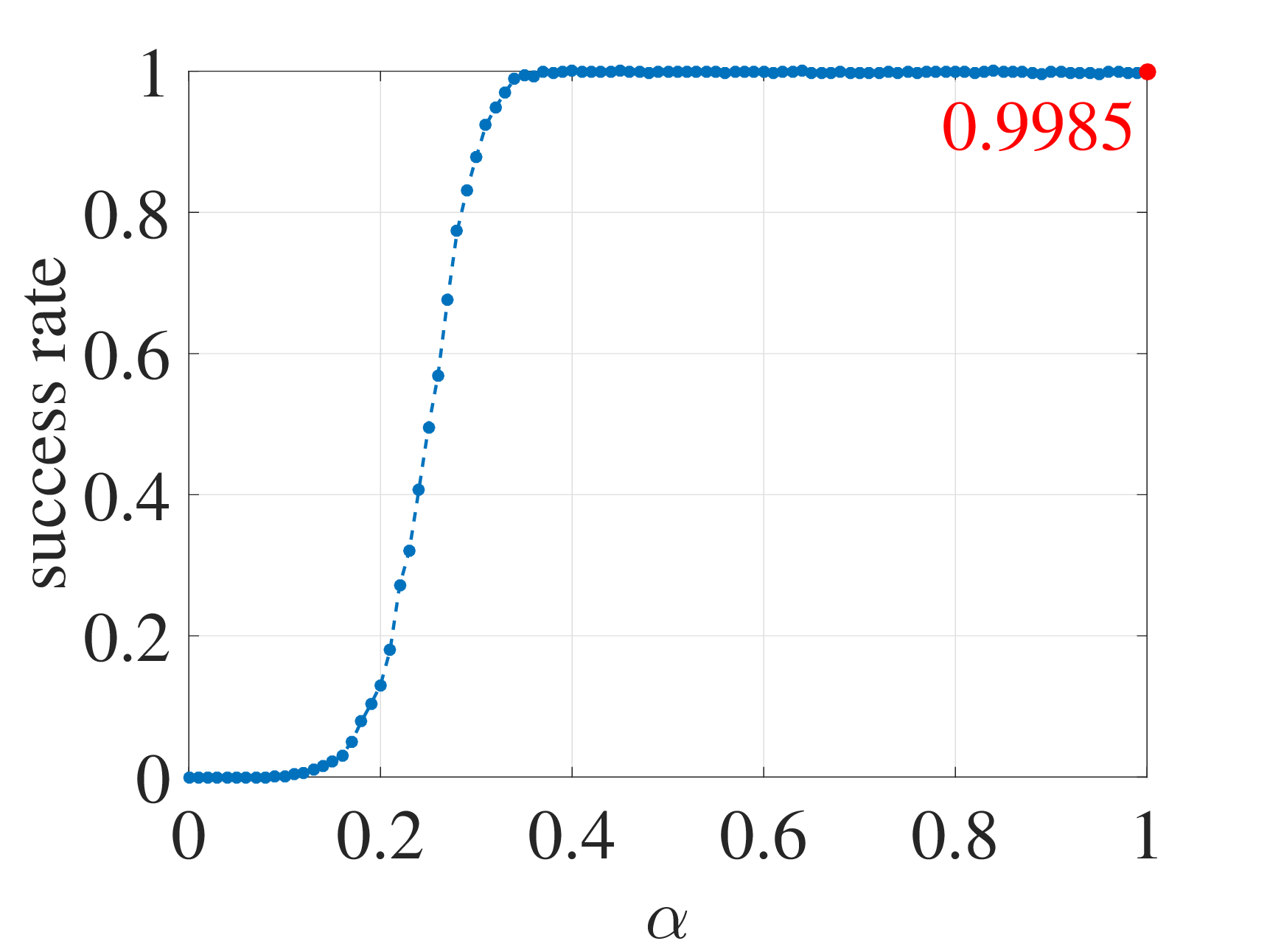}
        \caption{}
        \label{subfig:noise_gaussian_successrate}
    \end{subfigure}
    \hfill
     \begin{subfigure}[b]{0.46\textwidth}
    \centering
        \includegraphics[width = \textwidth,height = 6cm]{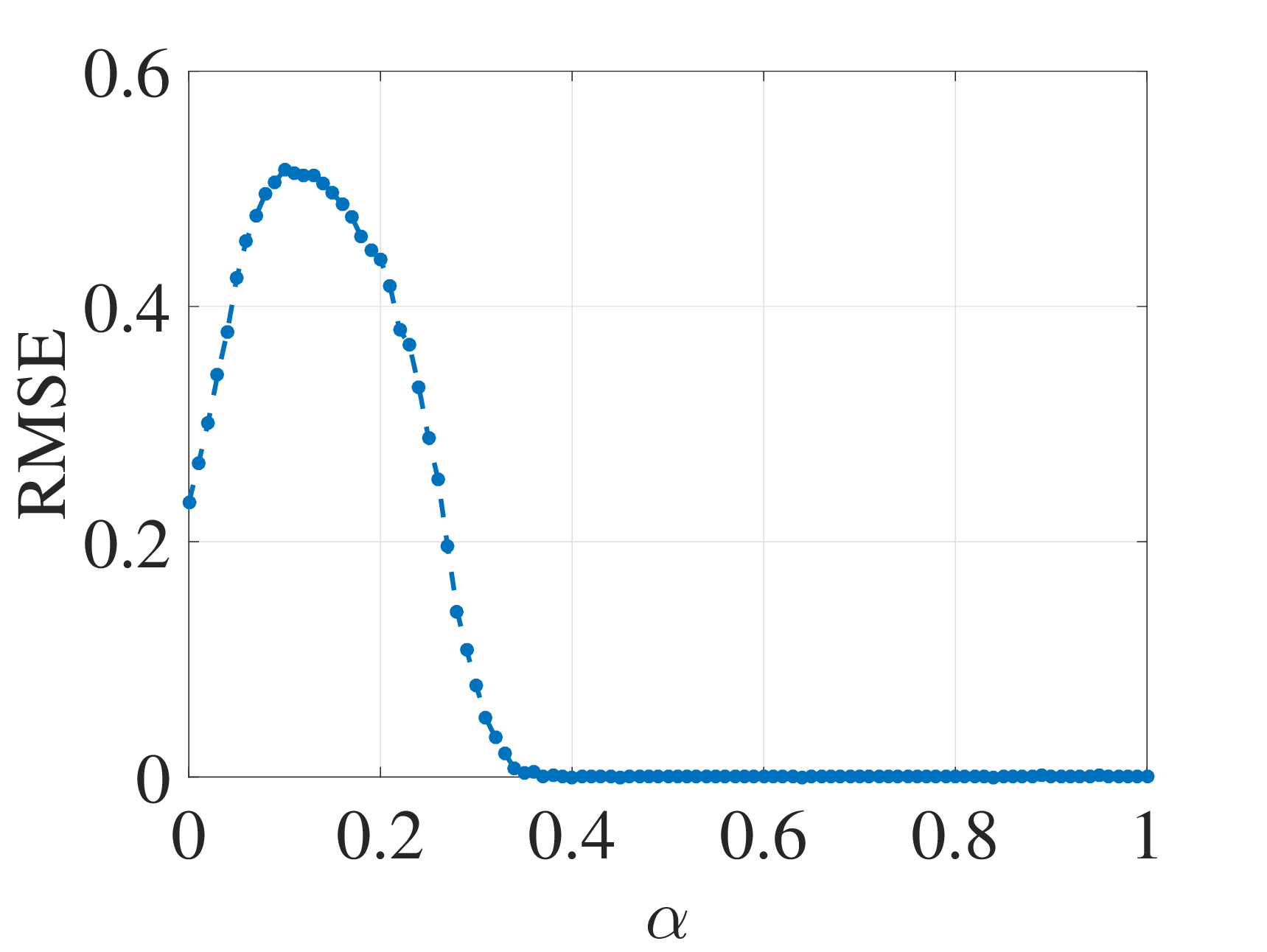}
        \caption{}
        \label{subfig:noise_gaussian_rmse}
    \end{subfigure}
    \caption{(a) Success rate for the OMP algorithm in the presence of Gaussian noise. (b) The RMSE. Here, $P_{\max} = 20$ and $L = 100$ and the RPT dictionary is used.}
    \label{fig:noise_gaussian_s2m}
    
\end{figure}}


\section{Discussion and Conclusion}
\label{sec:discussion}
Existing support recovery bounds such as \eqref{eq:cumulative_coherence_guarantee} are fairly tight for random and unstructured dictionaries. However, they fall short of predicting meaningful achievability regions with NPDs since they ignore their special structure. In this paper, we obtained improved recovery conditions by accounting for special properties intrinsic to NPDs, such as the Euler and the LCM properties. Leveraging this structure, we derived sufficient conditions for exact recovery of periodic mixtures admitting sparse representations in an NPD, and in turn of the underlying hidden periods. Our findings reveal that a family of NPDs that span the Ramanujan subspaces -- instances of nested periodic subspaces --  such as the Farey and the RPT dictionaries, satisfy the newly derived conditions for a wide range of the sparsity level $k$. Numerical results suggest that the Farey dictionary could be a better choice than the RPT dictionary with regard to meeting the sufficient conditions for exact support recovery, albeit the latter 
has lower computational complexity.

While the derived conditions improve significantly upon generic sparse recovery bounds, the numerical results have shown that a gap remains between the theoretic achievability bounds and the actual performance of recovery algorithms. Closing this gap is an avenue for future research. 
Moreover, a study of the block sparse recovery problem with NPDs for block sparse periodic signals is a subject of future work.

\bibliographystyle{IEEEbib}
\bibliography{bibfile}

\newpage
\appendices
\section{} 
\label{App:General_ERC_NPD_mixture}
First, we restate the classical ERC condition from \cite[Theorem A]{tropp2004greed} that is essential to prove the results in this paper.
\begin{lemma}
\label{lem:original_ERC}
(\cite[Theorem  A]
{tropp2004greed}). Given a dictionary $\mathbf{\Phi}$ of size $L \times N$, and $N>>L$, a sufficient condition for OMP and BP to recover the sparsest representation of the input signal is that
\begin{equation}
    \max_{\mathbf{\psi}} \|\left(\mathbf{\Phi}^{H}_{\opt}\mathbf{\Phi}_{\opt}\right)^{-1}\mathbf{\Phi}_{\opt}^{H} \mathbf{\psi}\|_1 < 1 \:,
\end{equation}
where $\mathbf{\Phi}_{\opt}$ is the $L \times k$ matrix containing only the columns from $\mathbf{\Phi}$ that form the optimal representation of the signal, and $\mathbf{\psi}$ ranges over the atoms in $\mathbf{\Phi}$ that are not part of $\mathbf{\Phi}_{\opt}$.
\end{lemma}
Next, we state the following Lemma that connects the ERC condition and the period estimation problem and is used in the proof of Theorem \ref{thr:guarantee_upperbound_mixture_strong}, \ref{thr:single_2_mixture} and \ref{thr:guarantee_mixture_sparse_based}.

\begin{lemma} \label{lemma:mixture_general}
Suppose the matrix $\bK$ is an NPD of size $L\times N$, where $N = \sum_{p=1}^{P_\mathrm{max}} \phi\left(p\right)$. Let $\mathcal{T} \in 2^{\mathbb{P}}$ be a non-empty set that contains the hidden periods of a periodic mixture $\by$ and $S_{\mathcal{T}}$ the set that contains the support set of $\bx$ in \eqref{eq:NPD_model} as in Definition \ref{def:union_support_set}. If the following condition holds
\begin{equation}
    \label{eq:ERC_mixture_general}
    \begin{aligned}
        \|\left(\bK_{S_{\mathcal{T}}}^H\bK_{S_{\mathcal{T}}}\right)^{-1}\bK_{S_{\mathcal{T}}}^H\bK_{S_{\mathcal{T}}^c}\|_{1,1} <1,
    \end{aligned}
\end{equation}
then one can identify all the hidden periods in $\by$ by recovering the support set of $\bx$ using BP or OMP.
\end{lemma}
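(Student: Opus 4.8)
The plan is to recognize condition \eqref{eq:ERC_mixture_general} as Tropp's exact recovery condition (ERC) \cite{tropp2004greed} specialized to the sub-dictionary indexed by $S_{\mathcal{T}}$, and then to use the structure of NPDs to certify that the periodic mixture $\by$ lives entirely in the column space of $\bK_{S_{\mathcal{T}}}$, so that Tropp's theorem applies directly.

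First I would establish that $\by \in \spn(\bK_{S_{\mathcal{T}}})$. Writing $\by = \sum_{i=1}^m y_i$ as in \eqref{eq:mixture_definition} with $y_i$ of period $p_i \in \mathcal{T}$, each component lies in $\mathcal{V}_{p_i} = \bigoplus_{q \mid p_i} \mathcal{S}_q$, whose spanning atoms are indexed by $S_{p_i}$; this is the Euler structure together with the spanning property of NPMs \cite[Lemmas 1,3]{tenneti2015nested},\cite{vaidyanathan2014ramanujanI}. Since $S_{p_i} \subseteq S_{\mathcal{T}}$ by Definition \ref{def:union_support_set}, summing over $i$ gives $\by \in \spn(\bK_{S_{\mathcal{T}}})$, so there is a coefficient vector $\bx$ with $\supp(\bx) \subseteq S_{\mathcal{T}}$ realizing \eqref{eq:NPD_model}. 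The linear independence of the periodically extended Ramanujan submatrices then makes $\bK_{S_{\mathcal{T}}}$ full column rank, so $\bK_{S_{\mathcal{T}}}^H\bK_{S_{\mathcal{T}}}$ is invertible (the expression in \eqref{eq:ERC_mixture_general} is well-defined) and the representation of $\by$ over $S_{\mathcal{T}}$ is unique.

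Next I would rewrite the left-hand side of \eqref{eq:ERC_mixture_general} column by column. By the definition of the $\ell_1$-induced norm $\|\cdot\|_{1,1}$ as the maximum absolute column sum, and of the pseudo-inverse $\bK_{S_{\mathcal{T}}}^{\dagger} = (\bK_{S_{\mathcal{T}}}^H\bK_{S_{\mathcal{T}}})^{-1}\bK_{S_{\mathcal{T}}}^H$, the $i$-th column of $(\bK_{S_{\mathcal{T}}}^H\bK_{S_{\mathcal{T}}})^{-1}\bK_{S_{\mathcal{T}}}^H\bK_{S_{\mathcal{T}}^c}$ equals $\bK_{S_{\mathcal{T}}}^{\dagger}\bk_i$, so \eqref{eq:ERC_mixture_general} reads $\max_{i \in S_{\mathcal{T}}^c} \|\bK_{S_{\mathcal{T}}}^{\dagger}\bk_i\|_1 < 1$. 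This is exactly the ERC for the index set $S_{\mathcal{T}}$. Invoking Tropp's result \cite{tropp2004greed}, under this condition OMP applied to a signal in $\spn(\bK_{S_{\mathcal{T}}})$ never selects an atom outside $S_{\mathcal{T}}$ and drives the residual to zero, while BP has a unique minimizer supported within $S_{\mathcal{T}}$; by the full column rank of $\bK_{S_{\mathcal{T}}}$ both coincide with the true $\bx$, so its exact support is recovered. Finally, the LCM property \cite{tenneti2016unified} lets me read off the hidden periods from the recovered support, since the active index classes $I_q$ determine the constituent periods and hence identify $\mathcal{T}$.

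I expect the main obstacle to be the first step, namely rigorously certifying that $\by$ lies in, and is uniquely represented over, the column span of $\bK_{S_{\mathcal{T}}}$ after periodic extension to length $L$; this is where the Euler structure, the linear independence of the extended submatrices, and the containment $S_{p_i} \subseteq S_{\mathcal{T}}$ must be combined. Once this is in place, the algebraic identification of \eqref{eq:ERC_mixture_general} with Tropp's ERC and the appeal to \cite{tropp2004greed} are essentially bookkeeping.
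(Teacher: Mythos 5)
Your overall route is the same as the paper's: interpret \eqref{eq:ERC_mixture_general} as Tropp's ERC for the index set $S_{\mathcal{T}}$, use the NPD structure to place $\by$ in the column span of $\bK_{S_{\mathcal{T}}}$, and conclude recovery for both BP and OMP. However, there is a genuine gap at exactly the step you dismiss as ``bookkeeping.'' Tropp's theorem, in the form the paper invokes (see \eqref{eq:ERC_tropp}), guarantees recovery when the ERC holds for the \emph{exact} support $\tilde{S}$ of the representation. Here $\tilde{S}$ can be a strict subset of $S_{\mathcal{T}}$ (a mixture need not excite all $\phi(q)$ atoms of every divisor subspace), and the ERC is not monotone under set inclusion, so \eqref{eq:ERC_mixture_general} does \emph{not} imply the ERC for $\tilde{S}$. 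Consequently you cannot apply Tropp off the shelf to $S_{\mathcal{T}}$: the statements you assert (``OMP applied to a signal in $\spn(\bK_{S_{\mathcal{T}}})$ never selects an atom outside $S_{\mathcal{T}}$ and drives the residual to zero,'' ``BP has a unique minimizer supported within $S_{\mathcal{T}}$'') are precisely what has to be proved, and proving them is the bulk of the paper's argument. The paper handles BP by contradiction over three cases for the support $S'$ of a putative alternative minimizer ($S'$ straddling $S_{\mathcal{T}}$ and $S_{\mathcal{T}}^c$, $S'$ disjoint from $S_{\mathcal{T}}$, and $S' \subseteq S_{\mathcal{T}}$), using \cite[Lemma 3.4]{tropp2004greed} to get the strict $\ell_1$ inequality in the first case and full rank of $\bK_{S_{\mathcal{T}}}$ in the last; it handles OMP by re-deriving the greedy selection ratio bound $\|\bK_{S_{\mathcal{T}}^c}^H \br_i\|_{\infty} / \|\bK_{S_{\mathcal{T}}}^H \br_i\|_{\infty} < 1$, which is valid only because the residual stays in the column space of $\bK_{S_{\mathcal{T}}}$, and then again invoking full rank for uniqueness of the recovered coefficients.

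Two further remarks. First, you invert where the work lies: the step you call the ``main obstacle'' (that $\by \in \spn(\bK_{S_{\mathcal{T}}})$, equivalently $\tilde{S} \subseteq S_{\mathcal{T}}$) is a one-line citation in the paper to \cite[Lemma 1]{tenneti2015nested} and \cite[Theorem 9]{tenneti2016unified}. Second, your claim that $\bK_{S_{\mathcal{T}}}$ has full column rank because of the linear independence of the periodically extended Ramanujan submatrices is not justified for arbitrary data length $L$; the paper instead treats invertibility of $\bK_{S_{\mathcal{T}}}^H\bK_{S_{\mathcal{T}}}$ as implicit in the hypothesis \eqref{eq:ERC_mixture_general} itself. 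Your intermediate claims are all true, so the plan is repairable, but as written it reduces the lemma to a citation that does not actually cover the case $\tilde{S} \subsetneq S_{\mathcal{T}}$, which is the case the lemma exists to handle.
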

\begin{proof}
Let $\by$ be a periodic mixture with hidden periods $\mathcal{T} = \{p_1,p_2,\ldots,p_m\}$ from the set $\mathbb{P}$.

Let $\mathbb{K}$ be an extended version of $\bK$ with infinite length, and $\mathbb{K}_{S_\mathcal{T}}$ spans the periodic subspaces $\mathcal{V}_{q|p_i}$ for all $p_i \in \mathcal{T}$. This implies that if $\by$ was also infinitely long, then $\tilde{S}\subseteq S_{\mathcal{T}}$ would be the support set of the sparsest vector $\bx$. Here, $S_{\mathcal{T}}$ is defined in Definition \ref{def:union_support_set}. As $\bK$ is an overcomplete dictionary of size $L \times N$, the ERC in Lemma \ref{lem:original_ERC} states that the sparse vector $\bx$ can be exactly recovered using BP and OMP if
\begin{equation}
\label{eq:ERC_tropp}
\begin{aligned}
M = \max_{j\in \tilde{S}^c} |\left(\bK_{\tilde{S}}^H\bK_{\tilde{S}}\right)^{-1}\bK_{\tilde{S}}^H\bk_j|1 <1 \:,
\end{aligned}
\end{equation}
where $\bK_{\tilde{S}}$ is the submatrix of $\bK$ that consists of the columns corresponding to $\tilde{S}$, and $\bk_j$ is the $j$-th column of $\bK$. This condition allows for the estimation of the hidden periods using the LCM property.

For the case where $S_\mathcal{T} = \tilde{S}$, i.e., if $S_\mathcal{T}$ is the exact support set of the mixture, \eqref{eq:ERC_mixture_general} implies \eqref{eq:ERC_tropp}, and the result of the lemma holds. 
Therefore, we consider $\tilde{S}\subset S_{\mathcal{T}}$.

We begin with BP. Assume for the sake of contradiction that $S'$ is the support of an alternative solution to the minimization \eqref{eq:l_1_norm}. Since $S'\ne \tilde{S}$, it must contain at least one atom that does not belong to $\tilde{S}$. We analyze three possible scenarios separately: \\
\noindent 1) $S'\cap S_{\mathcal{T}}\neq \emptyset$ and $S' \cap S_{\mathcal{T}}^c \neq \emptyset$: For any $j \in S'\cap S_{\mathcal{T}}$, we have $\|\bK_{S_{\mathcal{T}}}^{\dagger}\bk_j\|_1 = 1$. On the other hand, if \eqref{eq:ERC_mixture_general} holds, then $\|\bK_{S_{\mathcal{T}}}^{\dagger}\bK_{S'\setminus S_{\mathcal{T}}}\|_{1,1}<1$. We also have that 
\begin{equation}
\label{eq:proof_2}
    \begin{aligned}
\|\bx_{S_{\mathcal{T}}}\|_1 =& \|\bK_{S_{\mathcal{T}}}^{\dagger}\bK_{\tilde{S}}\bx_{\tilde{S}}\|_1
 =\|\bx_{\tilde{S}}\|_1
 =\|\bK_{S_{\mathcal{T}}}^{\dagger}\bK_{S^{\prime}}\bx_{S^{\prime}}\|_1\\
 < & \|\bK_{S_{\mathcal{T}}}^{\dagger}\bK_{S^{\prime}}\|_{1,1}\|\bx_{S^{\prime}}\|_1\:,
    \end{aligned}
\end{equation}
where the second equality in \eqref{eq:proof_2} follows from the fact that $\tilde{S}\subset S_{\mathcal{T}}$, and the last strict inequality follows from \cite[Lemma 3.4]{tropp2004greed} since we have just shown that the $\ell_1$-norms of the columns of
$\bK_{S_{\mathcal{T}}}^{\dagger}\bK_{S^{\prime}}$ are not all identical. But, if \eqref{eq:ERC_mixture_general} holds as in the statement of the lemma, then we get from \eqref{eq:proof_2} that $\|\bx_{\tilde{S}}\|_1<\|\bx_{S^{\prime}}\|_1$, i.e., $S^\prime$ is not a minimizer, yielding a contradiction. \\
\noindent 2) $S'\cap S_{\mathcal{T}} = \emptyset$: 
In this case,
if \eqref{eq:ERC_mixture_general} holds, then $\|\bK_{S_{\mathcal{T}}}^{\dagger}\bK_{S^{\prime}}\|_{1,1}<1$. Following similar simplifications as in \eqref{eq:proof_2} (without the last strict inequality), we also get that $\|\bx_{\tilde{S}}\|_1<\|\bx_{S^{\prime}}\|_1$, yielding a contradiction. \\
3) $S'\subseteq S_{\mathcal{T}}$:  

In this case, we can write
\begin{equation}
\label{eq:proof_3}
    \begin{aligned}
    \by=\bK_{S_\mathcal{T}}\bar{\bx}_{{\tilde{S}}}  = \bK_{S_{\mathcal{T}}} \bar{\bx}_{S'} \:,
    \end{aligned}
\end{equation}
where $\bar{\bx}_{{\tilde{S}}}$ and $\bar{\bx}_{S'}$ represent the vectors $\bx_{{\tilde{S}}}$ and $\bx_{S'}$ restricted to $S_{\mathcal{T}}$. Therefore,
\begin{equation}
\label{eq:proof_4}
    \begin{aligned}
    \bK_{S_\mathcal{T}}\left(\bar{\bx}_{{\tilde{S}}} - \bar{\bx}_{S'}\right) = 0 \:,
    \end{aligned}
\end{equation}
implying that $\left(\bar{\bx}_{{\tilde{S}}} - \bar{\bx}_{S'}\right)$ must be in the null space of $\bK_{S_{\mathcal{T}}}$. However, per \eqref{eq:ERC_tropp}, $\bK_{S_{\mathcal{T}}}^H\bK_{S_{\mathcal{T}}}$ is invertible, thus  $\bK_{S_\mathcal{T}}$ is full rank, yielding a contradiction.

We conclude that \eqref{eq:l_1_norm} must yield the correct $\tilde{S}$ if condition \eqref{eq:ERC_mixture_general} in the statement of the lemma holds.
Next, we consider OMP. Let $\br_i$ denote the residual at the $i$-th iteration.
Since $\|\bA\mathbf{\beta}\|_{\infty}\leq \|\bA^H\|_{1,1}\|\mathbf{\beta}\|_{\infty}$, we have that 
\begin{equation}
\begin{aligned}
\label{eq:omp_proof_2}
       \|\bK_{S_{\mathcal{T}}}^{\dagger}\bK_{S_{\mathcal{T}}^c}\|_{1,1}
       \geq & \frac{\|\bK_{S_{\mathcal{T}}^c}^H(\bK_{S_{\mathcal{T}}}^{\dagger})^H \bK_{S_{\mathcal{T}}}^H \br_i\|_{\infty}}{\|\bK_{S_{\mathcal{T}}}^H \br_i\|_{\infty}}\\
      = &\frac{\|\bK_{S_{\mathcal{T}}^c}^H \br_i\|_{\infty}}{\|\bK_{S_{\mathcal{T}}}^H \br_i\|_{\infty}}\:.
       \end{aligned}
\end{equation}
The equality in \eqref{eq:omp_proof_2} is because $(\bK_{S_{\mathcal{T}}}^{\dagger})^H \bK_{S_{\mathcal{T}}}^H$ is the projection matrix onto the column space of $\bK_{S_{\mathcal{T}}}$. On the other hand, since $\tilde{S} \subset S_{\mathcal{T}}$, we can write $\by = \bK_{\tilde{S}}\bx_{\tilde{S}} = \bK_{S_{\mathcal{T}}}\bar{\bx}_{\tilde{S}}$, where $\bar{\bx}_{\tilde{S}}$ is the vector $\bx_{\tilde{S}}$ restricted to set $S_\mathcal{T}$, thus, $\by$ (and in turn $\br_i$) are in the column space of $\bK_{S_{\mathcal{T}}}$. From \eqref{eq:omp_proof_2}, it follows that if \eqref{eq:ERC_mixture_general} holds, then \begin{equation}
\label{eq:omp_proof_3}
 \frac{\|\bK_{S_{\mathcal{T}}^c}^H \br_i\|_{\infty}}{\|\bK_{S_{\mathcal{T}}}^H \br_i\|_{\infty}}< 1\:.
\end{equation}
\textcolor{black}{Hence, from \eqref{eq:omp_proof_3}, OMP will select one atom from the set $S_{\mathcal{T}}$ at iteration $i+1$. We consider two cases. First,  assume OMP selects $k = |\tilde{S}|$ atoms in the first $k$ iterations.
  Since the atoms in $\bK_{S_{\mathcal{T}}}$ are linearly independent, OMP recovers the optimal solution and stops. Second, let us assume OMP selects an incorrect atom from the set $S_{\mathcal{T}}$. Hence, there must exist an alternative solution $S'\subseteq S_{\mathcal{T}}$ such that $\bK_{S_{\mathcal{T}}}\bar{\bx}_{\tilde{S}} = \bK_{S_{\mathcal{T}}}\bar{\bx}_{S'}$, implying that  $\left(\bar{\bx}_{\tilde{S}}-\bar{\bx}_{S'}\right)$ is in the null space of $\bK_{S_{\mathcal{T}}}$. However, since $\bK_{S_{\mathcal{T}}}^H\bK_{S_{\mathcal{T}}}$ is invertible per \eqref{eq:ERC_mixture_general}, the matrix $\bK_{S_{\mathcal{T}}}$ is full rank, yielding a contradiction. 
  We conclude that if \eqref{eq:ERC_mixture_general} holds, then the program in \eqref{eq:l_1_norm} must recover the exact vector $\bx$ with support $\tilde{S}$.}
\end{proof}
\section{Proof of Theorem \ref{thr:guarantee_upperbound_mixture_strong}} 
\label{App:proof_strong_mixture}
The proof of Theorem \ref{thr:guarantee_upperbound_mixture_strong} and Theorem \ref{thr:guarantee_mixture_sparse_based} follow the same reasoning, so we only provide the proof for Theorem \ref{thr:guarantee_upperbound_mixture_strong}.
Let $\mathcal{T}\in \mathbb{Q}_k\left(m\right)$. Bounding the LHS of \eqref{eq:ERC_mixture_general},
 \begin{equation}
 \label{eq:proof_technique}
    \begin{aligned}
       \|\bK^{\dagger}_{S_{\mathcal{T}}}\bK_{S^c_{\mathcal{T}}}\|_{1,1} \leq
       \|\left(\bK_{S_{\mathcal{T}}}^H\bK_{S_{\mathcal{T}}}\right)^{-1}\|_{1,1}\|\bK_{S_{\mathcal{T}}}^H\bK_{S^c_{\mathcal{T}}}\|_{1,1}\:.
    \end{aligned}
\end{equation}
 We find an upper bound to the RHS of \eqref{eq:proof_technique}. To this end, we use the new notions of NPI and NPA introduced in Definition \ref{def:npi_coh}. For the second term on the RHS of \eqref{eq:proof_technique}, we can readily write
 \begin{equation}
 \|\bK_{S_{\mathcal{T}}}^H\bK_{S^c_{\mathcal{T}}}\|_{1,1}\leq \zeta_{k,m} \:.  
 \end{equation}
For the first term on the RHS of \eqref{eq:proof_technique}, and following the same reasoning as in \cite{tropp2004greed}, we write
\begin{equation}
\label{eq:proof_B_I}
       \bK_{S_{\mathcal{T}}}^H\bK_{S_{\mathcal{T}}} = \bB + \bI_{|S_{\mathcal{T}}|}\:,
\end{equation}
for some matrix $\bB$, where $\bB$ is all zero along the diagonal as the dictionary $\bK$ is normalized. From \eqref{eq:npa_coherence}, we conclude that $\|\bB\|_{1,1} \leq \nu_{k,m}$.
From the Neumman series, if $\|\bB\|_{1,1}<1$, $\sum_{k=0}^\infty \left(-\bB\right)^k$ converges to $\left(\bB + \bI_{|S_{\mathcal{T}}|}\right)^{-1}$ \cite{tropp2004greed},\cite{kreyszig1978introductory}. Hence,
%
\begin{equation}
    \begin{aligned}
        \|\left(\bK_{S_{\mathcal{T}}}^H\bK_{S_{\mathcal{T}}}\right)^{-1}\|_{1,1} &= \|\sum_{k=0}^\infty \left(-\bB\right)^k\|_{1,1}
       \leq \sum_{k=0}^\infty \|\bB\|^{k}_{1,1}\\
       &= \frac{1}{1-\|\bB\|_{1,1}} \leq \frac{1}{1-\nu_{k,m}}.
    \end{aligned}
\end{equation}
Therefore,
\begin{equation}
\label{eq:proof_final_npd_l1_upperbound_inequality-strong}
 \|\bK^{\dagger}_{S_{\mathcal{T}}}\bK_{S^c_{\mathcal{T}}}\|_{1,1} \leq \frac{\zeta_{k,m}}{1-\nu_{k,m}} \: .
\end{equation}
Hence, from \eqref{eq:ERC_mixture_general}, we conclude that, if \eqref{eq:guarantee_NPD_strong} holds then BP and OMP can exactly recover the sparse vector.

\section{Proof of Theorem \ref{thr:single_2_mixture}}
\label{App:proof_single_2_mixture}
We also use the inequality in \eqref{eq:proof_technique}. Recalling that $S_p$ is the support set corresponding to a single period $p$, 
for the second term on the RHS of \eqref{eq:proof_technique}, we have
\begin{equation}
     \begin{aligned}
\|\bK_{S_{\mathcal{T}}}^H\bK_{S^c_{\mathcal{T}}}\|_{1,1} = & \|\bK_{\left(\bigcup_{p\in \mathcal{T}}S_p\right)}^H\bK_{\left(\mathbb{N}\setminus \bigcup_{p\in \mathcal{T}}S_p\right)}\|_{1,1} \leq \sum_{p \in \mathcal{T}} \zeta_p,
    \end{aligned}
\end{equation}
where the inequality follows from the fact that the elements in $\mathcal{T}$ may share some divisors and that $\|\bK_{S_p}^H\bK_{S^c_p}\|_{1,1} \leq \zeta_p$ from the definition of $\zeta_p$ in  \eqref{eq:restricted_inter_coh_special}.
We rewrite the first term on the RHS of \eqref{eq:proof_technique},
\begin{equation}
       \bK_{\left(\bigcup_{p\in \mathcal{T}}S_p\right)}^H\bK_{\left(\bigcup_{p\in \mathcal{T}}S_p\right)} = \bB + \bI_{|S_{\mathcal{T}}|}.
\end{equation}
Without loss of generality, assume index $i^* \in S_{p_1}$ corresponds to $\|\bB\|_{1,1}$. In other words, $i^* = \argmax_{i\in S_{\mathcal{T}}} \sum_{j\neq i, j\in S_{\mathcal{T}}}  \left|\langle\bk_i,\bk_j\rangle \right|$. The contribution of atoms in $S_{p_1}$ to the previous sum of inner products is
\begin{equation}
\label{eq:proof_s2m_1}
 \sum_{\substack{j\in S_{p_1}\\j\neq {i^*}}}\left|\langle\bk_{i^*},\bk_j\rangle\right| \leq \nu_{p_1} \:.  
\end{equation}
The inner products between $\bk_i$ and the atoms in $S_{p_1}^c$ are in the summation in \eqref{eq:restricted_inter_coh_special} for $p_j \in \mathcal{T}\setminus p_{1}$. Hence, the contribution of these atoms satisfies
\begin{equation}
\label{eq:proof_s2m_2}
 \sum_{\substack{j \in S_{p_j}\\j\neq {i^*}} }\left|\langle\bk_{i^*},\bk_j\rangle\right| \leq \sum_{p_j\in \mathcal{T}\setminus p_1} \zeta_{p_j}\:.
\end{equation}
From \eqref{eq:proof_s2m_1} and \eqref{eq:proof_s2m_2} we get that $\|\bB\|_{1,1}  \leq \max_{j\in\mathcal{T}} \left(\nu_{p_j} + \sum_{p_i \in  \mathcal{T}\setminus p_j} \zeta_{p_i}\right)$. Following the same reasoning as in the proof of Theorem \ref{thr:guarantee_upperbound_mixture_strong}, 
if $\|\bB\|_{1,1}<1$, then 
\begin{equation}
    \begin{aligned}
       \|\left(\bK_{S_{\mathcal{T}}}^H \bK_{S_{\mathcal{T}}}\right)&^{-1}\|_{1,1}  \leq
        \frac{1}{1-\|\bB\|_{1,1}}\\ &\leq  \frac{1}{1-\max_{j\in\mathcal{T}} \left(\nu_{p_j} + \sum_{p_i \in  \mathcal{T}\setminus p_j} \zeta_{p_i}\right)}.
    \end{aligned}
\end{equation}
Hence, we conclude that
\begin{equation}
\label{eq:proof_single_2_mixture}
    \begin{aligned}
        \|\bK^{\dagger}_{S_{\mathcal{T}}}\bK_{S^c_{\mathcal{T}}}\|_{1,1}   \leq \frac{\sum_{p_j \in \mathcal{T}}\zeta_{p_j}}{1-\max_{j\in\mathcal{T}} \left(\nu_{p_j} + \sum_{p_i \in  \mathcal{T}\setminus p_j} \zeta_{p_i}\right)}.
    \end{aligned}
\end{equation}
From \eqref{eq:ERC_mixture_general}, we conclude that if the RHS of \eqref{eq:proof_single_2_mixture} is smaller than  $1$, then BP and OMP can recover the exact sparse representation of the periodic mixture whose hidden periods are in $\mathcal{T}$, or equivalently,
\begin{equation}
    \begin{aligned}
  &\sum_{p_j \in \mathcal{T}}\zeta_{p_j} +\max_{p_j\in\mathcal{T}} \left(\nu_{p_j} + \sum_{p_i \in  \mathcal{T}\setminus p_j} \zeta_{p_i}\right) \leq \\
        &\hspace{2cm} 2\sum_{p_j \in \mathcal{T}}\zeta_{p_j} + \max_{p_j\in\mathcal{T}}\nu_{p_j} - \min_{p_j\in\mathcal{T}} \zeta_{p_j} < 1.
    \end{aligned}
\end{equation}

\section{Proof of Theorem \ref{thr:noise_bounded_general_zeta_nu}} 
\label{App:proof_noise_bounded_general_zeta_nu}
First, we assert the following lemma.
\begin{lemma}
\label{lemma:lambda_min}
Let $\mathcal{T} \in \mathbb{Q}_k\left(m\right)$, $\lambda_{\min}$ is the minimum eigenvalue of $\bK_{S_{\mathcal{T}}}^T\bK_{S_{\mathcal{T}}}$ and $\nu_{k,m}$ as in \eqref{eq:npa_coherence}. 
If $\nu_{k,m}<1$, then $\lambda_{\min} \geq 1-\nu_{k,m}$.
\end{lemma}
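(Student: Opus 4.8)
The plan is to recognize $G := \bK_{S_{\mathcal{T}}}^T \bK_{S_{\mathcal{T}}}$ as the Gram matrix of the atoms indexed by $S_{\mathcal{T}}$ and to control its spectrum via a Gershgorin-disc argument. Since the columns of $\bK$ are unit-norm, every diagonal entry of $G$ equals $1$, and each off-diagonal entry is $\langle \bk_i,\bk_j\rangle$ for $i,j \in S_{\mathcal{T}}$ with $i \neq j$. Because this section restricts attention to real NPDs, $G$ is real symmetric, hence positive semidefinite with real eigenvalues, which is exactly what makes a one-sided lower bound on $\lambda_{\min}$ accessible.

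First I would invoke the Gershgorin circle theorem: every eigenvalue $\lambda$ of $G$ lies in a disc centered at some diagonal entry $G_{ii}=1$ with radius $R_i = \sum_{j \in S_{\mathcal{T}},\, j\neq i} |\langle \bk_i,\bk_j\rangle|$. The key observation is then that this row radius is precisely the inner summand appearing in the definition of the NPA in \eqref{eq:npa_coherence}. Since $\mathcal{T}\in \mathbb{Q}_k\left(m\right)$ is admissible in the outer maximization and $i \in S_{\mathcal{T}}$ is admissible in the inner maximization, we get $R_i \leq \nu_{k,m}$ for every $i$, and therefore every eigenvalue satisfies $|\lambda - 1| \leq \nu_{k,m}$.

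Finally, using that $G$ is symmetric so its eigenvalues are real, the two-sided bound $|\lambda - 1|\leq \nu_{k,m}$ collapses to $\lambda \geq 1-\nu_{k,m}$; applying this to the smallest eigenvalue yields $\lambda_{\min} \geq 1-\nu_{k,m}$, and the hypothesis $\nu_{k,m}<1$ ensures this bound is strictly positive (consistent with $G$ being nonsingular). An equivalent route, which dovetails with the proof of Theorem \ref{thr:guarantee_upperbound_mixture_strong}, is to write $G = \bI_{|S_{\mathcal{T}}|} + \bB$ with $\bB$ having zero diagonal and $\|\bB\|_{1,1} \leq \nu_{k,m}$; since the spectral radius is dominated by any induced matrix norm and $\bB$ is symmetric, its eigenvalues lie in $[-\nu_{k,m},\nu_{k,m}]$, so those of $G$ lie in $[1-\nu_{k,m}, 1+\nu_{k,m}]$.

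The main obstacle here is essentially bookkeeping rather than any genuine estimate: one must confirm that the Gershgorin row radius coincides with the inner sum of the NPA in \eqref{eq:npa_coherence}, and that the pair $(\mathcal{T}, i)$ is simultaneously feasible in both maximizations defining $\nu_{k,m}$, so that the uniform per-row bound $R_i \leq \nu_{k,m}$ is legitimate. Once this identification is made, the only remaining subtlety is the appeal to symmetry, which converts the two-sided disc bound into the desired one-sided lower bound on $\lambda_{\min}$.
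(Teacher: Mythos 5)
Your proof is correct and is essentially the paper's own argument in packaged form: the paper (following Cai and Wang) establishes the bound by showing that $\bK_{S_{\mathcal{T}}}^T\bK_{S_{\mathcal{T}}}-\lambda\bI$ is nonsingular for every $\lambda < 1-\nu_{k,m}$, examining the largest-magnitude coordinate of a putative kernel vector, which is precisely the standard proof of the Gershgorin/diagonal-dominance localization you invoke as a black box. Your identification of the Gershgorin row radius with the inner sum in \eqref{eq:npa_coherence} is the same bookkeeping the paper performs, so the two proofs coincide in substance.
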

\begin{proof}
To prove this result, we follow the reasoning in \cite[Lemma 2]{cai2011orthogonal}. To find a lower bound on the minimum eigenvalue, we can show that, when $\nu_{k,m} < 1$, the matrix $\bK_{S_{\mathcal{T}}}^T\bK_{S_{\mathcal{T}}}-\lambda\bI$ is nonsingular for any $\lambda<1- \nu_{k,m}$. Equivalently, we show that $\left(\bK_{S_{\mathcal{T}}}^T\bK_{S_{\mathcal{T}}}-\lambda\bI\right)\bc \neq \mathbf{0}$, where $\bc=\left[c_1,c_2,\ldots,c_s\right]^T\ne \mathbf{0}$, 
and without loss of generality, we assume $|c_1|\geq|c_2| \geq \ldots |c_s|$. Let $\bK_{S_{\mathcal{T}}} = \left[\bk_{S_{\mathcal{T}_1}},\bk_{S_{\mathcal{T}_2}},\ldots,\bk_{S_{\mathcal{T}_s}}\right]$ and $s\leq k$. Then, for the first coordinate of $\left(\bK_{S_{\mathcal{T}}}^T\bK_{S_{\mathcal{T}}}-\lambda \bI\right)\bc$, we have that
\begin{equation}
\label{eq:appendix_lemma_proof}
    \begin{aligned}
         &\left|\Big[\left(\bK_{S_{\mathcal{T}}}^T\bK_{S_{\mathcal{T}}}-\lambda\bI\right)\bc\Big]_1\right| \\
        =&\left|\left(1-\lambda\right)c_1 + \bk_{S_{\mathcal{T}_1}}^T\bk_{S_{\mathcal{T}_2}}c_2 +\ldots+ \bk_{S_{\mathcal{T}_1}}^T\bk_{S_{\mathcal{T}_s}}c_s\right|\\
        \geq & \left|\left(1-\lambda\right)c_1\right| -\left| \bk_{S_{\mathcal{T}_1}}^T\bk_{S_{\mathcal{T}_2}}c_2\right| - \ldots -  \left|\bk_{S_{\mathcal{T}_1}}^T\bk_{S_{\mathcal{T}_s}}c_s\right|\\
         \geq & \left(1-\lambda\right)\left|c_1\right| -\left(\left| \bk_{S_{\mathcal{T}_1}}^T\bk_{S_{\mathcal{T}_2}}\right|+ \ldots +  \left|\bk_{S_{\mathcal{T}_1}}^T\bk_{S_{\mathcal{T}_s}}\right|\right) \left|c_2\right|\\
        \geq &  \left(1-\lambda\right)\left|c_1\right| -\nu_{k,m} \left|c_2\right|.
    \end{aligned}
\end{equation}
Therefore, for any $\lambda<1 - \nu_{k,m}$, 
\begin{equation}
        \left|\Big[\left(\bK_{S_{\mathcal{T}}}^T\bK_{S_{\mathcal{T}}}-\lambda\bI\right)\bc\Big]_1\right| > \nu_{k,m}\left|c_1\right| - \nu_{k,m}\left|c_2\right|\geq 0\\.
\end{equation}
Hence, we conclude that $\lambda_{\min} \geq 1-\nu_{k,m}$.
\end{proof}
We proceed with the proof of the theorem. Following \cite[Proposition 1]{cai2011orthogonal}, for a given set $\tilde{S} \in S_{\mathcal{T}}$, if \eqref{eq:ERC_tropp} holds,
then OMP can recover the exact support set $\tilde{S}$, if all the nonzero entries of $\bx$ obey \eqref{eq:cai_noise_bounded}.

\begin{equation}
    \label{eq:cai_noise_bounded}
    |x_i|\geq \frac{2\epsilon}{\left(1-M\right)\lambda_{\min}}.
\end{equation}
In \eqref{eq:cai_noise_bounded}, $\lambda_{\min}$ is the minimum eigenvalue of the Gram matrix $\bK_{\tilde{S}}^T\bK_{\tilde{S}}$. From Definition \ref{def:npi_coh}, $M \leq \zeta_{k,m} + \nu_{k,m}$. Hence, invoking Lemma \ref{lemma:lambda_min}, we can express the condition in \eqref{eq:cai_noise_bounded} as
\begin{equation}
    |x_i|\geq \frac{2\epsilon}{\left(1-\zeta_{k,m}-\nu_{k,m}\right)\left(1-\nu_{k,m}\right)}.
\end{equation}
The proof for Theorem \ref{thr:noise_bounded_general_s2m} follows the same reasoning as Theorem \ref{thr:noise_bounded_general_zeta_nu}, thus is omitted for brevity. 
\section{Proof of Theorem 
\ref{thr:noise_gaussian_s2m}}
\label{App:proof_noise_gaussian}

To prove this result we need the following lemma from \cite{cai2011orthogonal}.
\begin{lemma}[Lemma 3, From \cite{cai2011orthogonal} and \cite{cai2009recovery}]
\label{lemma:noise}

Let 
\begin{equation}
\label{eq:noise_set_B2}
B_2 = \{ \bw : \|\bw\|_2 \leq \sigma \sqrt{L + 2 \sqrt{L \log L}}\}\:.
\end{equation}The Gaussian error $\bw \sim \mathcal{N}\left(\mathbf{0},\sigma^2\bI_n\right)$ satisfies
\begin{equation}
    P\left(\bw \in B_2\right) \geq 1 - \frac{1}{L} \:.
\end{equation}
\end{lemma}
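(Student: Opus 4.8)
The plan is to reduce the claim to a one-sided tail bound for a chi-squared law. Because $\bw\sim\mathcal{N}(\mathbf{0},\sigma^2\bI_L)$ has i.i.d.\ $\mathcal{N}(0,\sigma^2)$ entries, the normalized energy $X:=\|\bw\|_2^2/\sigma^2=\sum_{i=1}^L(w_i/\sigma)^2$ follows a $\chi^2_L$ distribution. The event $\{\bw\in B_2\}$ coincides with $\{X\le L+2\sqrt{L\log L}\}$, so it is equivalent to establish the upper-tail estimate $P(X\ge L+2\sqrt{L\log L})\le 1/L$.

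First I would bound this tail by the Chernoff / moment-generating-function method. Using $\mathbb{E}[e^{\lambda X}]=(1-2\lambda)^{-L/2}$ for $0<\lambda<1/2$, Markov's inequality gives $P(X\ge a)\le e^{-\lambda a}(1-2\lambda)^{-L/2}$ for every threshold $a>L$. Minimizing over $\lambda$ (the optimizer is $\lambda^\star=\tfrac{1}{2}(1-L/a)$) yields the clean rate-function bound
\begin{equation}
    P(X\ge a)\le \exp\!\left(-\tfrac{1}{2}\bigl[(a-L)-L\log(a/L)\bigr]\right)\:.
\end{equation}

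The decisive step is to insert the specific threshold $a=L+2\sqrt{L\log L}$ and check that the exponent reaches $\log L$, whereupon the bound becomes $e^{-\log L}=1/L$. Writing $a/L=1+\delta$ with $\delta=2\sqrt{(\log L)/L}$ and expanding $\delta-\log(1+\delta)=\tfrac{\delta^2}{2}-\tfrac{\delta^3}{3}+\cdots$, the exponent equals $\tfrac{L}{2}\bigl[\delta-\log(1+\delta)\bigr]$, whose leading term $\tfrac{L\delta^2}{4}$ is calibrated to equal exactly $\log L$; this is precisely why the constant $2\sqrt{L\log L}$ appears. The main obstacle is controlling the lower-order correction $-\tfrac{L\delta^3}{6}$ so as to retain the clean $1/L$ with the stated (non-padded) constant: the crude Chernoff expansion matches $\log L$ only to leading order, and pinning down the exact threshold for the relevant range of $L$ requires the sharp finite-dimensional $\chi^2$ concentration inequality of Laurent--Massart type recorded in \cite{cai2011orthogonal,cai2009recovery}. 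Invoking that estimate with deviation parameter $\log L$ closes the bound, giving $P(X\ge a)\le 1/L$ and hence $P(\bw\in B_2)\ge 1-1/L$.
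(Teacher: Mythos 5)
Your reduction to the $\chi^2_L$ tail, the Chernoff computation, and especially your diagnosis that the plain Chernoff bound delivers only $P\left(X\ge L+2\sqrt{L\log L}\right)\le L^{-1}\exp\left(\tfrac{4}{3}(\log L)^{3/2}/\sqrt{L}-\cdots\right)$ — overshooting $1/L$ by a sub-polynomial factor — are all correct. Note for context that the paper itself offers no proof of this statement: it is imported verbatim as Lemma 3 of \cite{cai2011orthogonal}, so your attempt goes beyond what the paper does. Judged as a standalone proof, however, your closing step fails. The Laurent--Massart inequality reads $P\left(\chi^2_L\ge L+2\sqrt{Lx}+2x\right)\le e^{-x}$; with deviation parameter $x=\log L$ it certifies only the \emph{larger} radius $\sigma\sqrt{L+2\sqrt{L\log L}+2\log L}$. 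The event in the lemma is strictly bigger than the Laurent--Massart event, so that invocation does not close the bound. Worse, Laurent--Massart is itself a pure-exponential (Chernoff-type) estimate and inherits exactly the deficiency you identified: solving $2\sqrt{Lx}+2x=2\sqrt{L\log L}$ forces $x\approx \log L-2(\log L)^{3/2}/\sqrt{L}$, so it again yields only $L^{-1}e^{\Theta\left((\log L)^{3/2}/\sqrt{L}\right)}>1/L$. No tail bound whose prefactor is identically $1$ can produce the stated clean threshold.

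What actually rescues the constant is a $\chi^2$ tail bound carrying a \emph{polynomial prefactor}, of the type used in the lineage behind \cite{cai2009recovery}. For instance, from $(\log f)'(t)=\tfrac{L/2-1}{t}-\tfrac12$ (with $f$ the $\chi^2_L$ density) one gets, for $a>L-2$,
\begin{equation}
P\left(\chi^2_L\ge a\right)\;\le\;\frac{\sqrt{L}}{\sqrt{\pi}\,\left(a-L+2\right)}\,\exp\!\left(-\tfrac12\left[a-L-L\log\left(a/L\right)\right]\right)\:,
\end{equation}
whose exponent is exactly your Chernoff rate but whose prefactor at $a=L+2\sqrt{L\log L}$ is about $1/\left(2\sqrt{\pi\log L}\right)$. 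This factor dominates the $\exp\left(\tfrac43(\log L)^{3/2}/\sqrt{L}\right)$ deficit (for all $L\ge 2$; $L=1$ is trivial since $1-1/L=0$), giving $P\left(\chi^2_L\ge L+2\sqrt{L\log L}\right)\le 1/L$ as required. So your argument is reparable, but the missing ingredient is a prefactored tail bound of this kind, not Laurent--Massart with parameter $\log L$.
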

Given the definition of $B_2$, the result follows as a direct consequence of Theorem \ref{thr:noise_bounded_general_s2m} and Lemma \ref{lemma:noise}.
\end{document}